\pgfplotsset{compat=1.14}
\DeclareMathOperator{\degout}{\deg_{\text{out}}}
\DeclareMathOperator{\degin}{\deg_{\text{in}}}
\newcommand{\abs}[1]{\left|#1\right|}
\newcommand{\norm}[1]{\left\|#1\right\|}
\DeclareMathOperator{\Diag}{diag}
\DeclareMathOperator{\diag}{diag}
\DeclareMathOperator{\Real}{Re}
\newcommand{\de}{\text{d}}
\newcommand{\ddt}{\frac{\de}{\de t}}
\newcommand{\C}{\mathbb{C}}
\newcommand{\R}{\mathbb{R}}
\newcommand{\Z}{\mathbb{Z}}
\DeclareMathOperator{\sign}{sign}
\newcommand{\dd}{\,\text{d}}    %
\begin{document}

    \title{Nonlocal network dynamics via fractional graph Laplacians}

    \shorttitle{Nonlocal network dynamics} %
    \shortauthorlist{M. Benzi, D. Bertaccini, F. Durastante, I. Simunec} %

    \author{{%
            \sc Michele Benzi}$^*$\,,\\[2pt]
        Scuola Normale Superiore\\
        Piazza dei Cavalieri 7, 56126 Pisa, Italy\\
        {Corresponding
            author: {\email{michele.benzi@sns.it}}}\\[6pt]
        {\sc Daniele Bertaccini}\,, \\[2pt]
        Universit\`a di Roma Tor Vergata, Dipartimento di Matematica\\
        Via della Ricerca Scientifica 1, 00133 Roma, Italy
        \\Consiglio Nazionale delle Ricerche, Istituto per le Applicazioni del Calcolo ``M. Picone'', Roma, Italy\\
        {\email{bertaccini@mat.uniroma2.it}}\\[6pt]
        {\sc Fabio Durastante}\,, \\[2pt]
        Consiglio Nazionale delle Ricerche, Istituto per le Applicazioni del Calcolo ``M. Picone''\\
        Via Pietro Castellino 111, 80131 Napoli, Italy\\
        \email{f.durastante@na.iac.cnr.it}\\[6pt]
        {\sc and}\\[6pt]
        {%
            \sc Igor Simunec}\,,\\[2pt]
        Scuola Normale Superiore\\
        Piazza dei Cavalieri 7, 56126 Pisa, Italy\\
        {\email{igor.simunec@sns.it}}}

    \maketitle

    \begin{abstract}
        {We introduce nonlocal dynamics
            on directed networks through the construction of a
            fractional version of a nonsymmetric Laplacian for weighted directed
            graphs. Furthermore, we provide an analytic treatment of fractional dynamics
            for both directed and undirected graphs, showing the possibility of
            exploring the network employing random walks with jumps of
            arbitrary length. We also provide some examples of the applicability of
            the proposed dynamics, including consensus over multi-agent systems described
            by directed networks.}
        {network dynamics, nonlocal dynamics, superdiffusion, matrix functions, power law decay}
        \\
        2010 Math Subject Classification: 91D30, 60J20, 94C15
    \end{abstract}

    \section{Introduction}\label{sec:introduction}

    Systems made of highly interconnected units, where the connection
    stands for a kind of (possibly one-directional) interaction between the different
    nodes, are a ubiquitous modeling approach for several natural and man-made phenomena.
    Examples include social interactions in the real and digital world, gene regulatory networks,
    networks of chemical reactions, phone call networks, and many others. An efficient way for
    representing these complex interactions is through the use of graphs models. One of the main goals
    in this framework is to develop techniques and measures that are capable to characterize the topology
    of real networks, i.e., of graphs whose structure is irregular, complex and, possibly, evolving in time.
    A highly successful approach is to explore the network structure by means of random walks and
    other diffusive-type processes defined on the underlying graph.

    Here we investigate the behavior of certain nonlocal dynamical processes evolving on the network.
    In these models, a random
    walker on the network is not constrained to hop only from one node to adjacent nodes, but is allowed to
    perform long distance jumps, albeit with a lower probability. One can also phrase these processes as
    \textit{anomalous diffusion} phenomena, or \textit{superdiffusion}.

    Recently, two main approaches have been proposed to construct such nonlocal dynamics on graphs.
    The first one can be expressed in terms of multi hopper exploration strategies on the
    network~\cite{estradamultihopper,Estrada2017307,Estrada2018373}, leading to a probability
    distribution that permits the hopper to (occasionally) perform long distance jumps.
    Such  long range transitions, often referred to as \textit{L\'evy flights}, can also be described in the framework
    of the fractional calculus~\cite{randomwalkfractional}. Recent papers investigated some aspects
    of this phenomenon (in terms of anomalous diffusion) in the case of undirected graphs, making
    use of the (symmetric) fractional graph Laplacian and its normalized version; see~\cite{PhysRevE90032809}.

    The present paper has two main goals. One of them is to extend the notion of nonlocal dynamics to
    directed networks, investigating how the network structure
    affects the properties of a dynamical system evolving on it while accounting for the orientation of the connections.
    Similar to the approach in~\cite{PhysRevE90032809}, the
    method we propose can be formulated as the problem of evolving a system of ordinary differential equations
    in time using as coefficient matrix the fractional powers of a Laplacian of the underlying graph; an important
    difference, however, is that in the directed case the Laplacian matrix is nonsymmetric, hence the definition
    of fractional powers is more delicate than in the undirected (symmetric) case.
    This is done in Section~\ref{sec:fractional_laplacian_of_a_directed_graph}.
    Our second goal is related to the work presented in in~\cite{PhysRevE90032809} and consists in a rigorous
    analysis of the decay behavior  in the entries of the $\alpha$th power of the Laplacian matrix and its exponential.
    As we will see, in the undirected case (symmetric Laplacian) we can obtain very general results, applicable to
    virtually any network.  To complement the analysis, we show also that the fractional Laplacian of some simple infinite
    graphs induces a stable probability distribution with superdiffusive properties. Specifically,
    in Section~\ref{sec:notes_on_the_decay} we explore the decay properties of the transition probabilities
    of nonlocal random walks induced by the fractional Laplacian of undirected networks, and we offer some remarks on
    their possible extension to the directed case. Subsequently,
    in Section~\ref{sec:superdiffusive_processes} we analyze the superdiffusive behavior of the proposed dynamic
    on some simple infinite graphs (both directed and undirected), proving that it appears naturally as a stationary distribution
    for both the undirected and directed case by exploiting the techniques used in~\cite{Estrada2017307} for the $k$--path Laplacian;
    see Section~\ref{sec:random_walks_on_directed_graphs}.
In the directed case, the dynamics exhibit some similarities but
also interesting differences with respect to to the undirected case.
Finally, we consider two applications to real world directed
networks in Section~\ref{sec:applications}.

    \subsection{Preliminaries and notation on graphs}
    We recall here some basic notions on graphs that will be used in the
    following discussions. A \textit{directed graph}, or \textit{digraph}, is a pair $G = (V,E)$, where
    $V = \{v_1,\ldots,v_n\}$ is a set of nodes (or vertices), and $E~\subseteq~V \times V$ is a set of ordered pairs of nodes
    called edges. {We define on $V$ the binary relation $v_i \sim v_j$ if $(v_i,v_j) \in E$, or $(v_j,v_i) \in E$}. A \textit{weighted} directed graph $G~=~(V,E,W)$ is then
    obtained by considering a weight matrix $W$ with nonnegative entries
    $(W)_{i,j}=w_{i,j} \geq 0$ and such that  $w_{i,j} > 0$ if and only
    if $(v_i,v_j)$ is an edge of $G$. If all the nonzero weights have
    value $1$ we omit the weighted specification. For every node $v \in
    V$, the \textit{degree} $\deg(v)$ of $v$ is the number of edges leaving or
    entering~$v$ taking into account their weights,
    \begin{equation}\label{eq:definition_degree}
    d_i = \deg(v_i) = \sum_{j\,:\, { v_i \sim v_j} }w_{i,j}.
    \end{equation}
    A vertex is \textit{isolated} if its degree is zero.

    The degree matrix $D$ is then the diagonal matrix whose entries are given by the degrees of the nodes,~i.e.,
    \begin{equation}\label{eq:definition_degree_matrix}
    \begin{split}
    D = & \operatorname{diag}(\deg(v_1),\ldots,\deg(v_n)) =  \operatorname{diag}(d_1,\ldots,d_n).
    \end{split}
    \end{equation}
    In light of the fact that we want to consider dynamical processes on
    directed graphs, it is useful to separate the degrees also between
    the incoming and outgoing edges with respect to the node $v_i$,
    i.e., to consider the \textit{in--degrees} and \textit{out--degrees}
    \begin{equation*}
    d_i^{(\text{in})} = \degin(v_i) = \sum_{j\,:\, (v_j,v_i) \in E}w_{j,i}, \qquad d_i^{(\text{out})} = \degout(v_i) = \sum_{j\,:\, (v_i,v_j) \in E}w_{i,j},
    \end{equation*}
    together with the related diagonal matrices $D_{\text{in}} =  \operatorname{diag}(\degin(v_1),\ldots,\degin(v_n))  =  \operatorname{diag}(d^{(\text{in})}_1,\ldots,d^{(\text{in})}_n)$, and $D_{\text{out}} =  \operatorname{diag}(\degout(v_1),\ldots,\degout(v_n)) = \operatorname{diag}(d^{(\text{out})}_1,\ldots,d^{(\text{out})}_n)$. Moreover, we assume from now on that no vertex of the graph is isolated, and that all the graphs are loop-less, i.e.,
    that there is no edge going from a vertex to itself. Given a weighted directed graph $G = (V,E,W)$ with $V = \{v_1,\ldots,v_n\}$ {and $E=\{e_1,\ldots,e_m\}$}, the incidence matrix $B$ of $G$ is the $n\times
    m$ matrix whose entries $b_{i,j}$ are given~by
    \begin{equation}\label{def:weighted_digraph_incidence_matrix}
    b_{i,j} = \left\lbrace\begin{array}{ll}
    {+ \sqrt{w_{i,k}}}, & \text{ if } e_j = (v_i,v_k) \text{ for some }k,  \\
    {- \sqrt{w_{k,i}}}, & \text{ if } e_j = (v_k,v_i) \text{ for some }k,  \\
    0, & \text{ otherwise.}
    \end{array}{}\right.
    \end{equation}
    Observe that the choice of the sign in $B$ is purely conventional.

    If the ordering of the vertices in the edges in $E$ is not relevant,
    i.e., if each edge can be traversed both ways, we move from directed
    graphs to undirected graphs; that is, an undirected graph is a pair $G
    = (V,E)$, where $V=\{v_1,\ldots,v_n\}$ is a set of nodes or
    vertices, and $E \subseteq V\times V$ is a set of edges such that if $(v_i,v_j) \in E$, then
    $(v_j,v_i) \in E$ for all~$i,j$. A weighted undirected graph $G =
    (V,E,W)$ is then obtained by considering a (symmetric) weight matrix $W$ with
    nonnegative entries $(W)_{i,j}=w_{i,j} \geq 0$ and such that
    $w_{i,j} > 0$ if and only if $(v_i,v_j)$ is an edge of $G$. If all
    the nonzero weights have value $1$ we omit the weighted
    specification. For any two
    nodes $u,v \in V$ in a graph $G = (V,E)$, a \textit{walk} from $u$ to $v$ is
    an ordered sequence of nodes $(v_0,v_1,\ldots,v_k)$ such that $v_0 =
    u$, $v_k = v$, and $(v_i,v_{i+1}) \in E$ for all~$i=0,\ldots,k-1$.
    The integer $k$ is the length of the walk. The walk is closed if the
    initial and terminal nodes coincide, i.e., $u=v$. A \textit{cycle} in a graph
    is a nonempty closed walk in which the only repeated vertices are the first and last. An undirected graph $G$ is
    \textit{connected} if for any two distinct nodes $u,v \in V$, there is a walk
    between $u$ and $v$. A directed graph $G$ is \textit{strongly connected} if for
    any two distinct nodes $u,v \in V$, there is a directed walk from $u$ to $v$.
    For both a directed and an undirected graph $G$ we introduce the adjacency matrix $A$ as the $n \times n$ matrix with elements
    \begin{equation*}
    (A)_{i,j} = a_{i,j} = \left\lbrace\begin{array}{cc}
    1, & \text{ if }(v_i,v_j) \in E,  \\
    0, & \text{ otherwise}.
    \end{array}\right.
    \end{equation*}
    Observe that the adjacency matrix $A$ of an undirected graph $G$ is always
    symmetric. In particular, if $G = (V,E)$ is a graph,
    given two nodes $u,v \in V$, we say that $u$ is adjacent to $v$ and
    write $u \sim v$, if $(u,v) \in E$. The above binary relation is
    symmetric if $G$ is an undirected graph, while in general it is not for a
    directed graph. Note that for an unweighted graph, $W=A$.

    \subsubsection{Graph Laplacian}
Next, we recall the definition of the Laplacian matrix for an
undirected graph and then discuss an extension of it we will use in
the case of directed graphs.
    \begin{definition}[Graph Laplacian]\label{def:graph_laplacian}
        Let $G = (V,E)$ be a weighted undirected graph with weight matrix $W$,
        weighted degree matrix $D$ and weighted incidence matrix $B$. Then the graph Laplacian $L$ of $G$ is
        \[L = D - W = BB^T.\]
        The \textit{normalized random walk} version of the graph Laplacian is
        \[D^{-1}L = I - D^{-1}W = D^{-1}BB^T,\]
        where $I$ is the identity matrix. Observe that $D^{-1}W$ is a row--stochastic matrix, i.e. it is nonnegative with row sums equal to 1. The \textit{normalized symmetric} version is
        \[D^{-\frac{1}{2}} L D^{-\frac{1}{2}} = I - D^{-\frac{1}{2}} W D^{-\frac{1}{2}}.\]
        If $G$ is unweighted then $W=A$ in the above definitions. Here we assume that every vertex has nonzero degree.
    \end{definition}

    In the case of a directed graph the situation is more intricate
    since many nonequivalent definitions of the Laplacian
    exist. We can
    easily define, mimicking Definition~\ref{def:graph_laplacian}, the
    nonnormalized version with respect to the in-- and out--degrees, in
    both the weighted and unweighted case.
    \begin{definition}[Directed graph Laplacian]\label{def:digraph_laplacian}
        Let $G = (V,E,W)$ be a weighted directed graph, with degree matrices $D_{\text{out}}$ and
        $D_{\text{in}}$ The nonnormalized directed graph Laplacian
        $L_{\text{out}}$ and $L_{\text{in}}$ of $G$ are
        \[L_{\text{out}} = D_{\text{out}} - W, \qquad L_{\text{in}} = D_{\text{in}} - W.\]
    \end{definition}
    To define the normalized versions, we need to
    invert either the $D_{\text{in}}$ or the $D_{\text{out}}$ matrices,
    but the absence of isolated vertices is no longer sufficient to ensure
    this, since there could be a node with only outgoing or ingoing edges. A
    first way of overcoming this issue could be to impose that every
    vertex has at least one outgoing and one incoming edges, which is
    rather restrictive. Otherwise, we could restrict our attention to the
    set of nodes having an out--degree or in--degree different from
    zero, as in~\cite{MR2915277}. Another approach, that avoids reducing the size of the graph, is instead to mimic the recipe for the PageRank algorithm~\cite{page1999pagerank} and replace any diagonal zeros in $D_{\text{in}}$, respectively $D_{\text{out}}$, with ones, while replacing the corresponding (zero) column, respectively row, of $W$ with the vector with entries~$1/n$.

    The last approach we briefly mention is the one presented in~\cite{MR2135772}.
    In this case a symmetric Laplacian is constructed also for a directed
    graph. However, it is easy to see that this kind of approach may return the same
    Laplacian matrix for nonisomorphic graphs. This also happens if we define a symmetric digraph Laplacian by using the
    incidence matrix $B$ of
    Definition~\ref{def:weighted_digraph_incidence_matrix} and construct $L = BB^T$
    as in Definition~\ref{def:graph_laplacian}. In the rest of the paper
    we focus mainly on the nonsymmetric Laplacian $L_{\text{out}}$ and
    its normalized version.

    \section{Fractional Laplacians of a directed graph}\label{sec:fractional_laplacian_of_a_directed_graph}
    To justify the use of a fractional Laplacian for exploring the structure of the
    network, let us first consider a simple diffusion problem in the case in which $G$ is an undirected graph. Let $u :
    V \rightarrow \mathbb{R}$ describe a ``heat'' distribution on the
    nodes of the graph with heat diffusivity $\kappa$. We can express
    the variation of heat in the nodes as
    \begin{equation*}
    \begin{split}
    \displaystyle \ddt u(t) = & - \kappa\sum_{j\,:\, (v_j,v_i) \in E} (u_i - u_j)
    =  -\kappa \left(u_i \sum_{j\,:\, (v_j,v_i) \in E} 1 - \sum_{j\,:\, (v_j,v_i) \in E} u_j\right) \\
    = & - \kappa\left(u_i \deg(v_i) - \sum_{j\,:\, (v_j,v_i) \in E} u_j\right)
    =  - \kappa\sum_{j\,:\, (v_j,v_i) \in E}\left(\delta_{i,j} \deg(v_i) -  1\right)u_j \\
    = & - \kappa\sum_{j\,:\, (v_j,v_i) \in E} (L)_{i,j} u_j,
    \end{split}
    \end{equation*}
    which in matrix form reads
    \begin{equation}\label{eq:diffusion_formulation}
    \begin{split}
    \text{find }u\, :&\, [0,T] \longrightarrow \mathbb{R}^n\\ \text{ s.t. }& \left\lbrace\begin{array}{ll}
    \displaystyle \ddt u(t) = - \kappa L u(t), &t \in (0,T],  \\
    u(0) = u_0, &\text{prescribed,}
    \end{array}\right.
    \end{split}
    \end{equation}
    where now $L$ is the unweighted Laplacian from
    Definition~\ref{def:graph_laplacian}. Since $L$ is a symmetric positive semidefinite
    matrix, one can apply the process of ``fractionalization'' considered in~\cite{ilic2005numerical,ilic2006numerical} for the
    continuous Laplace operator. Consider the spectral decomposition of the Laplacian matrix,
    \begin{equation*}
    L = U \Lambda U^T, \qquad U^TU = I, \qquad \Lambda = \operatorname{diag}(\lambda_1, \dots, \lambda_n).
    \end{equation*}
    Following~\cite{PhysRevE90032809}, we define the \textit{fractional graph Laplacian} as
    \begin{equation}\label{eq:fractional_laplacian_of_a_graph}
    \begin{split}
    &L^\alpha = U \Lambda^\alpha U^T, \qquad U^TU = I, \\
    &\Lambda^\alpha = \operatorname{diag}(\lambda_1^\alpha,\ldots,\lambda_n^\alpha), \qquad
    \alpha \in (0,1].
    \end{split}
    \end{equation}
    Note that the fractional powers $\lambda_1^\alpha, \dots, \lambda_n^\alpha$ are well defined because the eigenvalues of the Laplacian matrix are nonnegative. This follows from the fact that the Laplacian is an $M$-matrix, see Definition~\ref{def:M-matrix}.

    The definition of the fractional graph Laplacian becomes significantly different when the case of the (nonnormalized) digraph Laplacian from
    Definition~\ref{def:digraph_laplacian} is considered. In general, this operator is non normal, and thus we cannot define the fractional
    power as
    in~\eqref{eq:fractional_laplacian_of_a_graph}. Therefore, we need to define the $\alpha$th
    power of a non normal matrix.

    Without loss of generality, we focus the analysis on the
    out--degree Laplacian $L_{\text{out}} = D_{\text{out}} - A$ since it
    remains essentially the same in the case of the in--degree
    Laplacian. We first recall a suitable definition for the matrix
    function $f(A)$ for a generic matrix $A$, that extends the one based on the
    diagonalization in~\eqref{eq:fractional_laplacian_of_a_graph}.
    This definition can be stated in terms of the Jordan canonical form of the matrix~\cite[Section~1.2.2]{MR2396439}.

        {We recall that a}ny matrix $A \in \mathbb{C}^{n \times n}$ can be expressed in Jordan canonical
        form as
        \begin{equation}\label{eq:def:jordan_canonical_form}
        Z^{-1} A Z = J = \operatorname{diag}(J_1,\ldots,J_p), \quad \text{ for }
        J_k = J_k(\lambda_k) = \begin{bmatrix}
        \lambda_k & 1 \\
        & \lambda_k & \ddots \\
        & & \ddots & 1 \\
        & & & \lambda_k
        \end{bmatrix} \in \C^{m_k \times m_k},
        \end{equation}
        where $Z$ is nonsingular and $m_1 + m_2 + \ldots + m_p = n$. If each block in which the eigenvalue $\lambda_k$ appears is of size 1 then $\lambda_k$ is said to be a \textit{semisimple} eigenvalue.

    Let us denote by $\lambda_1,\ldots,\lambda_s$ the distinct
    eigenvalues of $A$, and by $n_i$ the order of the largest Jordan block in
    which the $\lambda_i$ appears, i.e., the \textit{index} of the eigenvalue
    $\lambda_i$.  We have the following definition.
    \begin{definition}\label{def:defined_on_the_spectrum}
        The function $f$ is \textit{defined on the spectrum of} $A$ if the values
        \begin{equation*}
        f^{(j)}(\lambda_i), \qquad j=0,1,\ldots,n_i-1, \quad i = 1,\ldots s,
        \end{equation*}
        exist, where $f^{(j)}$ denotes the $j$th derivative of $f$, with $f^{(0)} = f$.
    \end{definition}
    We can define the matrix function $f(A)$ for a generic matrix $A$ by
    using the Jordan canonical form, provided that the function $f$ is defined on the spectrum of $A$.
    \begin{definition}\label{def:matrix_function_as_jordan_form}
        Lef $f$ be defined on the spectrum of $A \in \mathbb{C}^{n \times
            n}$, which is represented in Jordan canonical form as in~\eqref{eq:def:jordan_canonical_form}. Then,
        \begin{equation*}
        f(A) = Z f(J) Z^{-1} = Z \operatorname{diag}(f(J_1),\ldots,f(J_p))Z^{-1},
        \end{equation*}
        where
        \begin{equation*}
        f(J_k) = \begin{bmatrix}
        f(\lambda_k) & f'(\lambda_k) & \ldots & \frac{f^{(m_k-1)}(\lambda_k)}{(m_k-1)!} \\
        & f(\lambda_k) & \ddots & \vdots \\
        & & \ddots & f'(\lambda_k) \\
        & & & f(\lambda_k)
        \end{bmatrix}.
        \end{equation*}
        Moreover, let $f$ be a multivalued function and suppose some eigenvalues occur in more than one Jordan block. If the same choice of branch of $f$ is made in each block, then we say that $f(A)$ is a \textit{primary matrix function}. In this paper we only consider primary matrix functions.
    \end{definition}
    Note that in the real symmetric case, given
    in~\eqref{eq:fractional_laplacian_of_a_graph}, the Jordan canonical
    form reduces to the diagonalization of the matrix.

    In order to ensure that $f(L_{\text{out}}) = L_{\text{out}}^\alpha$, $\alpha
    \in (0,1]$ is well defined, we need to check first that $f(x) =
    x^\alpha$ is defined on the spectrum of $L_{\text{out}}$
    (Definition~\ref{def:defined_on_the_spectrum}). In the following discussion, by $f(x) = x^\alpha$
    we refer to the branch with a cut on the negative real line, i.e. if $x = \rho e^{i\theta}$ with $\rho>0$
    and $\theta \in (-\pi,\pi)$, then $x^\alpha = \rho^\alpha e^{i\alpha\theta}$.

    This function is defined on the spectrum of the Laplacian because, as in the
    symmetric case, the matrix $L_{\text{out}}$ is a singular
    $M$-matrix, {with $0$ as a semisimple eigenvalue}.
    \begin{definition}[$M$-matrix,~\cite{MR1298430}]
        \label{def:M-matrix}
        A matrix $A \in \mathbb{R}^{n \times n}$ is an \textit{$M$-matrix}
        if $A = s I - B$ for some nonnegative matrix $B$, where $s \geq \rho(B)$, the
        spectral radius of $B$. It is a \textit{singular} $M$-matrix if $s = \rho(B)$.
    \end{definition}
    Note that the real part of a nonzero eigenvalue of a singular $M$-matrix is positive, and that the $M$-matrices form a \textit{closed subset} $\mathcal{M}$ of the vector space of real
    matrices $\mathbb{M}_{n}$; %
    we refer to \cite{MR1298430} for further information regarding
    these matrices, including the following basic result, where we denote by $\boldsymbol{0}$ the vector of all zeros and by $\boldsymbol{1}$ the vector of all ones.
    \begin{proposition}[Properties of $L_{\text{out}}$]\label{pro:properties_of_outdegree_laplacian}\leavevmode
        \begin{itemize}
            \item $L_{\text{out}}$ is a singular $M$-matrix,
            \item $L_{\text{out}} \boldsymbol{1} = \boldsymbol{0}$,
            \item $0$ is a semisimple eigenvalue of $L_{\text{out}}$.
        \end{itemize}
    \end{proposition}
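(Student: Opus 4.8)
The plan is to dispatch the three assertions in order of increasing difficulty, the first two being essentially immediate from the definitions and the third requiring the slightly less obvious observation that a row-stochastic matrix cannot have a nontrivial Jordan block at the eigenvalue $1$.

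\textit{Step 1: the identity $L_{\text{out}}\boldsymbol{1} = \boldsymbol{0}$.} I would simply compute row sums. By the definition of the out-degree, $\sum_{j} w_{i,j} = d_i^{(\text{out})}$ for every $i$, so that $(W\boldsymbol{1})_i = d_i^{(\text{out})} = (D_{\text{out}}\boldsymbol{1})_i$ and hence $L_{\text{out}}\boldsymbol{1} = (D_{\text{out}} - W)\boldsymbol{1} = \boldsymbol{0}$. In particular $0$ is an eigenvalue of $L_{\text{out}}$ with eigenvector $\boldsymbol{1}$.

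\textit{Step 2: $L_{\text{out}}$ is a singular $M$-matrix.} Here I would exhibit the splitting required by Definition~\ref{def:M-matrix}. Set $s = \max_i d_i^{(\text{out})}$; since no vertex is isolated the graph has at least one edge, and each such edge $(v_a,v_b)$ forces $d_a^{(\text{out})}\geq w_{a,b}>0$, so $s>0$. Let $B = sI - L_{\text{out}} = (sI - D_{\text{out}}) + W$. The diagonal part $sI-D_{\text{out}}$ has nonnegative entries by the choice of $s$ and $W\geq 0$, so $B\geq 0$; moreover the $i$-th row sum of $B$ equals $(s-d_i^{(\text{out})}) + \sum_j w_{i,j} = s$, i.e. $B\boldsymbol{1}=s\boldsymbol{1}$. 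Thus $s\in\sigma(B)$, whence $\rho(B)\geq s$, while $\rho(B)$ is bounded above by the maximum row sum of the nonnegative matrix $B$, which is $s$. Therefore $\rho(B)=s$ and $L_{\text{out}} = sI - B$ with $s = \rho(B)$, which is exactly the definition of a singular $M$-matrix.

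\textit{Step 3: $0$ is semisimple.} I would pass to the matrix $P = s^{-1}B = I - s^{-1}L_{\text{out}}$, which by Step 2 is nonnegative with all row sums equal to $1$, hence $\|P^k\|_\infty = 1$ for every $k\geq 0$ and the powers $(P^k)_{k}$ are uniformly bounded. A Jordan block of size $m\geq 2$ associated with an eigenvalue of modulus $1$ would make $\|P^k\|\to\infty$, so boundedness of the powers forces every eigenvalue of $P$ of modulus $1$, and in particular the eigenvalue $1$, to be semisimple. Since $L_{\text{out}} = s(I-P)$ with $s\neq 0$, the Jordan blocks of $L_{\text{out}}$ at $0$ are in size-preserving bijection with those of $P$ at $1$ (conjugating the Jordan form of $P$ by a diagonal rescaling turns $s(I - J_m(\mu))$ into $J_m(s(1-\mu))$), so $0$ is semisimple for $L_{\text{out}}$ as well. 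I expect this last step to be the only genuinely delicate point; a shorter alternative would be to cite the standard fact from $M$-matrix theory that $0$ is semisimple for the singular $M$-matrix $\rho(B)I-B$ precisely when $\rho(B)$ is semisimple for $B$, but the self-contained argument through row-stochasticity seems preferable here.
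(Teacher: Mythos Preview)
Your argument is correct. Note, however, that the paper does not actually supply a proof of this proposition: it is stated as a ``basic result'' and deferred to the reference on nonnegative matrices \cite{MR1298430}. So there is no paper proof to compare against in detail.

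That said, your self-contained treatment is exactly the kind of argument one finds behind such citations. Steps~1 and~2 are routine; the only point worth a remark is Step~3. Your route via the row-stochastic rescaling $P = s^{-1}B$ and the power-boundedness $\|P^k\|_\infty = 1$ is the cleanest way to see semisimplicity of the peripheral eigenvalues without invoking Perron--Frobenius machinery, and the transfer of Jordan block sizes from $P$ at $1$ to $L_{\text{out}} = s(I-P)$ at $0$ is handled correctly. The alternative you mention at the end --- citing the standard $M$-matrix fact directly --- is precisely what the paper does, so your explicit argument adds value over the paper's treatment rather than departing from it.
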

    As a consequence we have the following Theorem.
    \begin{theorem}\label{pro:wellposednessofalfapower}
        Given a weighted graph $G = (V,E,W)$ and its Laplacian
        with respect to the out degree $L_{\text{out}}$
        (Definition~\ref{def:digraph_laplacian}), the function $f(x) = x^\alpha$ is defined on the spectrum of
        $L_{\text{out}}$ and induces a matrix function for all $\alpha \in (0,1]$.
    \end{theorem}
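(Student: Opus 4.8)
The plan is to verify directly the two requirements of Definition~\ref{def:defined_on_the_spectrum}, using the structural facts about $L_{\text{out}}$ recorded in Proposition~\ref{pro:properties_of_outdegree_laplacian}. Write $\lambda_1,\ldots,\lambda_s$ for the distinct eigenvalues of $L_{\text{out}}$ and $n_i$ for the index of $\lambda_i$. With $f(x)=x^\alpha$ understood as the branch that is holomorphic on $\C\setminus(-\infty,0]$, one must check that the values $f^{(j)}(\lambda_i)=\frac{\de^j}{\de x^j}x^\alpha\big|_{x=\lambda_i}$ exist for all $i=1,\ldots,s$ and all $j=0,1,\ldots,n_i-1$. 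I would split the verification according to whether $\lambda_i=0$ or $\lambda_i\neq 0$.

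First I would dispose of the nonzero eigenvalues. Since $L_{\text{out}}$ is a singular $M$-matrix (Proposition~\ref{pro:properties_of_outdegree_laplacian}), every eigenvalue $\lambda\neq 0$ satisfies $\Real(\lambda)>0$, as already remarked after Definition~\ref{def:M-matrix}; in particular $\lambda\notin(-\infty,0]$, so $\lambda$ lies in the open set on which the principal branch $x\mapsto x^\alpha$ is holomorphic. Hence all derivatives $f^{(j)}(\lambda)$, $j\ge 0$, exist, and in particular the finitely many values demanded by Definition~\ref{def:defined_on_the_spectrum} are well defined, irrespective of the index of $\lambda$.

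The only delicate point is the eigenvalue $\lambda=0$, which is present because $L_{\text{out}}\boldsymbol{1}=\boldsymbol{0}$. Here $f'(x)=\alpha x^{\alpha-1}$ has no finite limit as $x\to 0$ when $\alpha\in(0,1)$, so if $0$ were a defective eigenvalue the function $x^\alpha$ would fail to be defined on the spectrum. However, by Proposition~\ref{pro:properties_of_outdegree_laplacian} the eigenvalue $0$ is \emph{semisimple}, i.e.\ its index is $n_i=1$; consequently Definition~\ref{def:defined_on_the_spectrum} only asks for the single value $f^{(0)}(0)=0^\alpha$, which exists with the usual convention $0^\alpha=0$ (valid since $\alpha>0$). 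Combining this with the previous paragraph shows that $f(x)=x^\alpha$ is defined on the spectrum of $L_{\text{out}}$ for every $\alpha\in(0,1]$.

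Finally, since $f$ is defined on the spectrum of $L_{\text{out}}$, Definition~\ref{def:matrix_function_as_jordan_form} applied to a Jordan canonical form of $L_{\text{out}}$ produces a well-defined matrix $L_{\text{out}}^\alpha:=f(L_{\text{out}})$; because we have fixed one branch of the multivalued map $x\mapsto x^\alpha$ and use it in every Jordan block, this is a primary matrix function. Note that semisimplicity of $0$ also forces every Jordan block at $0$ to be $1\times 1$, so no derivative of $f$ at $0$ is ever needed in the construction. I expect the only genuine obstacle to be this behaviour at $\lambda=0$: the statement holds precisely because $0$ is semisimple, which is exactly the content of the last bullet of Proposition~\ref{pro:properties_of_outdegree_laplacian}; the rest of the argument is immediate from the $M$-matrix property and the choice of branch.
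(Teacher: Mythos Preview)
Your proof is correct and follows essentially the same approach as the paper: you handle the eigenvalue $0$ via its semisimplicity (so only $f(0)$ is needed) and the nonzero eigenvalues via the $M$-matrix property $\Real(\lambda)>0$, which places them in the domain of holomorphy of the principal branch of $x^\alpha$. The paper additionally remarks that for $\lambda=\rho e^{i\theta}$ with $\theta\in(-\tfrac{\pi}{2},\tfrac{\pi}{2})$ one has $\alpha\theta\in(-\tfrac{\pi}{2},\tfrac{\pi}{2})$, so the chosen branch also preserves positivity of the real parts of the eigenvalues, but this extra observation is not needed for the statement as written.
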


    \begin{proof}[Proof of Theorem~\ref{pro:wellposednessofalfapower}]
        By Proposition~\ref{pro:properties_of_outdegree_laplacian} we know
        that~$0$ is a semisimple eigenvalue of $L_{\text{out}}$, then all
        the Jordan blocks related to the eigenvalue $\lambda_1 = 0$ have
        size 1 and $f(\lambda_1) = f(0)$ exists. Since $L_{\text{out}}$ is
        a singular $M$-matrix, $\operatorname{Re}(\lambda_k) > 0$,
        for all~$\lambda_k \neq 0$, and $f^{(j)}(\lambda_k)$ exist
        for all~$j$. Thus, by
        Definition~\ref{def:matrix_function_as_jordan_form}, $f$ is defined
        on the spectrum of~$L_{\text{out}}$. Moreover, let $\lambda$ be any
        nonzero eigenvalue of~$L_{\text{out}}$. Then, $\lambda = \rho e^{i
            \theta}$ with $\theta \in (-\frac{\pi}{2},\frac{\pi}{2})$ since $\operatorname{Re}(\lambda_k) > 0$ and
        thus we can define $\lambda^\alpha = \rho^\alpha e^{i \alpha \theta}$ with
        $\alpha\theta \in (-\frac{\pi}{2},\frac{\pi}{2})$. Therefore, we can always select the
        branch of $x^\alpha$ preserving the positivity of the real part
        of the eigenvalues, thus ensuring the choice of a primary matrix
        function.
    \end{proof}

    Under the same hypothesis of
    Theorem~\ref{pro:wellposednessofalfapower} we can say more
    about the structure of $L_{\text{out}}^\alpha$. Indeed, this is a result
    that is already known for the special case of the matrix $p$th root.
    \begin{theorem}[\cite{MR2599831}]\label{thm:pth_root_is_m_matrix}
        If $A$ is a singular $M$-matrix {with 0 as a semisimple eigenvalue},
        then there exists a determination of $A^{1/p}$ for every $p \in \mathbb{N}$ that is a singular $M$-matrix.
    \end{theorem}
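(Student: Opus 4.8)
The plan is to produce a suitable branch of $A^{1/p}$ as the limit of principal $p$th roots of \emph{nonsingular} $M$-matrices, using the semisimplicity of the zero eigenvalue to keep that limit under control. Write $A = sI - B$ with $B \geq 0$ and $s = \rho(B)$ (Definition~\ref{def:M-matrix}). For $\varepsilon > 0$ the shifted matrix $A_\varepsilon := A + \varepsilon I = (s+\varepsilon)I - B$ is a nonsingular $M$-matrix, so its spectrum lies in the open right half-plane and in particular avoids the branch cut $(-\infty,0]$ of $z \mapsto z^{1/p}$. The first step is to invoke the classical fact that such a matrix has a principal $p$th root $A_\varepsilon^{1/p}$ which is again a nonsingular $M$-matrix: expanding $(1-z)^{1/p} = \sum_{k\geq 0}\binom{1/p}{k}(-z)^k$, every coefficient with $k\geq 1$ is nonpositive, so $A_\varepsilon^{1/p} = (s+\varepsilon)^{1/p}\bigl(I - \tfrac{B}{s+\varepsilon}\bigr)^{1/p} = cI - N_\varepsilon$ with $c = (s+\varepsilon)^{1/p}$ and $N_\varepsilon \geq 0$; since the eigenvalues of $A_\varepsilon^{1/p}$ have positive real part, Perron--Frobenius applied to $N_\varepsilon$ forces $c > \rho(N_\varepsilon)$.

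The heart of the argument is the passage to the limit $\varepsilon \to 0^+$. Fix a Jordan decomposition $A = ZJZ^{-1}$ as in~\eqref{eq:def:jordan_canonical_form}; then $A_\varepsilon = Z(J + \varepsilon I)Z^{-1}$, so $A_\varepsilon^{1/p} = Z\,f(J + \varepsilon I)\,Z^{-1}$ where $f$ is the principal branch of $z^{1/p}$. Here the hypothesis that $0$ is \emph{semisimple} is precisely what is needed: the Jordan blocks attached to the eigenvalue $0$ have size $1$, so on them $f(J_k + \varepsilon I) = \varepsilon^{1/p} \to 0 = f(0)$, while on the blocks attached to the nonzero eigenvalues $\lambda_k$ — which have $\operatorname{Re}(\lambda_k) > 0$ and hence lie where $f$ is analytic — one has $f(J_k + \varepsilon I) \to f(J_k)$ entrywise. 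Therefore $A_\varepsilon^{1/p} \to X := Z f(J) Z^{-1}$, which is exactly the determination of $A^{1/p}$ given by $0^{1/p} = 0$ together with, at each nonzero eigenvalue, the branch with argument in $(-\tfrac{\pi}{2p},\tfrac{\pi}{2p})$ — the same positivity-preserving choice used in the proof of Theorem~\ref{pro:wellposednessofalfapower}. By the composition rule for primary matrix functions, which on the $1\times 1$ zero blocks reduces to the trivial identity $0^p = 0$, we get $X^p = A$, so $X$ is a $p$th root of $A$.

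It remains to recognize $X$ as a singular $M$-matrix, and this is now immediate: each $A_\varepsilon^{1/p}$ is an $M$-matrix, the set $\mathcal{M}$ of $M$-matrices is closed, hence $X = \lim_{\varepsilon\to0^+} A_\varepsilon^{1/p} \in \mathcal{M}$; and $X$ has $0$ as an eigenvalue (namely $f$ evaluated at the zero eigenvalue of $A$), so it is singular, and therefore a singular $M$-matrix by definition. I expect the delicate point to be the limit step of the second paragraph: one must be sure both that $A_\varepsilon^{1/p}$ genuinely converges and that its limit still raises to the $p$th power to give $A$, and both facts collapse once $0$ is no longer semisimple, since then $z\mapsto z^{1/p}$ ceases to be defined on the spectrum of $A$ (its derivatives at $0$ blow up). One could alternatively bypass the perturbation by splitting $\C^n = \ker(A) \oplus \operatorname{ran}(A)$ into $A$-invariant subspaces and defining $X$ blockwise as $0$ on $\ker(A)$ and as the principal $p$th root on $\operatorname{ran}(A)$, but then the nonnegativity of $sI - X$ would have to be checked by hand, whereas the perturbation route obtains it for free from the closedness of $\mathcal{M}$.
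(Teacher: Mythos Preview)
The paper does not actually prove this theorem; it is quoted from~\cite{MR2599831} and then immediately generalized in Theorem~\ref{thm:alphapowersingularmmatrix}. Your argument is correct and is precisely the perturbation-plus-closure strategy the paper uses to prove that generalization: shift to $A+\varepsilon I$, note the shifted matrix has an $M$-matrix fractional power, pass to the limit via the Jordan form (using semisimplicity of $0$), and invoke closedness of~$\mathcal{M}$. The only difference is that where you expand $(1-z)^{1/p}$ explicitly to see that $A_\varepsilon^{1/p}$ is an $M$-matrix, the paper's proof of Theorem~\ref{thm:alphapowersingularmmatrix} simply cites~\cite[Corollary~3.7]{MR700883} for the corresponding fact about $A_\varepsilon^\alpha$; your hands-on computation is a nice self-contained alternative.
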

    Similarly, we get the following useful result. %
    \begin{theorem}\label{thm:alphapowersingularmmatrix}
        If $A$ is a singular $M$-matrix {with 0 as a semisimple eigenvalue},
        then there exists a determination of $A^{\alpha}$ for every $\alpha \in (0,1]$ that is a singular $M$-matrix.
    \end{theorem}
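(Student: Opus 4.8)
The plan is to produce the principal determination of $A^{\alpha}$ — the one singled out in Theorem~\ref{pro:wellposednessofalfapower} — from the classical resolvent integral representation of fractional powers, and then to read off the $M$-matrix structure directly from the sign pattern of the integrand. The case $\alpha=1$ is trivial, so assume $\alpha\in(0,1)$. Starting from the scalar identity $z^{\alpha}=\frac{\sin(\alpha\pi)}{\pi}\int_{0}^{\infty}t^{\alpha-1}\,\frac{z}{t+z}\,\dd t$, valid for $\Re z>0$ and holding trivially for $z=0$, the matrix functional calculus gives
\[
A^{\alpha}=\frac{\sin(\alpha\pi)}{\pi}\int_{0}^{\infty}t^{\alpha-1}\,A\,(tI+A)^{-1}\,\dd t .
\]
I would first check convergence in the singular case: decomposing $\C^{n}=\ker A\oplus\operatorname{range} A$ (possible since $0$ is semisimple), the restriction of $A$ to $\operatorname{range} A$ has spectrum in the open right half-plane, so $A(tI+A)^{-1}\to I-P$ (with $P$ the spectral projector onto $\ker A$) as $t\to0^{+}$, making the integrand $O(t^{\alpha-1})$ near $0$, while $A(tI+A)^{-1}=O(t^{-1})$ as $t\to\infty$ makes it $O(t^{\alpha-2})$ there; both are integrable precisely for $\alpha\in(0,1)$. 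The same computation shows the integral represents the primary matrix function attached to the branch of $z^{\alpha}$ with a cut on the negative real axis.

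Next I would exploit that $tI+A$ is a \emph{nonsingular} $M$-matrix for every $t>0$: writing $A=sI-B$ with $B\ge0$ and $s=\rho(B)$, one has $tI+A=(t+s)I-B$ with $t+s>\rho(B)$, hence $(tI+A)^{-1}=\frac{1}{t+s}\sum_{k\ge0}(t+s)^{-k}B^{k}\ge0$ entrywise. Using $A(tI+A)^{-1}=I-t(tI+A)^{-1}$, the off-diagonal entries of $A(tI+A)^{-1}$ are $-t\,[(tI+A)^{-1}]_{ij}\le0$ for $i\neq j$. Since $\sin(\alpha\pi)>0$ for $\alpha\in(0,1)$, integrating gives $[A^{\alpha}]_{ij}\le0$ for all $i\neq j$, i.e. $A^{\alpha}$ is a $Z$-matrix.

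Finally I would settle the spectral side. The eigenvalues of $A^{\alpha}$ are the values $\lambda^{\alpha}$ as $\lambda$ runs over the spectrum of $A$; writing $\lambda=\rho e^{i\theta}$ with $|\theta|\le\pi/2$ (allowed because $\Re\lambda\ge0$) yields $\Re(\lambda^{\alpha})=\rho^{\alpha}\cos(\alpha\theta)\ge0$, and $0=0^{\alpha}$ remains an eigenvalue, which also stays semisimple since the $1\times1$ Jordan blocks of $A$ at $0$ map to $[0^{\alpha}]=[0]$. A $Z$-matrix whose eigenvalues all have nonnegative real part is an $M$-matrix, singular exactly when $0$ is an eigenvalue; hence this determination of $A^{\alpha}$ is a singular $M$-matrix. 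The only genuinely delicate step is the first one — justifying the integral representation, and its convergence, in the presence of the semisimple zero eigenvalue; everything after that is a direct sign count. (The tempting shortcut of reducing to rational exponents $q/p$ via Theorem~\ref{thm:pth_root_is_m_matrix} and the closedness of $\mathcal{M}$ does not obviously work, because powers of a singular $M$-matrix need not be $M$-matrices, so $(A^{1/p})^{q}$ cannot be controlled.)
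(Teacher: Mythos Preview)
Your proof is correct and takes a genuinely different route from the paper's. The paper argues by perturbation: it shifts to the nonsingular $M$-matrix $A(\varepsilon)=A+\varepsilon I$, invokes a known result (\cite[Corollary~3.7]{MR700883}) stating that $A(\varepsilon)^\alpha$ is again a nonsingular $M$-matrix, and then lets $\varepsilon\to0$, using the closedness of $\mathcal{M}$ in $\mathbb{M}_n$ to conclude. Your argument, by contrast, is direct and self-contained: the Balakrishnan integral representation lets you read off the $Z$-matrix sign pattern from $A(tI+A)^{-1}=I-t(tI+A)^{-1}$ together with the entrywise nonnegativity of $(tI+A)^{-1}$, after which the spectral characterisation of $M$-matrices among $Z$-matrices finishes the job. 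The paper's approach is shorter but leans on the cited nonsingular result (whose standard proof, incidentally, is essentially your integral computation); your approach is longer but more transparent about \emph{why} the $M$-matrix structure survives, and it avoids any appeal to closedness or to external references. Your parenthetical remark on why the rational-exponent-plus-density shortcut via Theorem~\ref{thm:pth_root_is_m_matrix} does not obviously go through is also apt.
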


    \begin{proof}[Proof of Theorem~\ref{thm:alphapowersingularmmatrix}]
        Let $A(\varepsilon) = A + \varepsilon I$, then $A(\varepsilon)$ is a
        nonsingular M--matrix and so is $A(\varepsilon)^{\alpha}$~\cite[Corollary~3.7]{MR700883}. By looking at the Jordan
        canonical forms of the matrices $A(\varepsilon)^{\alpha}$ and
        $A^\alpha$ (Theorem~\ref{pro:wellposednessofalfapower}), we get
        $A(\varepsilon)^{\alpha} \rightarrow A^\alpha$ for $\varepsilon
        \rightarrow 0$. {Clearly, $A^\alpha$ is singular, and since the $M$-matrices from a closed subset of $\mathbb{M}_n$,
        we conclude that $A^{\alpha}$ is a singular $M$-matrix.}
    \end{proof}

    Moreover, note that the matrix produced in this way is a primary
    matrix function since we selected the same branch of the $f(x) =
    x^\alpha$ for every matrix of the sequence.

    \section{Decay bounds for the entries of fractional Laplacians}
    \label{sec:notes_on_the_decay}
    Quantitative estimates for the entries of fractional powers of the graph Laplacian yield valuable information on
    the transition probabilities of various types of random walks on the underlying graph. In this section we show
    how to obtain useful bounds for these quantities using general results on functions of matrices, at least in the case of
    undirected networks. We also comment on the difficulties one encounters when trying to extend such results to
    the case of directed graphs.

    \subsection{Undirected networks}
    \label{sec:notes_on_the_decay_undirected}

    First of all, we show that the fractional Laplacian $L^\alpha$ is related to a row--stochastic matrix, which can be used to define a fractional random walk on the graph, similarly to the Laplacian $L$ (see Definition~\ref{def:graph_laplacian}).
    \begin{lemma}
        \label{lemma:fractional random walk} For $\alpha \in (0,1)$, the
        matrix
        \begin{equation*}
        P^{(\alpha)} = I - \bar{L}^{(\alpha)}, \qquad \text{where }\bar{L}^{(\alpha)} =
        \diag(L^{\alpha})^{-1} L^{\alpha},
        \end{equation*}
        is a row--stochastic matrix.
    \end{lemma}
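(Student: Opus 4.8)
The plan is to check the two defining properties of a row--stochastic matrix separately: that every row of $P^{(\alpha)}$ sums to $1$, and that every entry of $P^{(\alpha)}$ is nonnegative. Here $\diag(L^\alpha)$ denotes the diagonal matrix carrying the diagonal of $L^\alpha$, so a preliminary point to address is that this matrix is actually invertible, which will come out of the nonnegativity analysis.

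First I would settle the row sums, which is the easy half. Since $G$ is undirected we have $L\one = \boldsymbol{0}$, and from the spectral decomposition $L = U\Lambda U^T$ we get $L^\alpha = U\Lambda^\alpha U^T$ with $0^\alpha = 0$ (as $\alpha>0$), so $\one$ still lies in the kernel: $L^\alpha \one = \boldsymbol{0}$. Consequently $\bar L^{(\alpha)}\one = \diag(L^\alpha)^{-1}L^\alpha\one = \boldsymbol{0}$ and $P^{(\alpha)}\one = \one - \bar L^{(\alpha)}\one = \one$, i.e.\ all row sums of $P^{(\alpha)}$ equal $1$.

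The substance is in the nonnegativity, and it rests on identifying the sign pattern of $L^\alpha$. The Laplacian of an undirected graph is a symmetric positive semidefinite singular $M$-matrix with $0$ as a semisimple eigenvalue, so Theorem~\ref{thm:alphapowersingularmmatrix} applies; moreover, in the symmetric case the determination of $A^\alpha$ produced there coincides with the unique real matrix $L^\alpha$ of~\eqref{eq:fractional_laplacian_of_a_graph}. Hence $L^\alpha$ is itself a singular $M$-matrix, and in particular $(L^\alpha)_{ij} \le 0$ for all $i \ne j$. It remains to show the diagonal entries are strictly positive, which simultaneously makes $\diag(L^\alpha)$ invertible: writing $(L^\alpha)_{ii} = e_i^T L^\alpha e_i = \sum_k \lambda_k^\alpha (u_k^T e_i)^2 \ge 0$, equality would force $u_k^T e_i = 0$ for every $k$ with $\lambda_k \ne 0$, i.e.\ $e_i \in \ker L^\alpha = \ker L$; since $\ker L$ is spanned by the indicator vectors of the connected components of $G$, this can happen only if $v_i$ is isolated, which is excluded by our standing assumption, so $(L^\alpha)_{ii} > 0$. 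With these facts the conclusion is immediate: $\bar L^{(\alpha)}$ has $(\bar L^{(\alpha)})_{ii} = 1$ and $(\bar L^{(\alpha)})_{ij} = (L^\alpha)_{ij}/(L^\alpha)_{ii} \le 0$ for $i \ne j$, hence $P^{(\alpha)} = I - \bar L^{(\alpha)}$ has zero diagonal and off-diagonal entries $-(L^\alpha)_{ij}/(L^\alpha)_{ii} \ge 0$, so it is nonnegative; together with the row-sum computation this gives row--stochasticity. The only part needing genuine argument is the sign pattern of $L^\alpha$, and even that is delivered almost entirely by Theorem~\ref{thm:alphapowersingularmmatrix}, leaving the strict positivity of the diagonal — equivalently the invertibility of $\diag(L^\alpha)$ — as the one point that invokes the no-isolated-vertices hypothesis.
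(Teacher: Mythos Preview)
Your proof is correct, but the route you take for the sign pattern of $L^\alpha$ differs from the paper's. You invoke Theorem~\ref{thm:alphapowersingularmmatrix} to conclude immediately that $L^\alpha$ is a singular $M$-matrix (hence has nonpositive off-diagonal entries), whereas the paper gives a direct, self-contained argument: it writes $L = \rho I - B$ with $\rho = \max_i d_i$ and $B \ge 0$, expands $L^\alpha = \rho^\alpha\sum_{k\ge 0}\binom{\alpha}{k}(-1)^k\rho^{-k}B^k$ as a convergent binomial series, and observes that for $\alpha\in(0,1)$ the coefficients $\binom{\alpha}{k}(-1)^k$ are negative for all $k\ge 1$, so every term past $k=0$ is entrywise nonpositive. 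Your approach is more economical and avoids re-deriving what is essentially a special case of Theorem~\ref{thm:alphapowersingularmmatrix}; the paper's approach has the merit of being constructive and independent of Section~\ref{sec:fractional_laplacian_of_a_directed_graph}. For the row sums and the strict positivity of the diagonal entries, both arguments proceed the same way (via the eigendecomposition and $L\one=\boldsymbol{0}$), with your treatment of the diagonal being slightly more explicit than the paper's one-line reference to the eigendecomposition.
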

    \begin{proof}
        We start by noting that all the diagonal entries of $L^\alpha$ are positive, so we
        have $\diag(\bar{L}^{(\alpha)}) = I$ and thus the diagonal entries of $P^{(\alpha)}$
        are zero. This can be seen by explicitly writing the eigendecomposition of $L$ and $L^\alpha$.

        Given that $L \boldsymbol{1} = \boldsymbol{0}$, we also have $\bar{L}^{(\alpha)} \boldsymbol{1} = \boldsymbol{0}$, so it is sufficient to show that $P^{(\alpha)} \ge 0$.

        We can write $ L = D - A = \rho I - B$, where $\rho = \max_i{d_i}$ and $B$
        is obtained from $A$ by increasing the diagonal entries so that its
        row sums are all equal to $\rho$. Therefore,
        \begin{equation*}
        L^\alpha = \rho^\alpha (I - {\textstyle\frac{1}{\rho}}B)^\alpha =
        \rho^\alpha\sum_{k = 0}^{\infty} \binom{\alpha}{k}(-1)^k
        \frac{1}{\rho^k}B^k, \qquad \text{where}\quad \binom{\alpha}{k} =
        \frac{\alpha(\alpha-1)\cdots(\alpha-k+1)}{k!}.
        \end{equation*}
        We have $\norm{{\rho^{-1}}B}_{\infty} = 1$, so the series {of infinity norms is bounded from above by $\displaystyle\sum_{k=0}^\infty \binom{\alpha}{k}(-1)^k$, which }is absolutely
        convergent since $\alpha > 0$; {this implies that the above matrix series for $L^\alpha$ is convergent}.
        Moreover, since $\alpha \in (0,1)$, we have
        that $\displaystyle\binom{\alpha}{k} > 0$ when $k$ is odd, and
        $\displaystyle\binom{\alpha}{k} < 0$ when $k$ is even. So all terms
        with $k \ge 1$ of the sum for $L^\alpha = (I-\frac{1}{\rho}B)^\alpha$ are nonpositive, since
        $B$ has non negative entries.
        We conclude by observing that $P^{(\alpha)} = I - \diag(L^\alpha)^{-1}L^\alpha$, so all the offdiagonal entries of $P^{(\alpha)}$ are nonnegative.

    \end{proof}

    We can interpret the random walk with transition matrix $P^{(\alpha)}$ as the
    one induced by the weighted undirected graph with adjacency matrix
    $$
    A_\alpha = \Diag(L^\alpha) P^{(\alpha)} = \Diag(L^\alpha) - L^\alpha.
    $$
    The entries of the vector $d_\alpha = \diag(L^\alpha)$ are the fractional
    degrees associated to $A_\alpha$, and they give us the
    stationary distribution of the random walk as in the standard case:
    \begin{equation*}
    \pi_\alpha^T P^{(\alpha)} = \pi_\alpha^T \quad \iff \quad \pi_\alpha = \frac{1} {\boldsymbol{1}^T
        d_\alpha} d_\alpha.
    \end{equation*}
    By analogy with the non fractional normalized Laplacian, we can use
    $\bar{L}^{(\alpha)}$ to define a continuous time random walk that solves the differential equation
    \begin{equation*}
    \begin{cases}
    \displaystyle \ddt u(t) = - \bar{L}^{(\alpha)} u(t), \\
    u(0) = u_0,
    \end{cases}
    \end{equation*}
    where $u_0$ is a given initial probability vector. The solution is given explicitly by
    \begin{equation*}
    u(t) = e^{-t\bar{L}^{(\alpha)}} u_0 =
    e^{-t\Diag(L^\alpha)^{-1}L^\alpha} u_0,
    \end{equation*}
    and is a probability distribution for every $t > 0$ whenever $u_0$ is, i.e., the entries of $u(t)$
    are between $0$ and $1$, and they sum up to $1$.

    Even if the graph Laplacian $L$ is sparse, its fractional powers
    $L^\alpha$, $\alpha \in (0,1)$ are usually full matrices.
    However, functions of sparse matrices can have entries that
    decay rapidly in magnitude far from the nonzero pattern of the original
    matrix~\cite{MR2455657,MR3391978,MR1812534}. In
    particular, for a function $f$ that is analytic on the convex hull
    of the spectrum of the symmetric matrix $A$, the decay in the
    entries of $f(A)$ is exponential, or superexponential if $f$ is an entire function.
    On the other hand, if $f$ is not analytic, the decay can be slower; the lower the
    regularity of $f$, the slower the decay.

    In the cases we consider, the functions $f(x) = x^\alpha$ and $g(x) = e^{- t x^\alpha}$, with $\alpha \in(0,1)$ are
    not differentiable in $x=0$, and the Laplacian matrix always has a (semisimple) eigenvalue at zero, given that $L \boldsymbol{1} = \boldsymbol{0}$.
    Hence, the exponential decay results for functions that are analytic on the spectrum of $L$ do not apply, and indeed numerically one observes
    much slower decay.  As it turns out, we can show that a power law decay occurs, using a well known approximation theorem for
    continuous functions defined on a compact interval, as we shall see below.

    The decay in the entries of the fractional Laplacian motivates the use of this matrix to model long-range diffusion and random walks on the graph.
    Indeed, the locality effect in the standard case derives from the superexponential decay of the entries of the related matrix function. When
    the fractional power of the Laplacian is used, the decay of the transition probabilities assumes a power law decay, hence the probability of performing
    a long jump is greatly increased with respect to the standard (classical diffusion) case, where these long range transitions are essentially impossible.

    \begin{theorem}[{Jackson's Theorem \cite[Theorem 43]{approx}}]
        \label{thm:Jackson}
        Let $f: [a,b] \to \R $ be a function with
        modulus of continuity $\omega$. Then, for any $n \ge 1$, the best approximation error
        $E_n(f)$ that can be obtained with polynomials of degree $\le n$ satisfies
        \begin{equation*}
        E_n(f) := \min_{\deg p_n \le n} \norm{f-p_n}_{\infty}\le c
        \omega\left(\frac{b-a}{2n}\right),
        \end{equation*}
        where $c = 1 + \pi^2/2$ is a constant independent
        of $n$ and of $f$.
    \end{theorem}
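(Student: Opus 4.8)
The plan is to prove Jackson's theorem by the classical route: reduce the algebraic approximation problem on $[a,b]$ to a periodic one, and there convolve with a nonnegative trigonometric kernel of controlled degree whose first moment is small. First I would normalize the interval. The affine bijection $x \mapsto \frac{a+b}{2} + \frac{b-a}{2}x$ from $[-1,1]$ onto $[a,b]$ turns polynomial approximation of $f$ of degree $\le n$ into polynomial approximation of $g(x) := f\bigl(\frac{a+b}{2} + \frac{b-a}{2}x\bigr)$ of the same degree, and the modulus of continuity transforms as $\omega_g(\delta) = \omega\bigl(\frac{b-a}{2}\,\delta\bigr)$; so it suffices to show $E_n(g) \le c\,\omega_g(1/n)$ for $g \in C[-1,1]$. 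Then, via $x = \cos\theta$, I pass to $h(\theta) := g(\cos\theta)$, which is even, $2\pi$-periodic and continuous, with $\omega_h(\delta) \le \omega_g(\delta)$ because $\abs{\cos\theta_1 - \cos\theta_2} \le \abs{\theta_1 - \theta_2}$. Since $\cos(k\theta) = T_k(\cos\theta)$, an even trigonometric polynomial of degree $\le n$ pulls back to an algebraic polynomial of degree $\le n$, and replacing any trigonometric approximant of the even function $h$ by its even part does not increase the error; hence $E_n(g) \le E_n^{\mathrm{trig}}(h)$, and it remains to approximate a continuous $2\pi$-periodic function by trigonometric polynomials of degree $\le n$ with error $\le c\,\omega_h(1/n)$.

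For the periodic problem I would use the Jackson kernel: for an integer $m$ with $2(m-1) \le n$, set $K_m(t) = \gamma_m \left(\frac{\sin(mt/2)}{\sin(t/2)}\right)^{4}$, with $\gamma_m > 0$ chosen so that $\frac{1}{2\pi}\int_{-\pi}^{\pi} K_m(t)\,dt = 1$. This $K_m$ is a nonnegative trigonometric polynomial of degree $2(m-1) \le n$, and the fourth power (rather than the Fej\'er square) is exactly what makes its first moment summable, with an explicit bound $\frac{1}{2\pi}\int_{-\pi}^{\pi} \abs{t}\,K_m(t)\,dt \le \frac{C}{m}$. Then $Jh(\theta) := \frac{1}{2\pi}\int_{-\pi}^{\pi} h(\theta - t)\,K_m(t)\,dt$ is a trigonometric polynomial of degree $\le n$, and using the normalization together with $\abs{h(\theta) - h(\theta-t)} \le \omega_h(\abs{t})$ and the elementary inequality $\omega_h(\lambda\delta) \le (1+\lambda)\,\omega_h(\delta)$ one obtains
\[
\norm{h - Jh}_{\infty} \le \frac{1}{2\pi}\int_{-\pi}^{\pi} \omega_h(\abs{t})\,K_m(t)\,dt \le \omega_h\!\left(\tfrac{1}{m}\right)\left(1 + m\cdot\frac{1}{2\pi}\int_{-\pi}^{\pi} \abs{t}\,K_m(t)\,dt\right) \le (1+C)\,\omega_h\!\left(\tfrac{1}{m}\right).
\]
Choosing $m$ as large as the constraint $2(m-1) \le n$ allows and folding the numerical factors back through the two reductions above yields $E_n(f) \le c\,\omega\!\left(\frac{b-a}{2n}\right)$.

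The main obstacle is entirely quantitative: obtaining the sharp constant $c = 1 + \pi^2/2$ rather than merely "some absolute constant." This forces one to (i) pick the kernel exponent, its degree, and the integer $m$ so that the degree-$n$ budget is spent efficiently and the scale $1/m$ matches $1/n$ up to the intended factor; (ii) evaluate the relevant moments $\frac{1}{2\pi}\int \abs{t}^{j} K_m(t)\,dt$ with sharp constants, typically via $\sin(t/2) \ge t/\pi$ on $[0,\pi]$ and explicit control of integrals of the form $\int_0^\infty \left(\frac{\sin u}{u}\right)^{4}du$ and $\int_0^\infty \left(\frac{\sin u}{u}\right)^{2} u \, du$; and (iii) track the factor $\frac{b-a}{2}$ from the affine map and the $m$-versus-$n$ discrepancy so that nothing is lost. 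All of this bookkeeping is classical and is carried out in detail in the cited reference; the structural core I would reproduce is exactly the three ingredients above — reduce to the periodic case, convolve against a positive degree-$\le n$ kernel with controlled first moment, and combine with the quasi-subadditivity of $\omega$.
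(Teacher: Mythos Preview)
The paper does not prove this statement at all: Theorem~\ref{thm:Jackson} is quoted verbatim from the approximation-theory literature (the citation \cite[Theorem~43]{approx}) and is used as a black box in the proof of Proposition~\ref{prop: general decay}. There is therefore nothing to compare your argument against.

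Your sketch is the standard classical route to Jackson's theorem (affine reduction to $[-1,1]$, passage to the periodic setting via $x=\cos\theta$ and Chebyshev polynomials, convolution with a fourth-power Fej\'er/Jackson kernel, and the quasi-subadditivity $\omega(\lambda\delta)\le(1+\lambda)\omega(\delta)$), and the structural outline is sound. You are also right to flag the main difficulty as purely quantitative: recovering the specific constant $c = 1+\pi^2/2$ requires careful choice of kernel and sharp moment estimates, and your outline does not carry this through explicitly. For the purposes of this paper, however, none of that is needed --- the authors only invoke the bound $E_n(f)\le c\,\omega\!\left(\frac{b-a}{2n}\right)$ with $c$ an absolute constant, and the precise value $1+\pi^2/2$ plays no role in the downstream corollaries beyond appearing as a multiplicative factor.
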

    We recall that the graph $G_M = (V_M, E_M)$ induced by a matrix $M \in \C^{n \times n}$ is the graph with nodes $V_M = \{v_1, \dots, v_n\}$ and edges $E_M = \{ (v_i, v_j) : M_{ij} \ne 0 \}$.
    \begin{proposition}
        \label{prop: general decay}
        Let $M$ be a symmetric matrix with spectrum $\sigma(M) \subset [a,b]$. Denote by $d(i,j)$ the distance between $i$ and $j$ in the graph induced by $M$, i.e. the length of the shortest path connecting nodes $i$ and $j$.
        Let $f:[a,b] \to \R$ be a function with modulus of continuity $\omega$. Then the following holds:
        \begin{equation*}
        \abs{f(M)_{ij}} \le  c \cdot \omega\left(\frac{b-a}{2}{[} d(i,j)-1 {]}^{-1}\right), \qquad d(i,j) \ge 2,
        \end{equation*}
        where $c = 1 + \pi^2/2$ is the constant from Jackson's Theorem \ref{thm:Jackson}.
    \end{proposition}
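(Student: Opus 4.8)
The plan is to combine Jackson's Theorem (Theorem~\ref{thm:Jackson}) with the standard observation that for a matrix $M$ whose induced graph has $d(i,j) \ge k+1$, any polynomial $p_k$ of degree at most $k$ satisfies $p_k(M)_{ij} = 0$. This holds because the $(i,j)$ entry of $M^\ell$ is a sum over walks of length $\ell$ from $i$ to $j$, and there are no such walks when $\ell < d(i,j)$; hence $M^\ell_{ij} = 0$ for all $\ell \le d(i,j)-1$, and in particular $p_k(M)_{ij}=0$ whenever $k \le d(i,j)-1$.

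First I would fix nodes $i,j$ with $d(i,j) \ge 2$ and set $k = d(i,j)-1 \ge 1$. Let $p_k$ be a polynomial of degree $\le k$ that realizes (or nearly realizes) the best approximation error $E_k(f)$ on $[a,b]$; since $\sigma(M) \subset [a,b]$ and $M$ is symmetric (hence diagonalizable with real spectrum), we may bound the $(i,j)$ entry of $f(M) - p_k(M)$ by the spectral norm, which in turn equals $\max_{\lambda \in \sigma(M)} |f(\lambda) - p_k(\lambda)| \le \|f - p_k\|_{\infty,[a,b]} = E_k(f)$. Using $p_k(M)_{ij} = 0$ from the walk-counting argument above, we get $|f(M)_{ij}| = |f(M)_{ij} - p_k(M)_{ij}| \le \|f(M) - p_k(M)\|_2 \le E_k(f)$.

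Then I would invoke Jackson's Theorem with $n = k = d(i,j)-1$ to conclude
\begin{equation*}
\abs{f(M)_{ij}} \le E_{d(i,j)-1}(f) \le c\,\omega\!\left(\frac{b-a}{2\,[d(i,j)-1]}\right),
\end{equation*}
which is exactly the claimed bound, with $c = 1 + \pi^2/2$. The only mild subtlety is the use of the spectral norm to pass from the entrywise quantity to the uniform approximation error: for a symmetric matrix $N$ one has $|N_{ij}| = |e_i^T N e_j| \le \|e_i\|_2 \|N\|_2 \|e_j\|_2 = \|N\|_2$, and $\|N\|_2 = \rho(N) = \max_{\lambda\in\sigma(M)}|r(\lambda)|$ when $N = r(M)$ for a function $r$ continuous on $\sigma(M)$; this is where symmetry of $M$ is essential.

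I do not expect a serious obstacle here; the argument is a clean packaging of two standard facts. The one point that deserves a line of care is making sure $f$ is merely continuous on $[a,b]$ (so that $\omega$ and $E_n(f)$ are well defined and finite) — this is guaranteed by the hypothesis that $f$ has a modulus of continuity $\omega$ — and that the best polynomial approximant $p_k$ exists, which follows from compactness of $[a,b]$ and of the space of polynomials of bounded degree. No further regularity of $f$ is needed, which is precisely why this approach yields bounds even for the non-differentiable functions $x^\alpha$ and $e^{-tx^\alpha}$ that motivate the section.
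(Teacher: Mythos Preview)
Your proposal is correct and follows essentially the same route as the paper: bound $\abs{f(M)_{ij}}$ by $\norm{f(M)-p_m(M)}_2$ using that $p_m(M)_{ij}=0$ when $m=d(i,j)-1$ (the walk-counting argument), then pass to $\norm{f-p_m}_{\infty,[a,b]}$ via the spectral decomposition of the symmetric $M$, and finally apply Jackson's Theorem. The only cosmetic difference is that the paper writes out the orthogonal diagonalization $M=Q\Lambda Q^T$ explicitly to justify $\norm{f(M)-p(M)}_2=\norm{f-p}_{\infty,\sigma(M)}$, whereas you phrase it via $\abs{N_{ij}}\le\norm{N}_2$ and $\norm{r(M)}_2=\max_{\lambda\in\sigma(M)}\abs{r(\lambda)}$; the content is identical.
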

    \begin{proof}
        Note first that $f$ is defined on the spectrum of $M$, since $M$ is symmetric and thus diagonalizable. In particular, we have $M = Q \Lambda Q^T$, with $Q$ orthogonal and $\Lambda = \diag(\lambda_1, \dots, \lambda_n)$. Then, for any polynomial $p$ we have
        \begin{align*}
        \norm{f(M) - p(M)}_2 &= \norm{Q f(\Lambda) Q^T - Q p(\Lambda) Q^T}_2 \\
        &= \norm{f(\Lambda) - p(\Lambda)}_2 \\
        &= \norm{f(\lambda) - p(\lambda)}_{\infty, \sigma(M)} \le \norm{f(\lambda) - p(\lambda)}_{\infty,[a,b]}\,,
        \end{align*}
        where we have used basic properties of matrix functions and the invariance of the $2$-norm under orthogonal transformations.
        By Jackson's Theorem \ref{thm:Jackson}, we then have that for all $m \ge 1$ there exists a polynomial $p_m$ with $\deg p_m \le m$ such that
        \begin{equation}
        \label{eqn: proof 3}
        \norm{f(M) - p_m(M)}_2 \le \norm{f - p_m}_{\infty, [a,b]} \le c \cdot \omega\left(\frac{b-a}{2m}\right).
        \end{equation}
        Now, let us fix $i$ and $j \in \{ 1, \dots, n\}$. If $d(i,j) = m+1$, it is easy to see that all powers of $M$ up to the $m$-th have a zero entry in position $(i,j)$. Therefore $f(M)_{ij} = f(M)_{ij} - p_{m}(M)_{ij}$, and we obtain
        \begin{equation*}
        \abs{f(M)_{ij}} \le \norm{f(M) - p_m(M)}_2 \le c \cdot {\omega\left(\frac{b-a}{2m}\right)} = c \cdot \omega\left(\frac{b-a}{2}{[}d(i,j)-1{]}^{-1}\right).
        \end{equation*}
    \end{proof}

    \begin{remark}
        The result of Proposition~\ref{prop: general decay} only provides information for pairs of nodes that are at least a distance of $2$ apart. This is enough for our purposes, since we are mainly interested in sparse graphs, and in the behavior of transition probabilities for nodes that are far from each other.
    \end{remark}

    We can use the result of Propositon~\ref{prop: general decay} to obtain bounds on the entries of the fractional Laplacian of an undirected graph.
    \begin{corollary}
        \label{cor: decay for L^a}
        Let $L$ be the Laplacian of an undirected graph, $\alpha \in (0,1)$ and $t > 0$. Then, if $d(i,j) \ge~2$, the following inequalities hold:
        \begin{equation}
        \label{eqn: decay for L^a}
        \begin{aligned}
        \abs{(L^\alpha)_{ij}} &\le c \, \frac{\rho(L)^\alpha}{2^\alpha} \cdot {[}d(i,j) - 1{]}^{-\alpha}, \\
        \abs{\exp(-tL^\alpha)_{ij}} &\le c \cdot \left[1 - \exp\left({\textstyle -t \frac{\rho(L)^\alpha}{2^\alpha} {[}d(i,j)-1{]}^{-\alpha}}\right)\right]  \le ct \, \frac{\rho(L)^\alpha}{2^\alpha}\cdot {[}d(i,j)-1{]}^{-\alpha},
        \end{aligned}
        \end{equation}
        with $c = 1 + \pi^2/2$.
    \end{corollary}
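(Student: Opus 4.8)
The plan is to obtain both bounds as direct instances of Proposition~\ref{prop: general decay} applied with $M = L$, choosing $f(x) = x^\alpha$ in the first case and $f(x) = e^{-tx^\alpha}$ in the second. First I would note that since $L$ is symmetric and positive semidefinite, $\sigma(L) \subset [a,b]$ with $a = 0$ and $b = \rho(L)$, so $b-a = \rho(L)$; moreover $(L)_{ij} \neq 0$ exactly when $i=j$ or $v_i \sim v_j$, so the graph induced by $L$ is the underlying graph $G$ and the distance $d(i,j)$ appearing in Proposition~\ref{prop: general decay} is the ordinary graph distance. Both $x^\alpha$ and $e^{-tx^\alpha}$ are continuous on $[0,\rho(L)]$, hence the proposition applies; what remains is to identify (or bound from above) the modulus of continuity $\omega$ of each function on $[0,\rho(L)]$ and substitute $\delta = \tfrac{\rho(L)}{2}[d(i,j)-1]^{-1}$.

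For $f(x) = x^\alpha$ with $\alpha \in (0,1)$, subadditivity of $t \mapsto t^\alpha$ on $[0,\infty)$ (i.e. $(u+v)^\alpha \le u^\alpha + v^\alpha$) gives $\abs{x^\alpha - y^\alpha} \le \abs{x-y}^\alpha$ for all $x,y \ge 0$, so $\omega_f(\delta) \le \delta^\alpha$; inserting this into Proposition~\ref{prop: general decay} produces the first inequality of~\eqref{eqn: decay for L^a}. For $g(x) = e^{-tx^\alpha}$ I would combine two elementary estimates: for $a,b \ge 0$ one has $\abs{e^{-a} - e^{-b}} \le 1 - e^{-\abs{a-b}}$ (assume $a \le b$ and factor out $e^{-a} \le 1$), and $\abs{x^\alpha - y^\alpha} \le \abs{x-y}^\alpha$ as above; since $s \mapsto 1 - e^{-ts^\alpha}$ is nondecreasing, this yields $\omega_g(\delta) \le 1 - e^{-t\delta^\alpha}$. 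Substituting $\delta = \tfrac{\rho(L)}{2}[d(i,j)-1]^{-1}$ gives the middle term of the second line, and the final bound follows from $1 - e^{-u} \le u$ for $u \ge 0$.

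I do not expect a genuine obstacle here: the corollary is essentially a specialization of Proposition~\ref{prop: general decay}. The only point requiring a little care is that $x^\alpha$ and $e^{-tx^\alpha}$ are merely H\"older continuous (not Lipschitz) at the origin, since their derivatives blow up there because $\alpha - 1 < 0$ and $L$ always has an eigenvalue at $0$; consequently a mean-value/Lipschitz estimate is unavailable, and it is precisely the subadditivity inequality for $t^\alpha$ that replaces it and that is responsible for the power-law exponent $-\alpha$ in the decay.
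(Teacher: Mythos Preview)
Your proposal is correct and follows essentially the same approach as the paper: both arguments apply Proposition~\ref{prop: general decay} with $[a,b]=[0,\rho(L)]$, use $\omega_f(\delta)=\delta^\alpha$ for $f(x)=x^\alpha$, and obtain $\omega_g(\delta)=1-e^{-t\delta^\alpha}$ for $g(x)=e^{-tx^\alpha}$ via the same factorization $e^{-a}-e^{-b}=e^{-a}(1-e^{-(b-a)})\le 1-e^{-(b-a)}$ combined with the H\"older bound, before concluding with $1-e^{-u}\le u$. Your write-up is in fact more explicit than the paper's, which states the inequality $g(x)-g(y)\le g(0)-g(|x-y|)$ without unpacking it.
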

    \begin{proof}
The first inequality follows immediately from Proposition~\ref{prop:
general decay}, because $f(x) = x^\alpha$ is $\alpha$--H\"older,
with modulus of continuity $\omega_f(x) = x^\alpha$.

The second set of inequalities also follows from Proposition
\ref{prop: general decay}, noticing that if $g(x) =
\exp(-tx^\alpha)$, for $x,y \ge 0$ it holds $g(x) - g(y) \le g(0) -
g(\abs{x-y})$, and thus the modulus of continuity of $g$ is
$\omega_g(x) = 1 - g(x)$; we conclude with the inequality
        \begin{equation*}
        e^{-x} \ge 1 - x, \qquad \forall x \geq 0.
        \end{equation*}
    \end{proof}

    \begin{corollary}
        \label{cor: decay for P_a}
        If $d(i,j) \ge~2$, the off-diagonal entries of $P^{(\alpha)} = I - \bar{L}^{(\alpha)}$ satisfy
        \begin{equation}
        \label{eqn: decay for P_a}
        \abs{(P^{(\alpha)})_{ij}} \le  c \frac{\rho(L)}{2^\alpha\abs{L_{ii}}} \cdot \abs{d(i,j) - 1}^{-\alpha}, \qquad \text{with } c = 1 + \pi^2/2.
        \end{equation}
    \end{corollary}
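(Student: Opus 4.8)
The plan is to derive the estimate directly from Corollary~\ref{cor: decay for L^a}, since Corollary~\ref{cor: decay for P_a} is just a rescaled version of it. Recall that $\bar L^{(\alpha)} = \diag(L^\alpha)^{-1} L^\alpha$, so for $i \ne j$ one has $(\bar L^{(\alpha)})_{ij} = (L^\alpha)_{ij}/(L^\alpha)_{ii}$, and therefore
\begin{equation*}
\abs{(P^{(\alpha)})_{ij}} = \abs{(\bar L^{(\alpha)})_{ij}} = \frac{\abs{(L^\alpha)_{ij}}}{(L^\alpha)_{ii}}\,,
\end{equation*}
where the quotient is well defined because, as noted in the proof of Lemma~\ref{lemma:fractional random walk}, every diagonal entry of $L^\alpha$ is strictly positive (equivalently, $(L^\alpha)_{ii}$ is the fractional degree $d_\alpha$ of node $v_i$ introduced above).

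Next I would insert the first inequality of Corollary~\ref{cor: decay for L^a}, which is applicable precisely because $d(i,j) \ge 2$: it bounds the numerator by $c\,\rho(L)^\alpha 2^{-\alpha}[d(i,j)-1]^{-\alpha}$. Dividing through by $(L^\alpha)_{ii}$ then yields an estimate of the form~\eqref{eqn: decay for P_a}, with the power-law factor $[d(i,j)-1]^{-\alpha}$ coming from $L^\alpha$ and the prefactor depending only on $\rho(L)$, $\alpha$, and the diagonal entry of $L^\alpha$ at node $v_i$.

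There is essentially no obstacle in this argument: it is a one-line consequence of the preceding corollary together with the definition of $P^{(\alpha)}$. The only point that deserves to be stated explicitly is the strict positivity of the diagonal entries of $L^\alpha$, which is what makes the normalization $\bar L^{(\alpha)}$, and hence the stochastic matrix $P^{(\alpha)}$, well posed in the first place.
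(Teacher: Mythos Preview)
Your argument correctly reduces the problem to the bound on $\abs{(L^\alpha)_{ij}}$ from Corollary~\ref{cor: decay for L^a}, but it stops one step short of the statement. Dividing that numerator bound by $(L^\alpha)_{ii}$ only gives
\[
\abs{(P^{(\alpha)})_{ij}} \le c\,\frac{\rho(L)^\alpha}{2^\alpha (L^\alpha)_{ii}}\,[d(i,j)-1]^{-\alpha},
\]
which is \emph{not} the inequality~\eqref{eqn: decay for P_a}: the stated bound has the diagonal entry $L_{ii}$ of the \emph{original} Laplacian in the denominator and $\rho(L)$ (not $\rho(L)^\alpha$) in the numerator. Your description of the prefactor as ``depending only on \ldots\ the diagonal entry of $L^\alpha$ at node $v_i$'' makes this slip explicit.

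The missing ingredient is a lower bound on $(L^\alpha)_{ii}$ in terms of $L_{ii}$. The paper obtains it from the spectral decomposition: since $0 \le \lambda \le \rho(L)$ implies $\lambda^\alpha \ge \rho(L)^{\alpha-1}\lambda$ for $\alpha \in (0,1)$, writing $L^\alpha = \sum_j \lambda_j^\alpha q_j q_j^T$ gives $(L^\alpha)_{ii} \ge \rho(L)^{\alpha-1} L_{ii}$. Plugging this into the display above turns $\rho(L)^\alpha / (L^\alpha)_{ii}$ into $\rho(L)/L_{ii}$ and yields exactly~\eqref{eqn: decay for P_a}. So the proof is not quite a one-liner: the positivity of $(L^\alpha)_{ii}$ is enough to make $P^{(\alpha)}$ well defined, but you need this quantitative lower bound to reach the form of the estimate actually claimed.
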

    \begin{proof}
It is sufficient to obtain a lower bound for the diagonal entries of
$L^\alpha$ and then use Corollary~\ref{cor: decay for L^a}.

        For $\lambda \in \sigma(L)$, we have $\lambda^\alpha \ge \rho(L)^{\alpha-1} \lambda$; by using this fact and the spectral decomposition $L^\alpha = \sum_{j = 1}^{n} \lambda_j^\alpha
        q_j q_j^T$, we get $(L^\alpha)_{ii} \ge \rho(L)^{\alpha-1} L_{ii}$.

    \end{proof}

We conclude this part with an example useful to illustrate the decay
of the entries of the fractional Laplacian.

    \begin{example}
We consider (the largest connected component of) the undirected
graph \texttt{DC} from the
collection~\texttt{users.diag.uniroma1.it/challenge9/data/tiger/},
which represents the road network of the city of Washington, DC.
Having fixed a node $i_0$ near to the center of the geographic
coordinates associated to the nodes of the network, we compare the
entries $(L^\alpha)_{k, i_0}$ with the distances $d(i_0,
k)^{-\alpha}$ for $\alpha  = 0.5$, for all $k$. The results,
summarized in Figure~\ref{fig:DC_decay}, closely match the behavior
proved in Corollary~\ref{cor: decay for L^a}.

        \begin{figure}[htbp]
            \centering
            \includegraphics[width=\columnwidth]{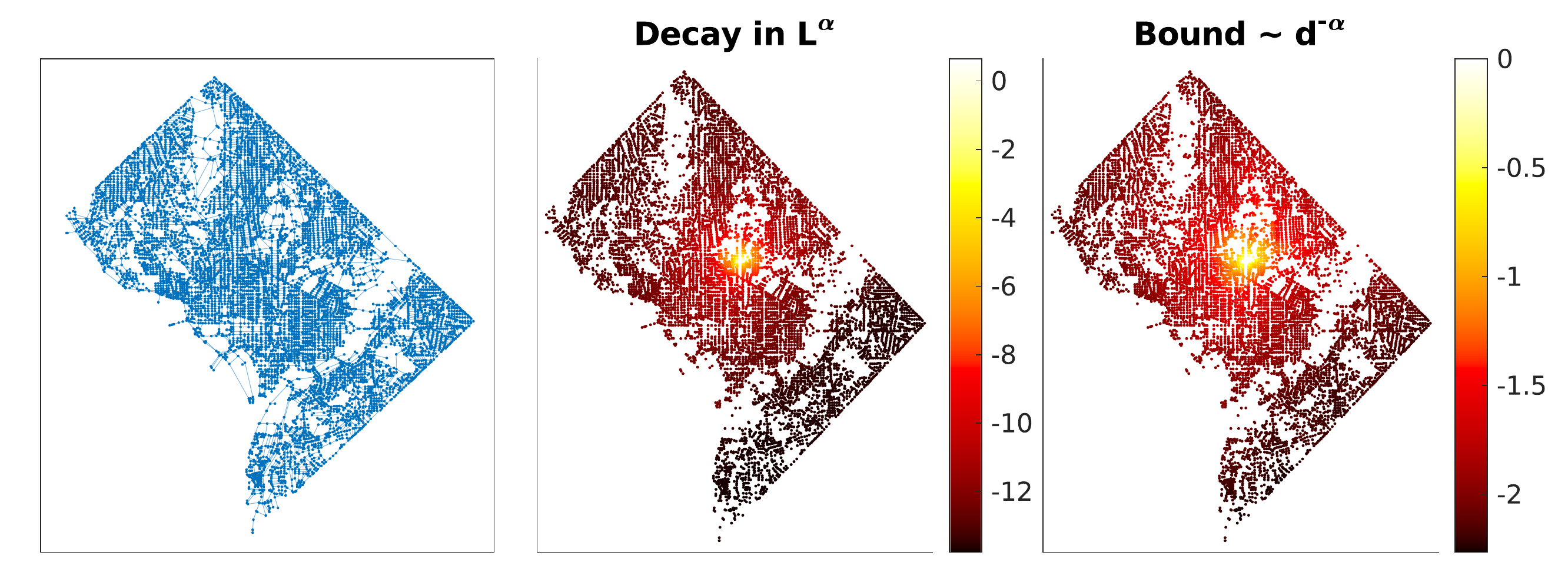}

            \caption{(Color online) Comparison between the entries of $L^\alpha$ and the distances
            between nodes in the graph \texttt{DC}.
Left panel: largest connected component of the graph, with $ n =
9522$ nodes. Central panel: decay in the entries $(L^\alpha)_{k,
i_0}$, for $k = 1, \dots, n$. Right panel: distances $d(i_0,
k)^{-\alpha}$, for $k = 1, \dots, n$. We used $\alpha = 0.5$ and
computed $L^\alpha$ via its
eigendecomposition~\eqref{eq:fractional_laplacian_of_a_graph}; the
scale for the colors is logarithmic.}
            \label{fig:DC_decay}
        \end{figure}
    \end{example}

    \subsection{Directed networks}
    Numerical evidence shows that the decay behavior in the entries of fractional Laplacians is not
    limited to the undirected case, but it can also be observed in directed networks; see
    Section \ref{sec:applications}. However, a generalization to the directed case of the results in Corollary~\ref{cor: decay for L^a} is not straightforward.

    If $A$ is a nonnormal matrix and $f$ is analytic on an open set containing the \textit{numerical range} $W(A)$ of $A$,
    \begin{equation*}
    W(A) = \left\lbrace \frac{\mathbf{x}^H A \mathbf{x}}{\mathbf{x}^H\mathbf{x}} \,:\, \mathbf{x} \in \mathbb{C}^n, \; \mathbf{x} \neq \mathbf{0}  \right\rbrace = \left\lbrace \mathbf{x}^H A \mathbf{x} \,:\, \mathbf{x} \in \mathbb{C}^n, \; \|\mathbf{x}\|_2 = 1 \right\rbrace,
    \end{equation*}
    it is shown in \cite{BB14} that one
    can bound the entries of $f(A)$ using the following result of Crouzeix~\cite{Crouzeix04, Crouzeix07}:
    \begin{equation}\label{eq:crouzeix-magic}
    \exists\, C  \text{ such that }\| f(A) \|_2 \leq C \sup_{w \in W(A)}
    |f(w)|,
    \end{equation}
    where $C$ is a universal constant independent of both $f$ and $A$; currently, the best known value for $C$ is $1+\sqrt{2}$, and it is
    conjectured to be 2.
    Unfortunately, (\ref{eq:crouzeix-magic}) cannot be used in our case
    since $f(x) = x^{\alpha}$, $\alpha \in (0,1)$, is not analytic on
    the negative real axis, and it is easy to find directed graphs
    such that the numerical range $W(L_{\text{out}})$ of the out--degree Laplacian contains part of the
    negative real axis. Indeed,
    \cite[Theorem~1.6.6]{MR2978290} states that if $\lambda$ is an
    eigenvalue of $L_{\text{out}}$ that lies on the boundary of
    $W(L_{\text{out}})$, then the eigenvector associated to it is
    orthogonal to all the {other} eigenvectors. Thus, if
    $W(L_{\text{out}}) \subseteq \mathbb{C}^+$ we have that $0 \in
    \lambda(L_{\text{out}}) \cap \partial W(L_{\text{out}})$, and then
    its eigenvector ${\bf 1}$ is orthogonal to all the other
    eigenvectors of $L_{\text{out}}$. Digraphs in which this does not
    happen are easy to find, and frequently encountered in applications.

    \begin{example} Consider the graph with adjacency matrix

    \vspace{0.1in}

    \begin{minipage}{0.425\textwidth}
        \centering
        \begin{tikzpicture}[>=stealth',shorten >=0pt,auto,node distance=1.8cm,thick,
        dot/.style={circle, fill, outer sep=1.5pt, inner sep=2pt}]
        \node[dot, label=180:{\small\bfseries 1}] (1) {};
        \node[dot, label=0:{\small\bfseries 2}] (2) [right=1.8cm of 1] {};
        \path (1) -- (2) node[midway] (aux) {};
        \node[dot, label=180:{\small\bfseries 3}] (3) [below=1.6cm of aux] {};
        \path[->]
        (1) edge (2)
        (2) edge [bend right] (3)
        (3) edge (1)
        (3) edge [bend right] (2);
        \end{tikzpicture}
    \end{minipage}%
    \begin{minipage}{0.575\textwidth}
        \begin{equation*}
        A = \begin{bmatrix}
        0 & 1 & 0 \\
        0 & 0 & 1 \\
        1 & 1 & 0 \\
        \end{bmatrix},
        \end{equation*}
        \null
        \par\xdef\tpd{\the\prevdepth}
    \end{minipage}

    \vspace{0.1in}

    \noindent whose out--Laplacian $L_{\text{out}}$ has the following
    field of values:

    \vspace{0.1in}

    \begin{minipage}{0.15\textwidth}
        \centering
        $W(L_{\text{out}}) = $
    \end{minipage}
    \begin{minipage}{0.3\textwidth}
        \centering
        \includegraphics[width=\linewidth,keepaspectratio=true]{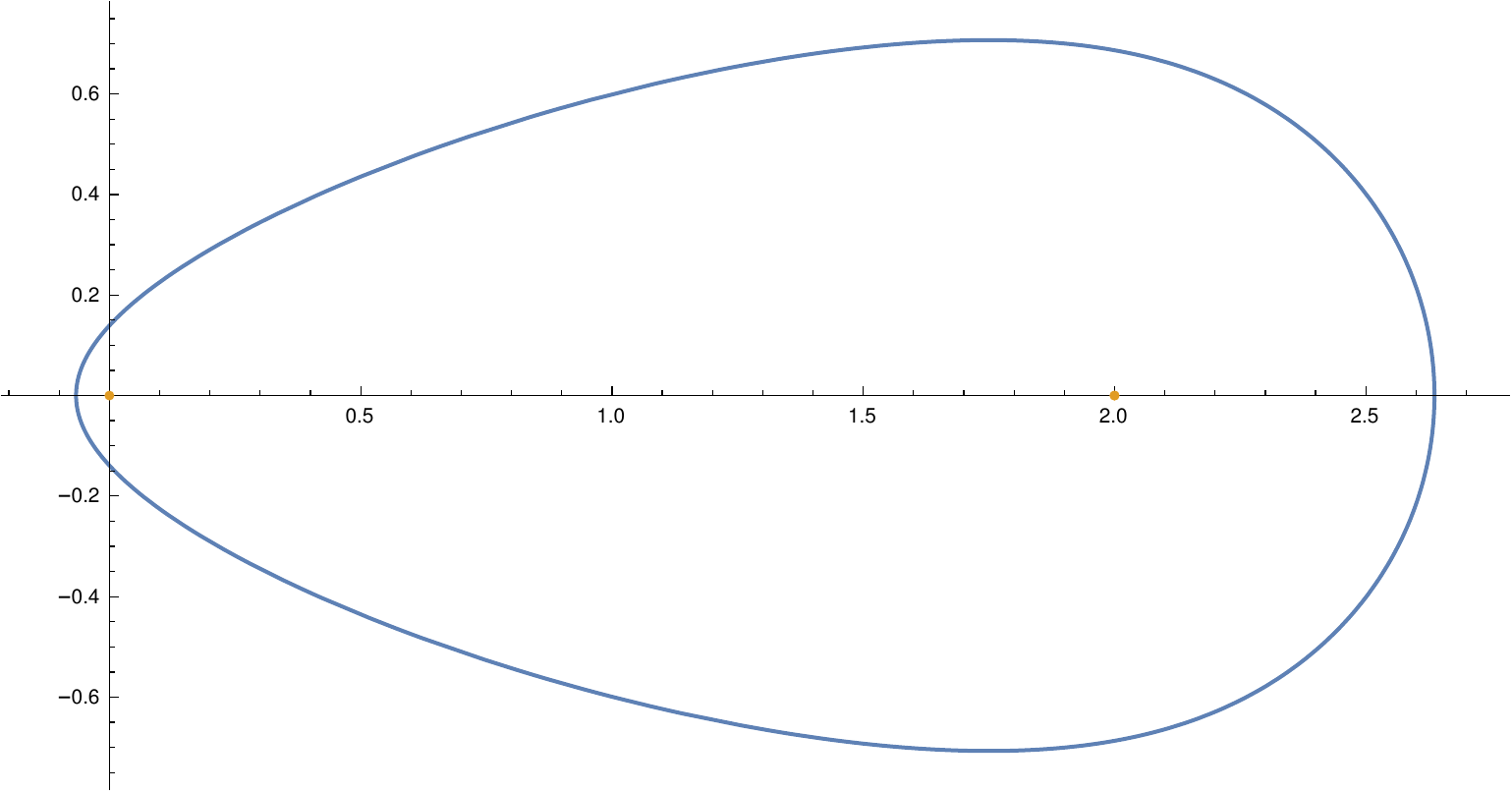}
    \end{minipage}%
    \begin{minipage}{0.55\textwidth}
        \begin{equation*}
        L_{\text{out}} = \begin{bmatrix}
        1 & -1 & 0 \\
        0 & 1 & -1 \\
        -1 & -1 & 2 \\
        \end{bmatrix},
        \end{equation*}
        \null
        \par\xdef\tpd{\the\prevdepth}
    \end{minipage}

    \vspace{0.75em}
    \noindent which includes part of the negative real axis and therefore the
    origin.
    \end{example}

    There are other possible general alternatives to
    (\ref{eq:crouzeix-magic}), which use extensions of the well known
    Dunford--Taylor integral representation of $f$, but attempts to
    bound the norm of terms like $|f(z)|=|z|^{\alpha}$, $\alpha\in
    (0,1)$ inside the contour integral cannot give a finite value and
    are not reported here.

    Nevertheless, there are special cases for which a reasonable bound can be
    provided. First, if the Laplacian matrix is diagonalizable and we can give a
    bound for the spectral condition number of an eigenvector matrix that does not explode with the size of the
    graph, then we can prove a bound for the entries of the fractional Laplacian using an argument similar to the undirected case (Proposition~\ref{prop: general decay}).
    This is completely analogous to the approach taken in~\cite{BB14} in the case of analytic functions of nonsymmetric matrices.
    Of course, now the constant $c$ in the bounds \eqref{eqn: decay for L^a} and \eqref{eqn: decay for P_a} should also include a bound for
    the condition number of an eigenvector matrix diagonalizing the Laplacian $L$.
    Another possibility is to give up the search for general bounds and to look at special cases
    for which we can find explicit (closed form) expressions for the
    entries of $L_{\text{out}}^\alpha$ (and their limit for $n\rightarrow\infty$), and from these
    obtain estimates for the probability of a given transition on the graph. This
    is the case of the \emph{directed cycle} and \emph{path graphs}; see Section
    \ref{sec:random_walks_on_directed_graphs}.

    We found that, for a large enough cycle, the transition
    probabilities  exhibit a power law decay parametrized by $\alpha$, in agreement with the bounds of Section \ref{sec:notes_on_the_decay_undirected}; see Section \ref{sec:random_walks_on_directed_graphs} for
    details. In particular, similar to what happens for undirected
    networks, we show in~Section \ref{sec:applications} that fractional diffusion-based random walks on directed graphs
    result in more efficient navigation of certain complex directed networks
    than using the local ones.

    \section{Superdiffusive processes on infinite graphs}\label{sec:superdiffusive_processes}

    In \cite{Estrada2017307,Estrada2018373} Estrada et al.~introduced a generalization of the diffusion equation on graphs, based on the $k$--path Laplacian, and they proved that the dynamics generated using the Mellin--transformed $k$--path Laplacian are superdiffusive processes on the infinite one-- and two--dimensional lattice graphs.
    In this section we exploit similar techniques to prove that the dynamics generated by the fractional Laplacians $L^\alpha$ are superdiffusive on an infinite one--dimensional graph, both in the undirected and directed case.

Consider a time--dependent probability distribution $u(t)_k$, $k \in
\Z$, such that $u(0)_k = \delta_{0k}$, i.e. the distribution at time
$t=0$ is concentrated in $0$. The mean square displacement (MSD) of
the distribution is defined as
    \begin{equation*}
    \text{MSD} = \langle {\lvert u(t) - u(0) \rvert}^2 \rangle = \sum_{k \in \mathbb{Z}} k^2 u(t)_k.
    \end{equation*}
    We say that a process is \textit{superdiffusive} if it generates probability distributions such that\footnote{We write $f(x) \sim g(x)$ for $x \to x_0$ if and only if $\lim_{x \to x_0} \frac{f(x)}{g(x)} = 1$ for both $x_0 \in \mathbb{R}$, and $x_0 = \pm\infty$.} $\text{MSD} \sim c t^\tau$ with $\tau > 1$ and $c>0$, for $t \to \infty$.
    In order to prove that the fractional diffusion dynamics on the infinite one--dimensional graph are superdiffusive, following the discussion in \cite{Estrada2017307}, we first show that by appropriately rescaling the solution $u(t)$, it converges to a stable probability distribution (Definition~\ref{def:stable_distribution}). Then, we will use some known properties of the limiting distribution to collect information on the behavior of the MSD of $u(t)$.

    \begin{definition}[Stable distribution]\label{def:stable_distribution}
        Let $\alpha \in {(0,2]}$, $\beta \in [-1,1]$, $\gamma >0$, $\delta \in \R$ and
        \begin{equation*}
        \omega(z, \alpha) = \begin{cases}
        -\tan (\alpha \frac{\pi}{2})    & \text{if } \alpha \ne 1, \\
        \frac{2}{\pi} \ln \abs{z} & \text{if } \alpha = 1.
        \end{cases}
        \end{equation*}
        A real random variable $X$ is called \textit{stable} if its characteristic function can be written as
        \begin{equation*}
        \mathbb{E}[e^{iz X}] = \phi(z; \alpha, \beta, \gamma, \delta) = \exp\left[i \delta z - \abs{\gamma z}^\alpha (1 + i \beta \sign(z) \omega(z, \alpha))\right].
        \end{equation*}
        This means that the density of $X$ is given by
        \begin{equation*}
        f(\xi; \alpha, \beta, \gamma, \delta) = \frac{1}{2\pi} \int_{-\infty}^{\infty} e^{-i \xi z} \, \phi(z; \alpha, \beta, \gamma, \delta) \dd z.
        \end{equation*}
    \end{definition}
    In the following, we will only use stable distributions with $\beta \in \{ 0, 1 \}$ and $\delta=0$, so we simplify the general notation to $f_\beta(\xi; \alpha, \gamma) \equiv f(\xi; \alpha, \beta, \gamma, \delta)$.

    \begin{subsection}{Undirected path graph}
        We start by examining the case of
        an \emph{infinite undirected path graph}, i.e. the graph $G = (V, E)$
        whose nodes are $V = \mathbb{Z}$ and whose edges are $E = \{ (k, k\pm 1) : k \in \mathbb{Z} \}$.
        In this case the adjacency and Laplacian matrices correspond respectively to the operators
        \begin{equation*}
        (A u)_k = u_{k-1} + u_{k+1}, \qquad u \in \ell^2(\mathbb{Z}),
        \end{equation*}
        and
        \begin{equation*}
        (Lu)_k = 2u_k - u_{k-1} - u_{k+1}, \qquad u \in \ell^2(\mathbb{Z}).
        \end{equation*}
        For $\alpha \in (0,1)$, we consider the fractional diffusion equation on $G$ with initial condition concentrated on the vertex indexed by $0${, i.e., the bi-infinite vector $e^{(0)}$ with 1 in position 0 and 0 everywhere else,}
        \begin{equation}
        \label{eq:fractional_diffusion_operator_form}
        \begin{cases}
        \displaystyle \ddt u(t) = -L^\alpha u(t), \\
        u(0) = e^{(0)}.
        \end{cases}
        \end{equation}
        As a first step, we find an explicit integral representation of the $k$th component $u(t)_k$ of
        the solution $u(t)$ of \eqref{eq:fractional_diffusion_operator_form}. This can be obtained by using the Fourier
        operator $\mathfrak{F} : \ell^2(\mathbb{Z}) \to L^2(-\pi, \pi)$ and its
        inverse $\mathfrak{F}^{-1} : L^2(-\pi, \pi) \to \ell^2(\mathbb{Z})$,
        \begin{align*}
        (\mathfrak{F} u) \, (\theta) = \frac{1}{\sqrt{2\pi}} \sum_{k \in \mathbb{Z}} e^{ik\theta} u_k, \quad u \in \ell^2(\mathbb{Z}), \qquad
        (\mathfrak{F}^{-1} g)_k = \frac{1}{\sqrt{2\pi}} \int_{-\pi}^{\pi} e^{-ikx} g(x) \dd x, \quad g \in L^2(-\pi, \pi).
        \end{align*}
        \begin{lemma}\label{lem:expression_of_the_solution}
            The solution $u(t)$ to~\eqref{eq:fractional_diffusion_operator_form} is given by
            \begin{equation}\label{eq:solution_of_operator_form_fractional_diffusion}
            u(t)_k = \frac{1}{2\pi} \int_{-\pi}^{\pi} e^{-ikx} \, e^{-t(2-2\cos x)^\alpha} \dd x, \qquad k \in \mathbb{Z}.
            \end{equation}
        \end{lemma}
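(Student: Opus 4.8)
The plan is to diagonalize the operator $L^\alpha$ by means of the Fourier transform $\mathfrak{F}$, which plays here the role that the orthogonal eigenvector matrix $U$ plays in the finite-dimensional spectral decomposition~\eqref{eq:fractional_laplacian_of_a_graph}. First I would record that, since $(Au)_k = u_{k-1} + u_{k+1}$ is a bounded convolution (Laurent) operator on $\ell^2(\Z)$, a direct shift of the summation index shows that $\mathfrak{F}$ intertwines $A$ with multiplication by the symbol $2\cos\theta$; consequently $L = 2I - A$ is intertwined with multiplication by the nonnegative function $\ell(\theta) = 2 - 2\cos\theta$, $\theta \in (-\pi,\pi)$. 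In particular $L$ is a bounded self-adjoint operator with spectrum $[0,4]$, and its functional calculus is carried by $\mathfrak{F}$ into pointwise composition with $\ell$; applied to $x \mapsto x^\alpha$ (well defined and continuous on $[0,4]$) and to $x \mapsto e^{-tx^\alpha}$, this gives that $L^\alpha$ and $e^{-tL^\alpha}$ correspond, respectively, to multiplication by $\ell(\theta)^\alpha$ and by $e^{-t\ell(\theta)^\alpha}$.

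With this in hand, I would transform the Cauchy problem~\eqref{eq:fractional_diffusion_operator_form}. Writing $\hat u(\theta,t) = (\mathfrak{F}u(t))(\theta)$ and using $\mathfrak{F}e^{(0)} \equiv 1/\sqrt{2\pi}$, the equation becomes the decoupled family of scalar ODEs $\partial_t \hat u(\theta,t) = -\ell(\theta)^\alpha \hat u(\theta,t)$ with $\hat u(\theta,0) = 1/\sqrt{2\pi}$, whose solution is $\hat u(\theta,t) = \frac{1}{\sqrt{2\pi}} e^{-t(2-2\cos\theta)^\alpha}$ --- equivalently, $u(t) = e^{-tL^\alpha}e^{(0)}$, which is exactly this multiplier applied to the constant function. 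Since $\ell(\theta)^\alpha \ge 0$, the function $\hat u(\cdot,t)$ is bounded, hence in $L^2(-\pi,\pi)$, so applying $\mathfrak{F}^{-1}$ is legitimate and yields
\begin{equation*}
u(t)_k = \bigl(\mathfrak{F}^{-1}\hat u(\cdot,t)\bigr)_k = \frac{1}{\sqrt{2\pi}}\int_{-\pi}^{\pi} e^{-ikx}\,\hat u(x,t)\dd x = \frac{1}{2\pi}\int_{-\pi}^{\pi} e^{-ikx}\,e^{-t(2-2\cos x)^\alpha}\dd x,
\end{equation*}
which is~\eqref{eq:solution_of_operator_form_fractional_diffusion}.

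The only point that really requires care is the first step: justifying that the fractional power $L^\alpha$, defined spectrally, coincides with the Fourier multiplier by $\ell(\theta)^\alpha$ in this infinite-dimensional setting, i.e. that the matrix-function calculus of Section~\ref{sec:fractional_laplacian_of_a_directed_graph}, stated for finite matrices, is compatible with the functional calculus of the bounded self-adjoint operator $L$ on $\ell^2(\Z)$. This is standard --- $\mathfrak{F}$ is unitary and $L = \mathfrak{F}^{-1} M_\ell \mathfrak{F}$ with $M_\ell$ multiplication by a bounded real symbol, so $f(L) = \mathfrak{F}^{-1} M_{f\circ \ell}\,\mathfrak{F}$ for every continuous $f$ on $[0,4]$ --- but it should be made explicit, since it is what legitimizes both the appearance of $(2-2\cos x)^\alpha$ and the semigroup formula $u(t) = e^{-tL^\alpha}e^{(0)}$. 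Everything else (the index shift that identifies the symbol, solving the scalar ODEs, and the termwise inverse transform) is routine.
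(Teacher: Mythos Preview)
Your argument is correct and follows essentially the same route as the paper: diagonalize $L$ via the Fourier transform $\mathfrak{F}$, identify the symbol $2-2\cos\theta$, pass to $e^{-t(2-2\cos\theta)^\alpha}$ by functional calculus, and invert. The paper writes the solution directly as $u(t)=e^{-tL^\alpha}e^{(0)}=\mathfrak{F}^{-1}(a_\alpha\,\mathfrak{F}e^{(0)})$ rather than going through the transformed scalar ODEs, and it does not pause to justify the infinite-dimensional functional calculus; your extra remark on that point is a welcome addition but does not constitute a different approach.
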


        \begin{proof}
            It holds
            \begin{align*}
            (\mathfrak{F} Au) \,(\theta) &= \frac{1}{\sqrt{2\pi}} \sum_{k \in \mathbb{Z}} e^{ik\theta} (Au)_k
            = \frac{1}{\sqrt{2\pi}} \sum_{k \in \mathbb{Z}} e^{ik \theta} (u_{k-1} + u_{k+1})
            = \frac{1}{\sqrt{2\pi}} \sum_{k \in \mathbb{Z}} (e^{i(k-1)\theta} + e^{i(k+1)\theta}) u_k \\
            &=  (e^{-i\theta} + e^{i\theta}) \,( \mathfrak{F} u) (\theta)
            =  2 \cos\theta\,( \mathfrak{F} u) (\theta),
            \end{align*}
            and thus $(\mathfrak{F} L u) \, (\theta) = \left(2 - 2\cos\theta\right) \mathfrak{F} u (\theta)$. If we define $g = \mathfrak{F} u \in L^2(-\pi, \pi)$, then we have
            \begin{equation*}
            (\mathfrak{F} L \mathfrak{F}^{-1} g) \, (\theta) = (2-2\cos\theta) g(\theta).
            \end{equation*}
            We have therefore proved that $L$ is conjugated to the operator on $L^2(-\pi, \pi)$ that multiplies functions by $a(\theta) = 2-2\cos\theta$. In turn, this implies that $e^{-t L^\alpha}$ is conjugated to the multiplication by $a_\alpha(\theta) = e^{-t a(\theta)^\alpha}$.
            So, using the notation $g^{(0)} = \mathfrak{F} e^{(0)}$, the solution to \eqref{eq:fractional_diffusion_operator_form} can be expressed as
            \begin{align*}
            u(t)_k &= \left(e^{-tL^\alpha} u(0)\right)_k = \left(\mathfrak{F}^{-1} (a_\alpha g^{(0)})\right)_k
            = \frac{1}{\sqrt{2\pi}} \int_{-\pi}^{\pi} e^{-ikx} a_\alpha(x) g^{(0)}(x) \dd x \\
            &= \frac{1}{\sqrt{2\pi}} \int_{-\pi}^{\pi} e^{-ikx} e^{-t (2-2\cos x)^\alpha} ( \mathfrak{F} e^{(0)} ) (x)\dd x
            = \frac{1}{{2\pi}} \int_{-\pi}^{\pi} e^{-ikx} e^{-t(2-2\cos x)^\alpha} \sum_{n\in \mathbb{Z}} e^{inx} \, (e^{(0)})_n \dd x.
            \end{align*}
            Since the components of the initial condition are $(e^{(0)})_n = \delta_{0n}$, the previous expression simplifies to
            \begin{align*}
            u(t)_k &= \frac{1}{2\pi} \int_{-\pi}^{\pi} e^{-ikx} e^{-t(2-2\cos x)^\alpha} \dd x.\qedhere
            \end{align*}
        \end{proof}

        {We mention that Lemma~\ref{lem:expression_of_the_solution} can also be found as \cite[Theorem~1.3 (\textit{ii})]{LizamaRoncal}. We have included a proof in order to keep the discussion self-contained.}

        To prove that a properly scaled version of $u(t)$ converges to a stable distribution for $t \rightarrow \infty$, we need a Lemma from~\cite{Estrada2017307} linking together the expression of the solution~\eqref{eq:solution_of_operator_form_fractional_diffusion} and a stable distribution in Definition~\ref{def:stable_distribution} with $\beta=\delta=0$.
        However, we state it in a slightly more general formulation, which also introduces the asymmetry parameter $\beta \in [-1,1]$ and will be required shortly to deal with the directed case. Both the statement and the proof of the following lemma are based on Lemma~6.1 in \cite{Estrada2017307}.

        \begin{lemma}
            \label{lem:convergence_to_stable_distribution}
            Let $c>0$, $\alpha \in (0,2)$ and $\beta \in [-1,1]$ such that $\alpha \ne 1$ or $\beta = 0$.
            Let $h : [-\pi, \pi] \to \C$ be a continuous function that satisfies
            \begin{equation}
            \label{eqn: lemma stable hypothesis}
            \begin{alignedat}{2}
            &\text{Re}\,( h(x) ) > 0 && \text{for }x \in [-\pi, \pi] \setminus \{ 0 \},\\
            &h(x) \sim c \abs{x}^\alpha \left(\textstyle 1 - i \beta \sign(x) \tan(\alpha \frac{\pi}{2})\right) , \qquad &&\text{for } x \to 0.
            \end{alignedat}
            \end{equation}
            Then
            \begin{align}
            \label{eqn: lemma stable part-1}
            t^{1/\alpha} \frac{1}{2\pi} \int_{-\pi}^{\pi} e^{-i t^{1/\alpha} \xi x} \, e^{-t h(x)} \dd x \; &\to \; \frac{1}{2\pi} \int_{-\infty}^{\infty} e^{-i \xi z}\, e^{-c\abs{z}^\alpha \left(1-i \beta \sign(z) \tan(\alpha \frac{\pi}{2}) \right) } \dd z  \\
            \nonumber
            &= f_\beta(\xi; \alpha, c^{1/\alpha}),
            \end{align}
            uniformly in $\xi \in \R$ as $t \to \infty$. In other words,
            \begin{equation}
            \label{eqn: lemma stable part-2}
            \frac{1}{2\pi} \int_{-\pi}^{\pi} e^{-i t^{1/\alpha} \xi x} \, e^{-t h(x)} \dd x \; = \; t^{-1/\alpha} f_\beta(\xi; \alpha, c^{1/\alpha}) + o(t^{-1/\alpha}),
            \end{equation}
            uniformly in $\xi \in \R$ as $t \to \infty$.
        \end{lemma}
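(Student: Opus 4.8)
The plan is to follow the argument of Lemma~6.1 in~\cite{Estrada2017307}, the only additional work being the bookkeeping needed to accommodate the asymmetry parameter $\beta$. First I would perform the change of variable $z = t^{1/\alpha} x$ in the integral on the left-hand side of~\eqref{eqn: lemma stable part-1}, turning it into
\begin{equation*}
\frac{1}{2\pi} \int_{-\pi t^{1/\alpha}}^{\pi t^{1/\alpha}} e^{-i \xi z}\, e^{-t\, h(t^{-1/\alpha} z)} \dd z .
\end{equation*}
For a fixed $z \in \R$ we have $t^{-1/\alpha} z \to 0$ as $t \to \infty$, so the second relation in~\eqref{eqn: lemma stable hypothesis} gives $h(t^{-1/\alpha}z) = c\, t^{-1} \abs{z}^\alpha \left(1 - i\beta\sign(z)\tan(\alpha \frac{\pi}{2})\right)(1 + o(1))$, whence $t\, h(t^{-1/\alpha}z) \to c\abs{z}^\alpha\left(1 - i\beta\sign(z)\tan(\alpha\frac{\pi}{2})\right)$ and the integrand converges pointwise to $e^{-i\xi z}\, e^{-c\abs{z}^\alpha\left(1-i\beta\sign(z)\tan(\alpha\frac{\pi}{2})\right)}$.

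The crux is to produce a dominating function independent of both $t$ and $\xi$. Since the real part of $c\abs{x}^\alpha\left(1 - i\beta\sign(x)\tan(\alpha\frac{\pi}{2})\right)$ equals $c\abs{x}^\alpha$, the asymptotic hypothesis forces $\Real(h(x)) / (c\abs{x}^\alpha) \to 1$ as $x \to 0$, so $\Real(h(x)) \ge \frac{c}{2}\abs{x}^\alpha$ for $0 < \abs{x} \le \delta$ with a suitable $\delta > 0$; on the compact set $\delta \le \abs{x} \le \pi$, continuity together with the first hypothesis in~\eqref{eqn: lemma stable hypothesis} yields $\Real(h(x)) \ge m > 0 \ge (m\pi^{-\alpha})\abs{x}^\alpha$. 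With $c' = \min\{c/2,\, m\pi^{-\alpha}\} > 0$ we thus have $\Real(h(x)) \ge c'\abs{x}^\alpha$ for every $x \in [-\pi,\pi]$, and therefore, whenever $\abs{z} \le \pi t^{1/\alpha}$,
\begin{equation*}
\abs{e^{-t\, h(t^{-1/\alpha}z)}} = e^{-t\,\Real(h(t^{-1/\alpha}z))} \le e^{-t c' \abs{t^{-1/\alpha}z}^\alpha} = e^{-c'\abs{z}^\alpha},
\end{equation*}
and $e^{-c'\abs{z}^\alpha}$ is integrable over $\R$ since $\alpha > 0$.

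I would then conclude by dominated convergence. Extending the truncated integrand by zero outside $[-\pi t^{1/\alpha}, \pi t^{1/\alpha}]$, it is dominated by the fixed majorant $e^{-c'\abs{z}^\alpha}$ and converges pointwise to $e^{-i\xi z}\, e^{-c\abs{z}^\alpha\left(1-i\beta\sign(z)\tan(\alpha\frac{\pi}{2})\right)}$, which proves~\eqref{eqn: lemma stable part-1} for each fixed $\xi$. For the uniformity in $\xi$, observe that the modulus of the difference between the (extended, truncated) integrand and its pointwise limit is $\xi$-independent, because $\abs{e^{-i\xi z}} = 1$, and is bounded by $2\,e^{-c'\abs{z}^\alpha}$; hence the integral over $\R$ of this modulus tends to $0$ by dominated convergence, and that same quantity bounds the approximation error uniformly in $\xi$. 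Finally, comparing with Definition~\ref{def:stable_distribution} with $\delta = 0$ and $\gamma = c^{1/\alpha}$, and using $\omega(z,\alpha) = -\tan(\alpha\frac{\pi}{2})$ when $\alpha \ne 1$ (while $\alpha = 1$ forces $\beta = 0$, so both expressions collapse to $e^{-c\abs{z}}$), identifies the limit with $f_\beta(\xi;\alpha,c^{1/\alpha})$; dividing through by $t^{1/\alpha}$ then yields the equivalent reformulation~\eqref{eqn: lemma stable part-2}.

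The main obstacle is the construction of this dominating function: one must extract the lower bound $\Real(h(x)) \ge c'\abs{x}^\alpha$ near the origin from the prescribed complex asymptotics — taking care that only the real part, not the full complex relation, governs the modulus of $e^{-t h}$ — and then patch it with a compactness argument on the remainder of $[-\pi,\pi]$. Once this majorant is available, the passage to the limit is routine, and the uniformity in $\xi$ comes essentially for free, since the error estimate does not involve $\xi$.
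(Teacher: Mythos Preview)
Your proposal is correct and follows essentially the same approach as the paper: the change of variable $z=t^{1/\alpha}x$, the pointwise convergence from the asymptotic hypothesis, the key lower bound $\Real h(x)\ge \lambda\abs{x}^\alpha$ obtained by combining the asymptotics near~$0$ with compactness on $[\delta,\pi]$, and dominated convergence with uniformity in~$\xi$ coming from $\abs{e^{-i\xi z}}=1$. The only cosmetic difference is that you handle the growing domain by extending the integrand by zero and applying DCT once, whereas the paper first splits off the tail $\R\setminus[-\pi t^{1/\alpha},\pi t^{1/\alpha}]$ via the triangle inequality and then applies DCT to the remaining piece; the two organizations are equivalent.
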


        \begin{proof}
            For any $\xi \in \R$ and $t>0$, using the substitution $z = t^{1/\alpha} x$, we have
            \begin{equation*}
            t^{1/\alpha} \frac{1}{2\pi} \int_{-\pi}^{\pi} e^{-i t^{1/\alpha} \xi x} e^{-t h(x)} \dd x = \frac{1}{2\pi} \int_{-\pi t^{1/\alpha}}^{\pi t^{1/\alpha}} e^{-i \xi z} e^{-t h(t^{-1/\alpha}z)} \dd z.
            \end{equation*}
            By substituting this in \eqref{eqn: lemma stable part-1} and using the triangle inequality, we get
            \begin{gather*}
            \left| \,  t^{1/\alpha} \frac{1}{2 \pi}\int_{-\pi}^{\pi} e^{-i t^{1/\alpha} \xi x} e^{-t h(x)} \dd x  - \frac{1}{2\pi} \int_{-\infty}^{\infty} e^{-i \xi z} e^{-c\abs{z}^\alpha \left( 1 - i\beta \sign(z)\tan(\alpha \frac{\pi}{2}) \right)} \dd z \,\right| \le \\
            \le \frac{1}{2\pi}  \int_{-\pi t^{1/\alpha}}^{\pi t^{1/\alpha}} \left| e^{-th(t^{-1/\alpha}z)} - e^{-c\abs{z}^\alpha \left( 1 - i\beta \sign(z)\tan(\alpha \frac{\pi}{2}) \right)} \right| \dd z + \frac{1}{2\pi} \int_{\R \setminus [-\pi t^{1/\alpha}, \pi t^{1/\alpha}]} e^{-c\abs{z}^\alpha} \dd z.
            \end{gather*}
            It is easy to see that the second term converges to $0$ as $t \to \infty$, so we only focus on the first term. Because of the hypothesis on the asymptotic behavior of $h(x)$, we have that
            \begin{equation*}
            \frac{t h(t^{-1/\alpha} z)}{ c \abs{z}^\alpha \left(1 - i \beta \sign(z) \tan(\alpha \frac{\pi}{2})\right)} \to 1 \qquad \text{as } t \to \infty.
            \end{equation*}
            This implies that, for any fixed $z \in \R$, the integrand in the first term goes to $0$ as $t \to \infty$. In order to conclude that the integral itself goes to $0$, by the Dominated Convergence Theorem it is sufficient to show that the integrand is bounded by an integrable function.

            Using the continuity of $h$ in conjunction with \eqref{eqn: lemma stable hypothesis}, it is not hard to see that there exists $\lambda > 0$ such that $\Real{h(x)} \ge \lambda \abs{x}^\alpha$. This implies that the integrand is bounded for all $t>0$ by the integrable function $f(z) = e^{-\lambda \abs{z}^\alpha} + e^{-c\abs{z}^\alpha}$, concluding the proof of \eqref{eqn: lemma stable part-1}. Note that the convergence is uniform in $\xi \in \R$, since the bounds we have obtained are independent of~$\xi$.
        \end{proof}

        {
        In order to have a cleaner statement for the next proposition, we allow the indices to be noninteger in the identity \eqref{eq:solution_of_operator_form_fractional_diffusion}; in other words, we write
        \begin{equation*}
            u(t)_z = \frac{1}{2\pi} \int_{-\pi}^{\pi} e^{-i z x} \, e^{-t(2-2\cos x)^\alpha} \dd x, \qquad \forall \, z \in \R.
        \end{equation*}
        }

        \begin{proposition}
            {By scaling} the solution $u(t)$ of~\eqref{eq:fractional_diffusion_operator_form} {with respect to $t$, it} converges to a stable probability distribution of the form $f_0(\xi; 2\alpha,1)$ for $t \to \infty$.
            {
            Specifically, for all $\xi \in \R$ it holds that
            \begin{equation*}
            t^{1/2\alpha}(u(t))_{t^{1/2\alpha} \xi} \to f_0(\xi; 2\alpha,1), \qquad \text{as } t \to \infty.
            \end{equation*}}
        \end{proposition}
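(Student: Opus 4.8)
The plan is to apply Lemma~\ref{lem:convergence_to_stable_distribution} directly to the integral representation of the solution provided by Lemma~\ref{lem:expression_of_the_solution}. Concretely, with the noninteger-index convention introduced just before the statement, we have
\begin{equation*}
t^{1/(2\alpha)}(u(t))_{t^{1/(2\alpha)}\xi} = t^{1/(2\alpha)}\,\frac{1}{2\pi}\int_{-\pi}^{\pi} e^{-i t^{1/(2\alpha)}\xi x}\, e^{-t(2-2\cos x)^\alpha}\dd x,
\end{equation*}
which is exactly the quantity appearing on the left-hand side of \eqref{eqn: lemma stable part-1} once we take, in the notation of that lemma, the function $h(x) = (2-2\cos x)^\alpha$, the exponent $2\alpha$ in place of the lemma's stability index, the constant $c = 1$, and the asymmetry parameter $\beta = 0$. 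Note that $2\alpha \in (0,2)$ because $\alpha \in (0,1)$, so the range hypothesis of the lemma is met, and since $\beta = 0$ the condition ``$\alpha \ne 1$ or $\beta = 0$'' is satisfied automatically.

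Next I would verify the two hypotheses \eqref{eqn: lemma stable hypothesis} on $h$. For the positivity of the real part: on $[-\pi,\pi]\setminus\{0\}$ one has $\cos x < 1$, hence $2 - 2\cos x \in (0,4]$ and $h(x) = (2-2\cos x)^\alpha$ is a (strictly positive) real number, so in particular $\Real(h(x)) > 0$; also $h$ is clearly continuous on $[-\pi,\pi]$. For the asymptotics at the origin, the Taylor expansion $2-2\cos x = x^2 - \tfrac{x^4}{12} + \cdots = x^2\bigl(1 + O(x^2)\bigr)$ gives
\begin{equation*}
h(x) = \lvert x\rvert^{2\alpha}\bigl(1 + O(x^2)\bigr)^\alpha \sim \lvert x\rvert^{2\alpha}, \qquad x \to 0,
\end{equation*}
which is precisely the required form $c\lvert x\rvert^{2\alpha}\bigl(1 - i\beta\sign(x)\tan(\alpha\pi)\bigr)$ with $c = 1$ and $\beta = 0$ (the asymmetry factor being identically $1$).

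With the hypotheses checked, Lemma~\ref{lem:convergence_to_stable_distribution} yields
\begin{equation*}
t^{1/(2\alpha)}\,\frac{1}{2\pi}\int_{-\pi}^{\pi} e^{-i t^{1/(2\alpha)}\xi x}\, e^{-t(2-2\cos x)^\alpha}\dd x \;\longrightarrow\; f_0\bigl(\xi; 2\alpha, 1\bigr), \qquad t \to \infty,
\end{equation*}
uniformly in $\xi \in \R$, and the left-hand side equals $t^{1/(2\alpha)}(u(t))_{t^{1/(2\alpha)}\xi}$ by the extended form of \eqref{eq:solution_of_operator_form_fractional_diffusion}; this is the claimed convergence. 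There is essentially no deep obstacle here — the one thing to get right is bookkeeping: the symbol $\alpha$ in Lemma~\ref{lem:convergence_to_stable_distribution} must be instantiated as $2\alpha$, because the quadratic (rather than linear) vanishing of $2-2\cos x$ at the origin is exactly what doubles the stability index of the limiting law, and correspondingly the scaling exponent $1/\alpha$ of the lemma becomes $1/(2\alpha)$. It is also worth recalling, for the phrasing ``stable probability distribution'' in the statement, that $f_0(\,\cdot\,;2\alpha,1)$ is a bona fide probability density, being the symmetric ($\beta = 0$) stable density of index $2\alpha \in (0,2)$.
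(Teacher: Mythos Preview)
Your proof is correct and follows essentially the same approach as the paper: identify $h(x)=(2-2\cos x)^\alpha$, note that $h(x)\sim\lvert x\rvert^{2\alpha}$ as $x\to 0$, and invoke Lemma~\ref{lem:convergence_to_stable_distribution} with stability index $2\alpha$, $c=1$, and $\beta=0$. If anything, you are more thorough than the paper, which skips the explicit verification of the positivity hypothesis $\Real(h(x))>0$ on $[-\pi,\pi]\setminus\{0\}$.
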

        \begin{proof}
            From the expression of the solution in~\eqref{eq:solution_of_operator_form_fractional_diffusion},
            we can write it in the form
            \begin{equation*}
            u(t)_{{z}} = \int_{-\pi}^{\pi} e^{-i {z}x} e^{-th(x)} \dd x,
            \end{equation*}
            where
            $h(x) =
            a(x)^\alpha = (2-2\cos x)^\alpha = (x^2 + o(x^3))^\alpha =
            \abs{x}^{2\alpha} + o(x^{2\alpha+1})$ for $x \to 0$.
            Therefore, using Lemma~\ref{lem:convergence_to_stable_distribution} (with $\beta=0$) and the substitution $\xi = t^{-1/2\alpha} {z}$, we obtain
            \begin{equation*}
            t^{1/2\alpha} \left(u(t)\right)_{t^{1/2\alpha} \xi} \to f_0(\xi; 2\alpha, 1), \qquad t \to \infty,
            \end{equation*}
            or, equivalently,
            \begin{equation}
            \label{eqn: stable distribution limit}
            u(t)_{t^{1/2\alpha}\xi} = t^{-1/2\alpha} f_0(\xi;2\alpha,1) + o(t^{-1/2\alpha}) \qquad \text{for } t \to \infty.\qedhere
            \end{equation}
        \end{proof}

        We complete our analysis of the (behavior of the) solution for $t
        \to \infty$ by showing that the $\text{MSD} \sim c t^\tau$ with
        $\tau > 1$ and $c>0$, i.e., that we have superdiffusion. Observe now
        that, in our situation, the limiting stable distribution has an
        infinite variance since $2\alpha \in (0,2)$, thus we cannot compute
        the MSD of the solution directly. Let us look instead at the
        asymptotic behavior of the square of the full width at half maximum
        (FWHM) of the solution, since $\text{FWHM}^2$ gives a lower bound
        for the MSD;
        we recall that the FWHM can be defined as
        \begin{equation*}
        \text{FWHM} = \max \left\lbrace \abs{b-a} : f(\xi) \ge {\textstyle\frac{1}{2}} \max_{x\in \R} f(x), \; \forall\, \xi \in [a,b] \right\rbrace.
        \end{equation*}
        \begin{theorem}
            \label{thm:undirected_superdiffusion}
            The fractional diffusion process on the infinite undirected path graph is superdiffusive for all $\alpha \in (0,1)$. In particular, the mean square displacement of the solution satisfies $\text{MSD} \ge \tilde c t^{1/\alpha}$, as $t \to \infty$.
        \end{theorem}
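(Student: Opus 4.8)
The plan is to transfer the spreading rate of the limiting stable law obtained in~\eqref{eqn: stable distribution limit} first to the full width at half maximum of $u(t)$ and then to the mean square displacement. Write $f_\star(\xi) := f_0(\xi; 2\alpha, 1)$ for the limiting density; by~\eqref{eqn: stable distribution limit} the rescaled profiles $v_t(\xi) := t^{1/2\alpha}\, u(t)_{t^{1/2\alpha}\xi}$ converge to $f_\star$ uniformly in $\xi \in \R$. Since the stability index $2\alpha$ lies in $(0,2)$, $f_\star$ is a symmetric stable density: it is continuous, even, strictly positive, unimodal with maximum $f_\star(0) > 0$ at the origin, and has the power--law tail $f_\star(\xi) \sim C\abs{\xi}^{-1-2\alpha}$ as $\abs{\xi} \to \infty$. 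In particular its superlevel set $\{\xi \in \R : f_\star(\xi) \ge \tfrac12 f_\star(0)\}$ is a bounded symmetric interval $[-W/2,\,W/2]$ with $W = \text{FWHM}(f_\star) \in (0,\infty)$, and $\int_\R \xi^2 f_\star(\xi)\, \dd \xi = +\infty$ because $2\alpha \le 2$.

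\emph{Main step: the FWHM of $u(t)$ grows like $t^{1/2\alpha}$.} I would first prove that $\text{FWHM}(u(t)) = W\, t^{1/2\alpha}\,(1 + o(1))$ as $t \to \infty$. Uniform convergence of $v_t$ to $f_\star$ gives $\max_\xi v_t(\xi) \to f_\star(0)$, i.e. $t^{1/2\alpha}\max_k u(t)_k \to f_\star(0)$, and it forces the superlevel sets $\{\xi : v_t(\xi) \ge \tfrac12\max_\eta v_t(\eta)\}$ — which are intervals, by unimodality of the limit — to converge to $\{f_\star \ge \tfrac12 f_\star(0)\} = [-W/2,\,W/2]$; un--rescaling by $t^{1/2\alpha}$ yields the stated growth, hence $\text{FWHM}(u(t))^2 = W^2 t^{1/\alpha}(1+o(1))$. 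Using the inequality $\text{FWHM}^2 \le \text{MSD}$ recalled before the statement, we obtain $\text{MSD} \ge \tilde c\, t^{1/\alpha}$ for $t$ large, e.g. with $\tilde c = W^2/2 > 0$; since $\alpha \in (0,1)$ we have $\tau := 1/\alpha > 1$, so the process is superdiffusive.

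\emph{Expected obstacle, and a more robust route.} The delicate point is the FWHM scaling: passing from uniform convergence of the rescaled profile to convergence of its width requires ruling out spurious far components of the superlevel set of $v_t$, or the near--maximum being attained where the integer grid is coarse, and this is controlled by uniform convergence together with the strict unimodality of $f_\star$. This step, and the $\text{FWHM}^2 \le \text{MSD}$ bound, can both be avoided by a direct truncated second moment estimate: for fixed $R > 0$, combining the uniform estimate~\eqref{eqn: stable distribution limit} with Riemann--sum convergence gives
\[
\text{MSD} \;\ge\; \sum_{\abs{k} \le R\, t^{1/2\alpha}} k^2\, u(t)_k \;=\; t^{1/\alpha}\!\left( \int_{-R}^{R} \xi^2 f_\star(\xi)\, \dd \xi + o(1) \right), \qquad t \to \infty;
\]
choosing $R$ large enough that $\int_{-R}^{R} \xi^2 f_\star \ge 2\tilde c$ — possible because the full integral diverges — gives $\text{MSD} \ge \tilde c\, t^{1/\alpha}$ for $t$ large, and in fact $t^{-1/\alpha}\text{MSD} \to +\infty$. (A finer look at the Fourier integral~\eqref{eq:solution_of_operator_form_fractional_diffusion} gives $u(t)_k \sim C\, t\, \abs{k}^{-1-2\alpha}$ as $\abs{k}\to\infty$, so the MSD is in fact $+\infty$ for every $t > 0$ when $\alpha \le 1$; the genuine content of the theorem is the $t^{1/2\alpha}$ spreading rate of the width, which both arguments make rigorous.)
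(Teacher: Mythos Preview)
Your main argument is essentially the paper's proof: from the uniform convergence~\eqref{eqn: stable distribution limit} deduce $\text{FWHM}(u(t)) \sim W\, t^{1/2\alpha}$, square, and invoke $\text{FWHM}^2 \le \text{MSD}$. The paper carries out exactly this chain (writing $W = 2\xi_0$) and, like you, treats the passage ``uniform convergence of profile $\Rightarrow$ convergence of FWHM'' as essentially immediate; you are somewhat more explicit about why unimodality of $f_\star$ is what makes this step work. Your alternative truncated second--moment estimate is not in the paper and is a genuinely cleaner route, since it bypasses both the FWHM--convergence subtlety and the $\text{FWHM}^2 \le \text{MSD}$ inequality, while also yielding the sharper conclusion $t^{-1/\alpha}\,\text{MSD} \to +\infty$.
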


        \begin{proof}
            Let $\xi_0 \in \R$ be such that $f_0(\xi_0; 2\alpha, 1) = \frac{1}{2} f_0(0; 2\alpha,1)$, so that the full width at half maximum of the distribution $f$ is $\text{FWHM}\{f(\xi)\} = 2 \xi_0$. Recalling equation \eqref{eqn: stable distribution limit} and using the fact that the FWHM is invariant under vertical scalings, we have that
            \begin{align*}
            \text{FWHM}\{ u(t)_k \} &= t^{1/2\alpha} \, \text{FWHM} \{u(t)_{t^{1/2\alpha} k}\} \\
            &= t^{1/2\alpha} \, \text{FWHM} \{ t^{-1/2\alpha} f_0(k; 2\alpha,1) + o(t^{-1/2\alpha}) \} \\
            &= t^{1/2\alpha} \, \text{FWHM} \{ f_0(k;2\alpha,1) + o(1) \} \\
            &\sim t^{1/2\alpha} \, 2\xi_0, \quad \qquad t \to \infty.
            \end{align*}
            Therefore $\text{FWHM}^2 \sim 2 \xi_0 \, t^{1/\alpha}$ and, since $\alpha \in (0,1)$,
            we have that $\text{FWHM}^2 \sim c \, t^\tau$ with $\tau>1$. Thus we also have $\text{MSD} \ge \tilde c \, t^\tau$,
            i.e., the process is superdiffusive.
        \end{proof}

    \end{subsection}

    \begin{subsection}{Directed path graph}

In this part, we perform the same analysis for the fractional
diffusion equation on the \textit{infinite directed path graph},
i.e. the graph $G=(V,E)$ with nodes $V = \Z$ and edges $E = \{
(k,k+1) : k \in \Z \}$. Similarly to the undirected case, the
solution converges to a stable distribution when appropriately
scaled, and we can use this fact to describe the behavior of the MSD
of the solution for $t \to \infty$.

We first observe that on a directed graph the diffusion equation
uses the transpose of the nonsymmetric Laplacian $L_{\text{out}}$
instead of $L_{\text{out}}$.
{Indeed, the solution for the dynamics induced by $L_\text{out}^T$ remains a probability vector at all times since
$L_\text{out}^T \boldsymbol{1} = \boldsymbol{0}$; on the other hand, this property is not preserved by the dynamics induced by $L_\text{out}$, since in general
$\boldsymbol{1}^T L_\text{out} \ne \boldsymbol{0}^T$.} Using $L$ to denote the transpose of
the out-degree Laplacian of $G$ for simplicity of notation, the
fractional diffusion equation on a directed graph is
        \begin{equation}
        \label{eqn: directed fractional diffusion}
        \begin{cases}
        \displaystyle \ddt u(t) = - L^\alpha u(t), \qquad \alpha \in (0,1), \\
        u(0) = e^{(0)},
        \end{cases}
        \end{equation}
        where the initial condition is the one with all the mass concentrated on the vertex $0$, i.e. $(e^{(0)})_k = \delta_{0k}$. The (transposes of the) adjacency and Laplacian matrices correspond respectively to:
        \begin{equation*}
        \begin{aligned}
        (A u)_k &= u_{k-1}, &\quad u \in \ell^2(\Z), \\
        (L u)_k &= u_k - u_{k-1}, &\quad u \in \ell^2(\Z).
        \end{aligned}
        \end{equation*}
        \begin{lemma}\label{lem:solution_for_directed_graph_path}
            The solution $u(t)$ to~\eqref{eqn: directed fractional diffusion} is given by
            \begin{equation}
            \label{eqn: directed solution}
            u(t)_k = \frac{1}{2\pi} \int_{-\pi}^{\pi} e^{-ikx} e^{-t(1-e^{ix})^\alpha} \dd x.
            \end{equation}
        \end{lemma}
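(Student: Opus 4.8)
The plan is to follow the argument of Lemma~\ref{lem:expression_of_the_solution} almost verbatim, the only genuinely new ingredient being the choice of branch in the definition of $L^\alpha$. First I would pass to the Fourier side: for $u\in\ell^2(\Z)$,
\begin{equation*}
(\mathfrak{F}Au)(\theta)=\frac{1}{\sqrt{2\pi}}\sum_{k\in\Z}e^{ik\theta}u_{k-1}=\frac{1}{\sqrt{2\pi}}\sum_{k\in\Z}e^{i(k+1)\theta}u_{k}=e^{i\theta}(\mathfrak{F}u)(\theta),
\end{equation*}
so that, with $L=I-A$, the operator $\mathfrak{F}L\mathfrak{F}^{-1}$ is multiplication by $a(\theta)=1-e^{i\theta}$ on $L^2(-\pi,\pi)$; consequently $L^\alpha$ is conjugated to multiplication by $a(\theta)^\alpha$ and $e^{-tL^\alpha}$ to multiplication by $e^{-t\,a(\theta)^\alpha}$.

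The point that requires care --- and which is trivial in the undirected case, where $a(\theta)=2-2\cos\theta$ is real and nonnegative --- is that $a(\theta)$ is now genuinely complex, so one must check that $z\mapsto z^\alpha$ with the branch fixed in Section~\ref{sec:fractional_laplacian_of_a_directed_graph} (cut along $(-\infty,0]$) is applicable. Writing $a(\theta)=(1-\cos\theta)-i\sin\theta$ one sees that $\Real a(\theta)=1-\cos\theta\ge 0$, vanishing only at $\theta=0$; in fact $a$ parametrizes the circle $|z-1|=1$, so $\arg a(\theta)\in(-\tfrac{\pi}{2},\tfrac{\pi}{2})$ for $\theta\ne 0$ and $|a(\theta)|=2|\sin(\theta/2)|\le 2$. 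Hence $a(\theta)^\alpha$ is well defined, continuous on $[-\pi,\pi]$ with value $0$ at $\theta=0$, bounded by $2^\alpha$, and satisfies $\Real a(\theta)^\alpha\ge 0$ since $\arg a(\theta)^\alpha\in(-\tfrac{\alpha\pi}{2},\tfrac{\alpha\pi}{2})$; in particular $|e^{-t\,a(\theta)^\alpha}|\le 1$, so $e^{-tL^\alpha}$ is a bounded (contraction) operator on $\ell^2(\Z)$, in agreement with Theorem~\ref{pro:wellposednessofalfapower}.

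With this established, the computation closes exactly as before. Setting $g^{(0)}=\mathfrak{F}e^{(0)}$, which is the constant function $\theta\mapsto 1/\sqrt{2\pi}$, we get
\begin{equation*}
u(t)_k=\bigl(e^{-tL^\alpha}e^{(0)}\bigr)_k=\bigl(\mathfrak{F}^{-1}(e^{-t\,a(\cdot)^\alpha}g^{(0)})\bigr)_k=\frac{1}{2\pi}\int_{-\pi}^{\pi}e^{-ikx}\,e^{-t(1-e^{ix})^\alpha}\dd x,
\end{equation*}
which is the asserted formula; as a sanity check, differentiating under the integral sign (legitimate since the integrand and its $t$-derivative are bounded on $[-\pi,\pi]$) recovers $\ddt u(t)_k=-(L^\alpha u(t))_k$, and at $t=0$ one has $u(0)_k=\frac{1}{2\pi}\int_{-\pi}^{\pi}e^{-ikx}\dd x=\delta_{0k}$.

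The main obstacle, then, is not the Fourier bookkeeping (identical to the undirected case) but the geometric verification that $a(\theta)=1-e^{i\theta}$ never meets the branch cut of $z^\alpha$ and that $a(\theta)^\alpha$ remains in the closed right half-plane; once this observation is made, the rest is routine. One should also recall, as noted just before the statement, that the $L$ appearing here is the transpose of the out-degree Laplacian $L_{\text{out}}$ of the directed path, so that $L\boldsymbol{1}=\boldsymbol{0}$ and $u(t)$ remains a probability vector for all $t$.
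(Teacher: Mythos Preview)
Your proof is correct and follows essentially the same Fourier-conjugation argument as the paper's proof of Lemma~\ref{lem:solution_for_directed_graph_path}; the paper's version is in fact terser, omitting the branch-cut verification for $a(\theta)^\alpha$ and the sanity checks that you supply. Your additional paragraph confirming that $a(\theta)=1-e^{i\theta}$ lies on the circle $|z-1|=1$ (hence avoids the cut of $z^\alpha$ and keeps $a(\theta)^\alpha$ in the closed right half-plane) is a genuine improvement in rigor over the paper, which passes directly from ``$L$ is conjugated to multiplication by $a(\theta)$'' to ``$e^{-tL^\alpha}$ is conjugated to multiplication by $e^{-t a(\theta)^\alpha}$'' without comment.
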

        \begin{proof}
            It holds
            \begin{align*}
            (\mathfrak{F} Au) \,(\theta) &= \frac{1}{\sqrt{2\pi}} \sum_{k \in \Z} e^{ik\theta} (Au)_k = \frac{1}{\sqrt{2\pi}} \sum_{k \in \Z} e^{ik \theta} u_{k-1} \\
            &= \frac{1}{\sqrt{2\pi}} \sum_{k \in \Z} e^{i(k+1)\theta} u_k =  e^{i\theta} \,( \mathfrak{F} u) (\theta).
            \end{align*}
            Therefore we get $(\mathfrak{F} L u) \, (\theta) = \left(1 - e^{i\theta} \right) \mathfrak{F} u (\theta)$. If we define $g = \mathfrak{F} u \in L^2(-\pi, \pi)$, we have
            \begin{equation*}
            (\mathfrak{F} L \mathfrak{F}^{-1} g) \, (\theta) = (1 - e^{i\theta}) g(\theta).
            \end{equation*}
            So $L$ is conjugated to the operator on $L^2(-\pi, \pi)$ that multiplies functions by $a(\theta) = 1 - e^{i\theta}$, and this implies that $e^{-t L^\alpha}$ is conjugated to the multiplication by $a_\alpha(\theta) = e^{-t a(\theta)^\alpha}$. Using the notation $g^{(0)} = \mathfrak{F} e^{(0)}$, we can write the solution to \eqref{eqn: directed fractional diffusion} explicitly in the form
            \begin{align*}
            \nonumber
            u(t)_k &= \left(e^{-tL^\alpha} e^{(0)}\right)_k = \left(\mathfrak{F}^{-1} (a_\alpha g^{(0)})\right)_k = \frac{1}{\sqrt{2\pi}} \int_{-\pi}^{\pi} e^{-ikx} a_\alpha(x) g^{(0)}(x) \dd x \\
            &= \frac{1}{\sqrt{2\pi}} \int_{-\pi}^{\pi} e^{-ikx} e^{-t (1-e^{ix})^\alpha} \mathfrak{F} e^{(0)} (x)\dd x = \frac{1}{{2\pi}} \int_{-\pi}^{\pi} e^{-ikx} e^{-t(1-e^{ix})^\alpha} \sum_{n\in \Z} e^{inx}(e^{(0)})_n \dd x \\
            &= \frac{1}{2\pi} \int_{-\pi}^{\pi} e^{-ikx} e^{-t(1-e^{ix})^\alpha} \dd x.
            \end{align*}
        \end{proof}

        {The result in Lemma~\ref{lem:solution_for_directed_graph_path} is a particular instance of a question with a long history concerning the ``non-integer orders of summability'' in the Ces\`{a}ro sense; see, e.g., the seminal paper by Chapman~\cite[Parts~III, and IV]{Chapman}. The question of the convergence of $-L^\alpha u$ for general sequences of complex numbers have been addressed in~\cite[Theorem~1]{Kuttner}. Thus, although the expression in~\eqref{eqn: directed solution} was already known, see the discussion in~\cite[Section~1]{Abadias}, we decided to give it here explicitly and with full details for the sake of keeping the discussion self-contained.}

        Note that as $x \to 0$ we have $(1 - e^{ix})^\alpha \sim (-i x)^\alpha \sim \abs{x}^\alpha \left(\cos(\alpha \frac{\pi}{2}) - i \sign(x) \sin(\alpha \frac{\pi}{2})\right)$.

        We can now use Lemma~\ref{lem:convergence_to_stable_distribution} to prove that the solution $u(t)$ converges to a stable distribution if appropriately scaled.
        {
        Similar to what we did in the undirected case, for ease of notation we expand identity \eqref{eqn: directed solution} to also include noninteger indices; that is, we write
        \begin{equation*}
            u(t)_z = \frac{1}{2\pi} \int_{-\pi}^{\pi} e^{-izx} e^{-t(1-e^{ix})^\alpha} \dd x, \qquad \forall \, z \in \R.
        \end{equation*}
        }

        \begin{proposition}
            {By scaling the} solution $u(t)$ of \eqref{eqn: directed fractional diffusion} {with respect to $t$, it} converges to a stable probability distribution of the form $f_1(\xi; \alpha, c)$ for $t \to \infty$, {where $c = \cos (\alpha \frac{\pi}{2})$}. {Specifically, for all $\xi \in \R$ it holds that
            \begin{equation*}
            t^{1/\alpha}(u(t))_{t^{1/\alpha}\xi} \to f_1(\xi; \alpha, c), \qquad \text{as } t \to \infty.
            \end{equation*}}
        \end{proposition}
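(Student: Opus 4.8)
The plan is to reproduce, \emph{mutatis mutandis}, the argument used for the undirected path graph, now with the symbol $h(x) = (1-e^{ix})^\alpha$ in place of $(2-2\cos x)^\alpha$. Starting from the explicit integral representation of the solution provided by Lemma~\ref{lem:solution_for_directed_graph_path}, namely
\[
u(t)_z = \frac{1}{2\pi}\int_{-\pi}^{\pi} e^{-izx}\,e^{-t h(x)}\dd x, \qquad z\in\R,
\]
the statement reduces to checking that $h$ satisfies the hypotheses \eqref{eqn: lemma stable hypothesis} of Lemma~\ref{lem:convergence_to_stable_distribution} with $\beta=1$ and $c=\cos(\alpha\frac{\pi}{2})$, and then invoking that lemma followed by a rescaling.

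First I would record that $h$ is continuous on $[-\pi,\pi]$: away from $x=0$ the point $1-e^{ix}$ is nonzero and, as is checked next, never lies on the branch cut of $z\mapsto z^\alpha$, while continuity at $x=0$ follows from $1-e^{ix}\to 0$ together with $\alpha>0$. To verify the first condition in \eqref{eqn: lemma stable hypothesis}, I would use the elementary identity $1-e^{ix}=2\sin(x/2)\,e^{i(x-\pi)/2}$ for $x\in(0,\pi]$ (and its conjugate for $x<0$), which shows that $\arg(1-e^{ix})\in(-\frac{\pi}{2},\frac{\pi}{2})$ for every $x\in[-\pi,\pi]\setminus\{0\}$; since $\alpha\in(0,1)$, this gives $\arg h(x)=\alpha\arg(1-e^{ix})\in(-\frac{\alpha\pi}{2},\frac{\alpha\pi}{2})\subset(-\frac{\pi}{2},\frac{\pi}{2})$, hence $\Real h(x)>0$ for $x\neq 0$. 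The second condition is precisely the asymptotic expansion recorded just above the statement, $h(x)\sim(-ix)^\alpha=\abs{x}^\alpha(\cos(\alpha\frac{\pi}{2})-i\sign(x)\sin(\alpha\frac{\pi}{2}))=c\,\abs{x}^\alpha(1-i\sign(x)\tan(\alpha\frac{\pi}{2}))$ as $x\to 0$, with $c=\cos(\alpha\frac{\pi}{2})>0$ and $\beta=1$; moreover $\alpha\in(0,1)$ gives $\alpha\neq 1$, so the compatibility requirement of Lemma~\ref{lem:convergence_to_stable_distribution} ($\alpha\neq 1$ or $\beta=0$) is met.

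It then remains to apply Lemma~\ref{lem:convergence_to_stable_distribution} and substitute $z=t^{1/\alpha}\xi$ in the displayed formula for $u(t)_z$, which yields $t^{1/\alpha}(u(t))_{t^{1/\alpha}\xi}\to f_1(\xi;\alpha,c)$ uniformly in $\xi\in\R$ as $t\to\infty$, the scale parameter being the one identified from the asymptotics of $h$ through the lemma. That the limit is a genuine probability density — in fact one supported on $[0,\infty)$ — is consistent with $u(t)$ being a probability vector for every $t$, a fact already noted for the $L_{\text{out}}^T$ dynamics, and with the one-directional nature of the walk on the directed path.

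The computational work is essentially all absorbed into Lemma~\ref{lem:convergence_to_stable_distribution}, so the only genuinely new point — and the step I would write out carefully — is the global control of $\arg(1-e^{ix})$ on $[-\pi,\pi]$, which simultaneously delivers the continuity of $h$ and the positivity of its real part. It is also worth stressing that the correct scaling exponent here is $t^{1/\alpha}$ rather than $t^{1/2\alpha}$ as in the undirected case, and this is forced by $\abs{1-e^{ix}}\sim\abs{x}$ near $x=0$, as opposed to $2-2\cos x\sim x^2$.
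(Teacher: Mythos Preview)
Your proposal is correct and follows essentially the same route as the paper: identify $h(x)=(1-e^{ix})^\alpha$, check the hypotheses of Lemma~\ref{lem:convergence_to_stable_distribution} with $\beta=1$ and $c=\cos(\alpha\frac{\pi}{2})$, and apply the lemma after the substitution $z=t^{1/\alpha}\xi$. You in fact go beyond the paper's own proof by explicitly verifying continuity of $h$ and positivity of $\Real h(x)$ via the identity $1-e^{ix}=2\sin(x/2)\,e^{i(x-\pi)/2}$, which the paper takes for granted; note only that by Lemma~\ref{lem:convergence_to_stable_distribution} the scale parameter of the limiting stable law is $c^{1/\alpha}=\cos(\alpha\frac{\pi}{2})^{1/\alpha}$, as the paper's proof itself records.
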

        \begin{proof}
            In the expression for the solution \eqref{eqn: directed solution} we have
            \begin{equation*}
            h(x) = (1 - e^{ix})^\alpha \sim \textstyle\cos({\alpha} \frac{\pi}{2})\abs{x}^\alpha \left( 1 - i \sign(x) \tan(\alpha \frac{\pi}{2}) \right), \qquad x \to 0.
            \end{equation*}
            Using Lemma \ref{lem:convergence_to_stable_distribution} and introducing for simplicity of notation $c = \cos({\alpha} \frac{\pi}{2})^{1/\alpha}$, we get that for all $\xi \in \R$,
            \begin{equation*}
            t^{1/\alpha} \left(u(t)\right)_{t^{1/\alpha} \xi} \to f_1(\xi; \alpha, c), \qquad t \to \infty,
            \end{equation*}
            or equivalently
            \begin{equation}
            \label{eqn: stable distribution limit 2}
            u(t)_{t^{1/\alpha}\xi} = t^{-1/\alpha} f_1(\xi; \alpha, c) + o(t^{-1/\alpha}), \qquad t \to \infty.
            \end{equation}
        \end{proof}

        As in the undirected case, the limiting stable distribution has an infinite variance since $\alpha < 2$, so we cannot compute the MSD of the solution directly, and we instead examine the behavior of the square of the FWHM of the solution.
        \begin{theorem}
            \label{thm:directed_superdiffusion}
            The full width at half maximum of the solution of the fractional diffusion process on the infinite directed path graph satisfies $\text{FWHM}^2 \sim \tilde c t^{2/\alpha}$, as $t \to \infty$.
        \end{theorem}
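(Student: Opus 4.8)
The plan is to follow the proof of Theorem~\ref{thm:undirected_superdiffusion} almost verbatim, replacing the symmetric limiting density $f_0(\cdot\,;2\alpha,1)$ by the totally skewed stable density $f_1(\cdot\,;\alpha,c)$ supplied by the preceding proposition. I would start from the rescaled convergence \eqref{eqn: stable distribution limit 2} (which holds uniformly in $\xi$, by Lemma~\ref{lem:convergence_to_stable_distribution}), rewritten as $t^{1/\alpha}(u(t))_{t^{1/\alpha}\xi} = f_1(\xi;\alpha,c) + o(1)$ as $t\to\infty$. Since the full width at half maximum is invariant under vertical (multiplicative) rescaling and scales linearly under a horizontal dilation, this gives
\begin{equation*}
\text{FWHM}\{u(t)_k\} = t^{1/\alpha}\,\text{FWHM}\{u(t)_{t^{1/\alpha}k}\} = t^{1/\alpha}\,\text{FWHM}\{f_1(k;\alpha,c) + o(1)\},
\end{equation*}
exactly as in the undirected case. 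Writing $w_0 := \text{FWHM}\{f_1(\cdot\,;\alpha,c)\}$, it then suffices to show that $\text{FWHM}\{f_1(k;\alpha,c) + o(1)\}\to w_0\in(0,\infty)$ as $t\to\infty$, since this yields $\text{FWHM}\{u(t)_k\}\sim w_0\,t^{1/\alpha}$, hence $\text{FWHM}^2\sim w_0^2\,t^{2/\alpha}=:\tilde c\,t^{2/\alpha}$, which is the claim.

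The feature that distinguishes the directed case is the geometry of the limiting density. For $\alpha\in(0,1)$ and $\beta=1$, the stable density $f_1(\cdot\,;\alpha,c)$ of Definition~\ref{def:stable_distribution} is no longer symmetric (it is in fact one-sided, being supported on a half-line), but it is still continuous and, by the classical unimodality theorem for stable laws, it has a unique mode $\xi_m$ with $M:=f_1(\xi_m;\alpha,c)=\max_\xi f_1(\xi;\alpha,c)>0$. Hence every superlevel set $S_\ell := \{\xi\in\R : f_1(\xi;\alpha,c)\ge\ell\}$ with $\ell\in(0,M)$ is a nonempty bounded interval, its length $|S_\ell|$ depends continuously on $\ell$ (the density has no flat parts, being real-analytic on the interior of its support), and $w_0=|S_{M/2}|\in(0,\infty)$ is well defined. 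This unimodality is what plays the role of the ``symmetric, maximum at the origin'' structure exploited in the undirected argument, where one simply takes $\text{FWHM}\{f_0\} = 2\xi_0$.

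To pass from $f_1$ to $f_1+o(1)$ I would use a soft squeezing argument that avoids any transversality assumption on $f_1'$. Set $g_t(\xi) := t^{1/\alpha}u(t)_{t^{1/\alpha}\xi}$ and $\varepsilon(t):=\norm{g_t - f_1(\cdot\,;\alpha,c)}_\infty\to 0$. A uniform perturbation of size $\varepsilon(t)$ moves the maximum of $g_t$ by at most $\varepsilon(t)$ and all of its values by at most $\varepsilon(t)$, so for $t$ large the half-maximum superlevel set of $g_t$ is trapped between $S_{M/2+\delta(t)}$ and $S_{M/2-\delta(t)}$ for some $\delta(t)\to 0$; consequently $\text{FWHM}\{g_t\}$ is trapped between $|S_{M/2+\delta(t)}|$ and $|S_{M/2-\delta(t)}|$, both of which converge to $|S_{M/2}|=w_0$ by the continuity of $\ell\mapsto|S_\ell|$. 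This proves $\text{FWHM}\{f_1(k;\alpha,c)+o(1)\}\to w_0$ and completes the argument. The one genuine obstacle is making this last step rigorous — controlling the FWHM functional under the uniform $o(1)$ perturbation — and it is precisely the unimodality (and absence of flat parts) of stable densities that makes it go through cleanly; everything else is a transcription of the proof of Theorem~\ref{thm:undirected_superdiffusion}, with the stable index $2\alpha$ replaced by $\alpha$, so that the scaling exponent becomes $1/\alpha$ instead of $1/(2\alpha)$ and $\text{FWHM}^2$ grows like $t^{2/\alpha}$.
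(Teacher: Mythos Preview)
Your proposal is correct and follows essentially the same route as the paper's proof: rescale via the convergence \eqref{eqn: stable distribution limit 2} to the stable density $f_1(\cdot\,;\alpha,c)$, use the horizontal/vertical scaling behavior of the FWHM, and read off $\text{FWHM}\sim w_0\,t^{1/\alpha}$. The paper is in fact terser than you are---it simply asserts $\text{FWHM}\{f_1+o(1)\}\to\text{FWHM}\{f_1\}$ without justification---so your squeezing argument via unimodality and continuity of the superlevel-set lengths is, if anything, a more careful treatment of the same step.
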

        \begin{proof}
            Let $\xi_0 \in \R$ be such that $f_1(\xi_0; \alpha, c) = \frac{1}{2} f_1(0; \alpha, c)$, so that the full width at half maximum of the distribution $f_1$ is $\text{FWHM}\{f_1(\xi)\} = \xi_0$ (note that the density $f_1$ is nonsymmetric and identically $0$ for $\xi < 0$).
            Recalling equation \eqref{eqn: stable distribution limit 2} and using the fact that the FWHM is invariant under vertical scalings, we have
            \begin{align*}
            \text{FWHM}\{ u(t)_k \} &= t^{1/\alpha} \, \text{FWHM} \{u(t)_{t^{1/\alpha} k}\} \\
            &= t^{1/\alpha} \, \text{FWHM} \{ t^{-1/\alpha} f_1(k; \alpha,c) + o(t^{-1/\alpha}) \} \\
            &= t^{1/\alpha} \, \text{FWHM} \{ f_1(k;\alpha, c) + o(1) \} \\
            &\sim t^{1/\alpha} \, \xi_0, \quad \qquad t \to \infty.
            \end{align*}
            Therefore we obtain $\text{FWHM}^2 \sim  \xi_0^2 \, t^{2/\alpha}$.
        \end{proof}

        Note that, in contrast to Theorem~\ref{thm:undirected_superdiffusion}, with Theorem~\ref{thm:directed_superdiffusion} we have proved that the fractional diffusion dynamics on the infinite directed path graph is ``superdiffusive'' for all $\alpha \in (0,2)$; in particular, this holds also for classical diffusion, $\alpha = 1$. This behavior seems at first sight confusing, but it can be explained by observing that the
        interpretation of~\eqref{eqn: directed fractional diffusion} as describing a diffusion process is not appropriate. Indeed, the probability distribution is not really subjected to a diffusion process, since it is always ``pushed'' in the same direction in the graph; in other words, this process is more similar to a fractionalization of advection (or transport) than of diffusion. This can also be observed by comparing the definitions of the Laplacians of the undirected and directed path graphs: while the former one corresponds to a centered discretization of the second derivative in space (\textit{diffusion}), the latter one corresponds to a forward discretization of the first derivative in space (\textit{advection}).

        In conclusion, we have proved that the solution to the fractional ``diffusion'' dynamics~\eqref{eqn: directed fractional diffusion} on the directed path graph expands faster than the classical dynamics, similarly to what we proved in the undirected case; however, we cannot directly compare the directed case with the undirected one, since they can be respectively interpreted as advection and diffusion, and thus they have different time scales.
    \end{subsection}

    \section{Closed form expressions for two simple cases}
    \label{sec:random_walks_on_directed_graphs}

    Having defined the fractional $\alpha$th power of the matrix $L_{\text{out}}$, we consider the normalized
    version of $\bar{L}_{\text{out}}^{(\alpha)}$ with entries
    $(L_{\text{out}}^\alpha)_{i,j}/(L_{\text{out}}^\alpha)_{i,i}$. It
    can then be exploited to generate the discrete time dynamics of a random walker on
    a directed graph by considering the transition matrix
    $P_{\text{out}}^{(\alpha)} = I - \bar{L}_{\text{out}}^{(\alpha)}$.
    As in the symmetric case discussed
    in~\cite{PhysRevE90032809} and in Lemma~\ref{lemma:fractional random walk}, this matrix is a row stochastic
    matrix, and the standard transition matrix for the Laplacian
    $L_{\text{out}}$ is recovered as $\alpha \rightarrow 1$.

    To completely describe the
    behavior of the random walker in a fully analytical setting we consider two test cases,
    the directed path $\mathcal{P}_n$, and the directed cycle graph $\mathcal{C}_n$.

    The \textit{directed path} $\mathcal{P}_n$ is the graph with adjacency matrix $A = (a_{i,j})$
    with $a_{i,i+1} = 1$, $i=1,\ldots,n-1$, and whose outdegree Laplacian $L_{\text{out}}$
    is
    \begin{equation*}
    L_{\text{out}} = \begin{bmatrix}
    1 & -1 \\
    &  1 & -1 \\
    &    &  \ddots & \ddots \\
    &    &         & 1 & -1 \\
    &    &         &   &    0
    \end{bmatrix}.
    \end{equation*}
    This is a nonsymmetric, nondiagonalizable matrix, thus we cannot apply
    decomposition~\eqref{eq:fractional_laplacian_of_a_graph}, and we need
    to use Definition~\ref{def:matrix_function_as_jordan_form}. Therefore, we first
    need to compute the Jordan canonical form of $L_{\text{out}} = Z J Z^{-1}$, that reads as
    \begin{equation*}
    Z = \begin{bmatrix}
    1 & -1 &  &  \\
    1 &  & 1 &  \\
    \vdots &  &  & \ddots \\
    1 &  &  &  &  & (-1)^{n-1} \\
    1 &  &  &  &  &  0
    \end{bmatrix}, \quad  J = \begin{bmatrix}
    0 &  &    \\
    & 1 & 1  \\
    & & \ddots & \ddots \\
    & &  & \ddots & 1  \\
    & &  & &   1 \\
    \end{bmatrix}.
    \end{equation*}
    Thus, the resulting matrix function can be expressed by computing
    \begin{equation*}
    J^{\alpha} = \begin{bmatrix}
    0 & 0 & \cdots  & 0  \\
    & \binom{\alpha}{0} & \cdots  & \binom{\alpha}{n-1}  \\
    & & \ddots & \vdots \\
    &  & & \binom{\alpha}{0} \\
    \end{bmatrix}, \quad \binom{\alpha}{k} = \frac{\alpha\cdot\ldots\cdot (\alpha-k+1)}{k!},
    \end{equation*}
    and by expressing $L_{\text{out}}^\alpha = Z J^\alpha Z^{-1}$. So for $h \geq 1$ and $k < n$ we can express its $(h,k)$ element as
    \begin{equation*}
    (L_{\text{out}}^\alpha)_{h,k} = \begin{cases}
    0,                              & \text{ if } k < h \text{ or } k = h = n, \\
    -1,                             & \text{ if }(h,k) = (n-1,n), \\
    (-1)^{h+k} \binom{\alpha}{k-h}, & \text{ if } 1 \le h \le k \le n-1.
    \end{cases}
    \end{equation*}
    Therefore, the probability $p_{h\to k}^{(\alpha)}$ of a transition $h \to k$
    on the directed path graph is given by
    \begin{equation*}
    p_{h \to k}^{(\alpha)} = \begin{cases}
    0, & h = n, \\
    \delta_{h,k} - \frac{(L_{\text{out}}^\alpha)_{h,k}}{(L_{\text{out}}^\alpha)_{h,h}} = \delta_{h,k} - (L_{\text{out}}^\alpha)_{h,k}, &\text{otherwise}.
    \end{cases}
    \end{equation*}
    If we let the size of the graph $n$ grow to infinity, and consider the decay of the transition probability for large values of $k > h$ we observe that
    {\begin{equation*}
    p_{h \to k}^{(\alpha)} =  - \frac{\Gamma(k-h-\alpha)}{\Gamma(-\alpha)\Gamma(k-h+1)}
\sim  \frac{\Gamma (\alpha +1)\sin(\pi \alpha)  }{\pi }k^{-\alpha -1}, \quad \text{ since } \frac{\Gamma(x+\alpha)}{\Gamma(x+\beta)} \sim x^{a-b} \text{ as } x \rightarrow +\infty,
    \end{equation*}}
    i.e., a polynomial decay parameterized by $\alpha$. Note that the associated chain has an absorbing state (the last vertex), which is always reached. Therefore, the effect of the nonlocality is reflected by the fact that we have a higher probability of transitioning to a far away node without completely exploring the network. In Figure~\ref{fig:path_markov_simulation}, we observe the simulated behavior for 10 steps on a directed path with $n=20$ nodes, always starting from the first one.
    \begin{figure}[htbp]
        \centering
        \includegraphics[width=0.8\columnwidth]{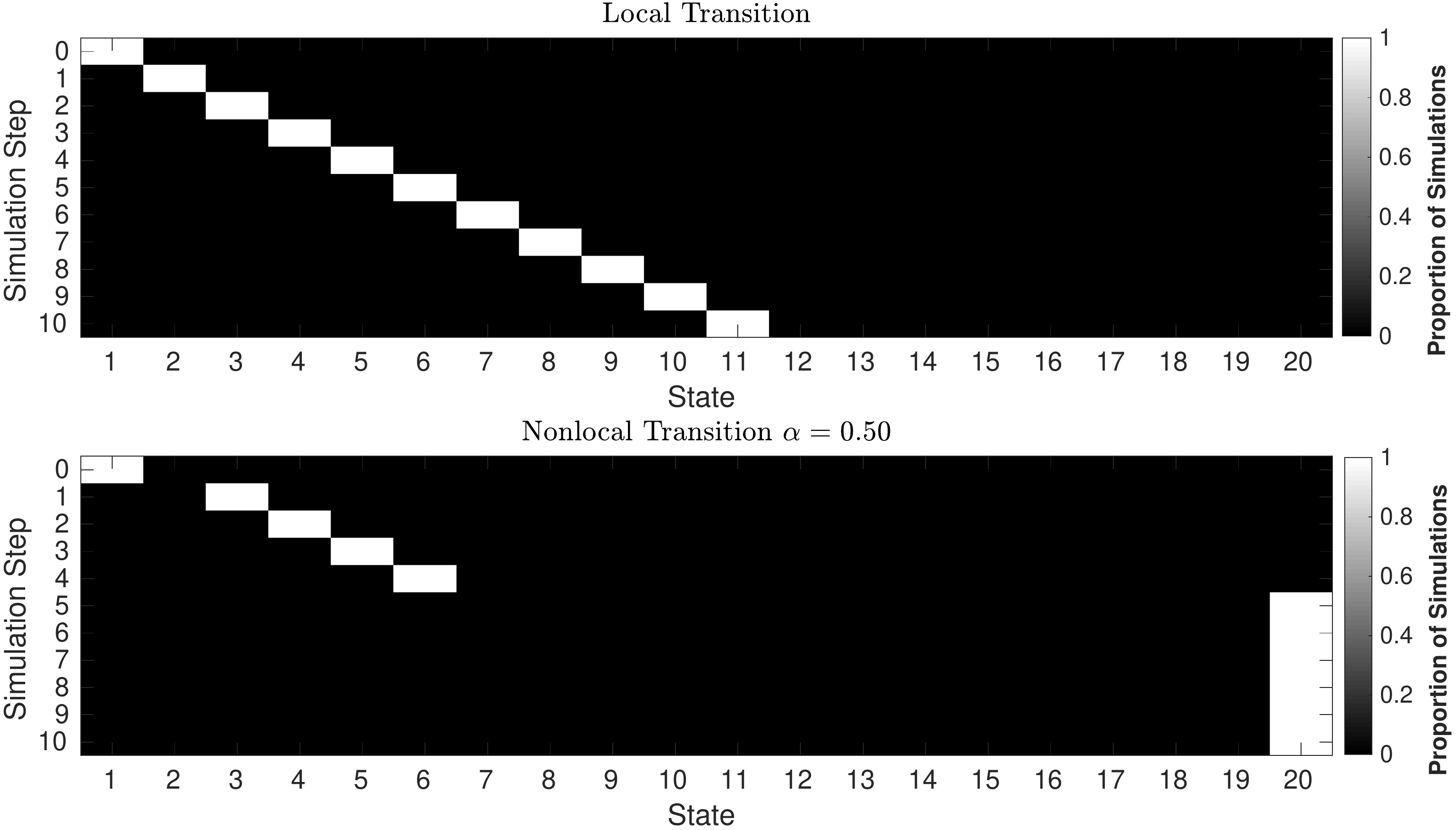}
        \caption{Simulation of $10$ steps of the local and nonlocal ($\alpha=0.5$) Markov chains on the directed path with $n=20$ nodes. In both cases we start from the first node of the chain, when $\alpha=0.5$ we reach the absorbing state in $5$ steps, which is exactly the value predicted by~\eqref{eq:number_of_step_to_absorption}.}
        \label{fig:path_markov_simulation}
    \end{figure}
    Moreover, decreasing the value of $\alpha$ resolves in faster absorption. To compute the average number of steps needed to reach the absorbing state starting from the first node, we partition the matrix $L^{\alpha}_\text{out}$ into the block form
    \begin{equation*}
    L^{\alpha} = \begin{bmatrix}
    I - Q & \mathbf{r} \\
    \mathbf{0} & 0
    \end{bmatrix},
    \end{equation*}
    to extract the inverse of the fundamental matrix $I-Q$. Then we can compute the expected number of steps $n_{\text{step}}$~\cite[Theorem 3.3.5]{MR0115196} as $n_{\text{step}} = \lceil(I-Q)^{-1} \mathbf{1})_{1}\rceil$.  By reusing the computation done for $J^{\alpha}$, it is easy to prove that $(I-Q)^{-1}$ is the upper triangular Toeplitz matrix with first row $(\mathbf{t})_\ell = t_\ell = (-1)^{\ell-1} \binom{-\alpha}{\ell-1}$, $\ell=1,\ldots,n-1$. Therefore, the expected number of steps needed to reach the absorbing state starting from the first node is
    \begin{equation}\label{eq:number_of_step_to_absorption}
    \begin{split}
    n_{\text{step}} = &\; \left\lceil\sum_{\ell=1}^{n-1}(-1)^{\ell-1} \binom{-\alpha}{\ell-1}\right\rceil =  \left\lceil\frac{(-1)^{n+1} (n-1) \binom{-\alpha }{n-1}}{\alpha }\right\rceil,
    \end{split}
    \end{equation}
    which is a monotonically increasing function with respect to $\alpha$.

    For the case of the \textit{directed cycle} graph $\mathcal{C}_n$, i.e., of the graph with nodes
    $V = \{1,\ldots,n\}$ and directed edges  $E = \{(j,j+1), \,
    j=1,\ldots,n-1 \} \cup \{(n,1)\}$, the out-degree Laplacian is then
    the circulant matrix of size $n$ with first row
    $\mathbf{c}=[1,-1,0,\ldots,0]$, i.e., $(L_{\text{out}})_{i,j} =
    c_{j-i \text{(mod }n\text{)}}$. This is a normal matrix which
    is diagonalized by the discrete Fourier matrix of size $n$, $F_n$,
    and whose eigenvalues are given by $\lambda_\ell(L_{\text{out}}) = 1 -
    \exp(-{2 \ell i \pi}/{n})$. By
    using~\eqref{eq:fractional_laplacian_of_a_graph} for this particular
    out-degree Laplacian we find
    \begin{equation*}
    (L_{\text{out}}^\alpha)_{h,k} = \frac{1}{n} \sum_{\ell = 1}^{n}\left( 1 - e^{\frac{-2 \ell \pi \cdot i}{n}} \right)^\alpha e^{\frac{2 \ell \pi\cdot i}{n} (h-k)}.
    \end{equation*}
    Taking the limit for $n \rightarrow + \infty$ we can then express
    the~$(h,k)$ element of $L_{\text{out}}^\alpha$ as
    \begin{equation*}
    \begin{split}
    (L_{\text{out}}^\alpha)_{h,k} = \;\frac{1}{2\pi} \int_{0}^{2\pi} (1-e^{-i \theta})^{\alpha }e^{ i d_{h,k} \theta}\,d\theta  = \; \frac{\Gamma (d_{h,k}-\alpha )}{d_{h,k}! \Gamma (-\alpha )},
    \end{split}
    \end{equation*}
    where $d_{h,k} = h-k \,(\operatorname{mod}\,n)$. Therefore, the probability $p_{h\to k}^{(\alpha)}$ of a $h\to k$ transition
    on the cycle graph is given by
    \begin{equation*}
    p_{h \to k}^{(\alpha)} = \delta_{h,k} - \frac{(L_{\text{out}}^\alpha)_{h,k}}{(L_{\text{out}}^\alpha)_{h,h}} = \delta_{h,k}-\frac{\Gamma (d_{h,k}-\alpha )}{d_{h,k}! \Gamma (-\alpha )}.
    \end{equation*}
    For $h,k$ such that $d_{h,k} \gg 1$ we can expand this transition probability, for $\alpha \in (0,1)$, as
    \begin{equation*}
    p_{h \to k}^{(\alpha)} = -d_{h,k}^{-\alpha }\left(\frac{1}{d_{h,k} \Gamma (-\alpha )}+O\left(\frac{1}{d_{h,k}^2}\right)\right) \approx -\frac{d_{h,k}^{-\alpha -1}}{\Gamma (-\alpha )},
    \end{equation*}
    thus showing that, for a large enough cycle, the transition
    probability behaves as a distribution whose probabilities decay
    polynomially with respect to $\alpha$. In this case the underlying
    graph is strongly connected, therefore we do not have any absorbing
    states in the chain. In the local dynamics case, we can
    be sure that in a number of steps equal to the number of nodes of
    the network we completely explore it, while, on the other hand, the
    possibility of performing longer jumps increases the probability of
    returning to certain states while leaving others untouched. See,
    e.g., the example in Figure~\ref{fig:cycle_markov_simulation} for a
    directed cycle graph with $n=20$ nodes in which $10$ jumps are
    performed.
    \begin{figure}[htbp]
        \centering
        \includegraphics[width=0.8\columnwidth]{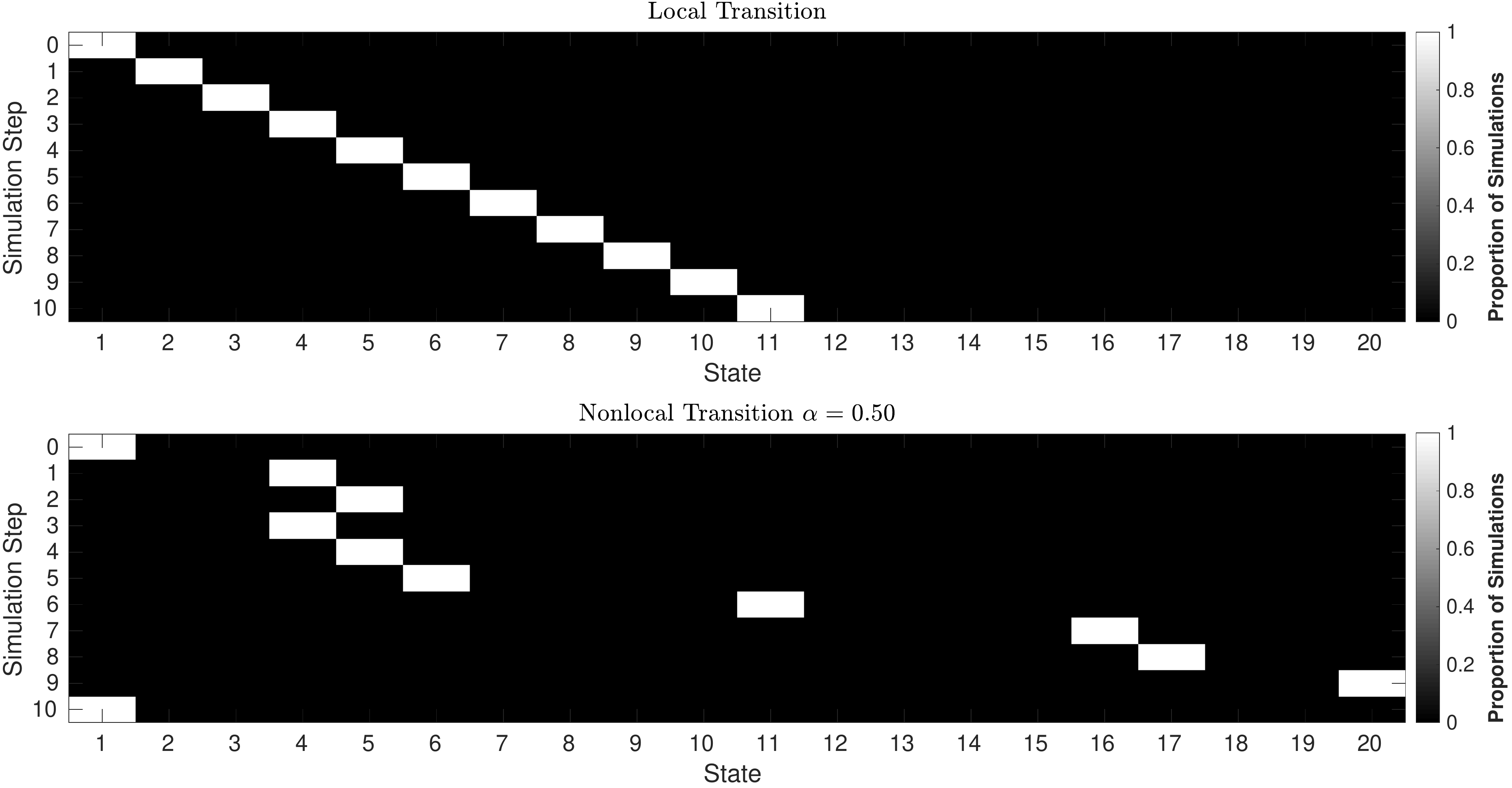}
        \caption{Simulation of $10$ steps of the local and nonlocal
            ($\alpha=0.25$) Markov chains on the directed cycle graph with
            $n=20$ nodes. In both cases we start from the same node of the cycle
            we identifiy with node~1.} \label{fig:cycle_markov_simulation}
    \end{figure}

    \section{Applications}\label{sec:applications}

    The simple examples from
    Section~\ref{sec:random_walks_on_directed_graphs} seem to suggest
    that, in the presence of a strong directionality in the network, the
    possibility of performing long distance jumps does not necessarily
    lead to better (i.e., faster) exploration of the network compared to
    the classical, local random walk (or, in the case of continuous
    time, diffusion) dynamics.  Real world directed networks, however,
    are very different from these simple ``unidirectional'' graphs,
    and allow for far richer exploration dynamics. To understand what we
    have gained in moving from the standard random walk on the network
    to its fractional extension, we consider the efficiency of the new
    dynamics in exploring the underlying directed graph compared to the
    classical dynamics. To measure it, we consider the average return
    probability at time $t$, $p_0^{(\alpha)}(t)$, for a continuous time
    random walker described by the master equation for the probability
    $p(i,t|i_0,0)$ of being at node $i$ at time $t$ having started from
    node $i_0$ at time $t = 0$, for the dynamics induced by the
    normalized version of $\bar{L}_{\text{out}}^{(\alpha)}$. The
    continuous time random walk master equation on a directed graph
    reads as
    \begin{equation*}
    \partial_t p(i,t|i_0,0) = - \sum_j p(j,t|i_0,0) (\bar{L}_{\text{out}}^{(\alpha)})_{ji} ,
    \end{equation*}
    with initial condition $p(i,0|i_0,0) = \delta_{i,i_0}$. The desired
    average return probability is obtained as
    $$
    p_0^{(\alpha)}(t) = \frac{1}{n} \sum_{i=1}^n p(i,t|i,0) =
    \frac{1}{n} \sum_{i=1}^{n}
    \exp(-\lambda_i(\bar{L}_{\text{out}}^{(\alpha)})t).
    $$
    Even if $\bar{L}_{\text{out}}^{(\alpha)}$ has complex eigenvalues,
    they always appear in conjugate pairs. Therefore,
    $p_0^{(\alpha)}(t)$ is always a real number; specifically, we
    consider the network \texttt{Roget} in which each vertex corresponds
    to one of the categories in the 1879 edition of Peter Mark ``Roget's
    Thesaurus of English Words and Phrases'', and in which each arc
    connects two categories whenever Roget give reference to one of them
    among the words and phrases of the other, or if the two categories
    are directly related by their positions in the book. The
    \texttt{wiki-Vote} network containing all the Wikipedia voting data
    from the 2,794 elections that had taken place till January 2008.
    Nodes in the network represent Wikipedia users, directed arcs from
    node $i$ to node $j$ exists whenever user $i$ voted for user $j$.
    The network \texttt{p2p-Gnutella08} obtained from the eight of the
    nine snapshots of the Gnutella peer-to-peer file sharing network
    collected in August 2002. In this case, the nodes are the hosts in
    the Gnutella network topology and the arcs are the connections
    between the hosts. For all the three cases, we restrict to the
    largest connected component of the network. The following examples
    demonstrate that in the case of real world complex digraphs, the
    use of nonlocal diffusion processes (or random walks) display
    similar advantages to those observed in the undirected case. We
    report in Figure~\ref{fig:return_probability} the quantity
    $p_0^{(\alpha)}(t)$ while highlighting the value of the first
    nonzero eigenvalue of the associated Laplacian. For each network
    we also report the \textit{relative} spectral gap (magnitude of the ratio of the largest
    to the smallest nonzero eigenvalue) and the network diameter.

    As we can
    observe, the higher the spectral gap, i.e., the larger the modulus
    of the second smallest eigenvalue of the Laplacian matrix
    $L_{\text{out}}$ is, the more efficient the fractional exploration
    of the associated network is. This is an expected behavior since the
    average return probability is directly linked to the whole spectral
    distribution of the associated normalized Laplacian matrix. In
    particular, it is well known that sparse networks with larger
    spectral gap can be explored more efficiently than those having a
    smaller spectral gap. The network's diameter, on the other hand, seems to be
    less relevant as an indicator of when the nonlocal dynamics is more efficient than
    the local one.

    \begin{figure}[htb]
        \centering
        \includegraphics[width=0.80\columnwidth]{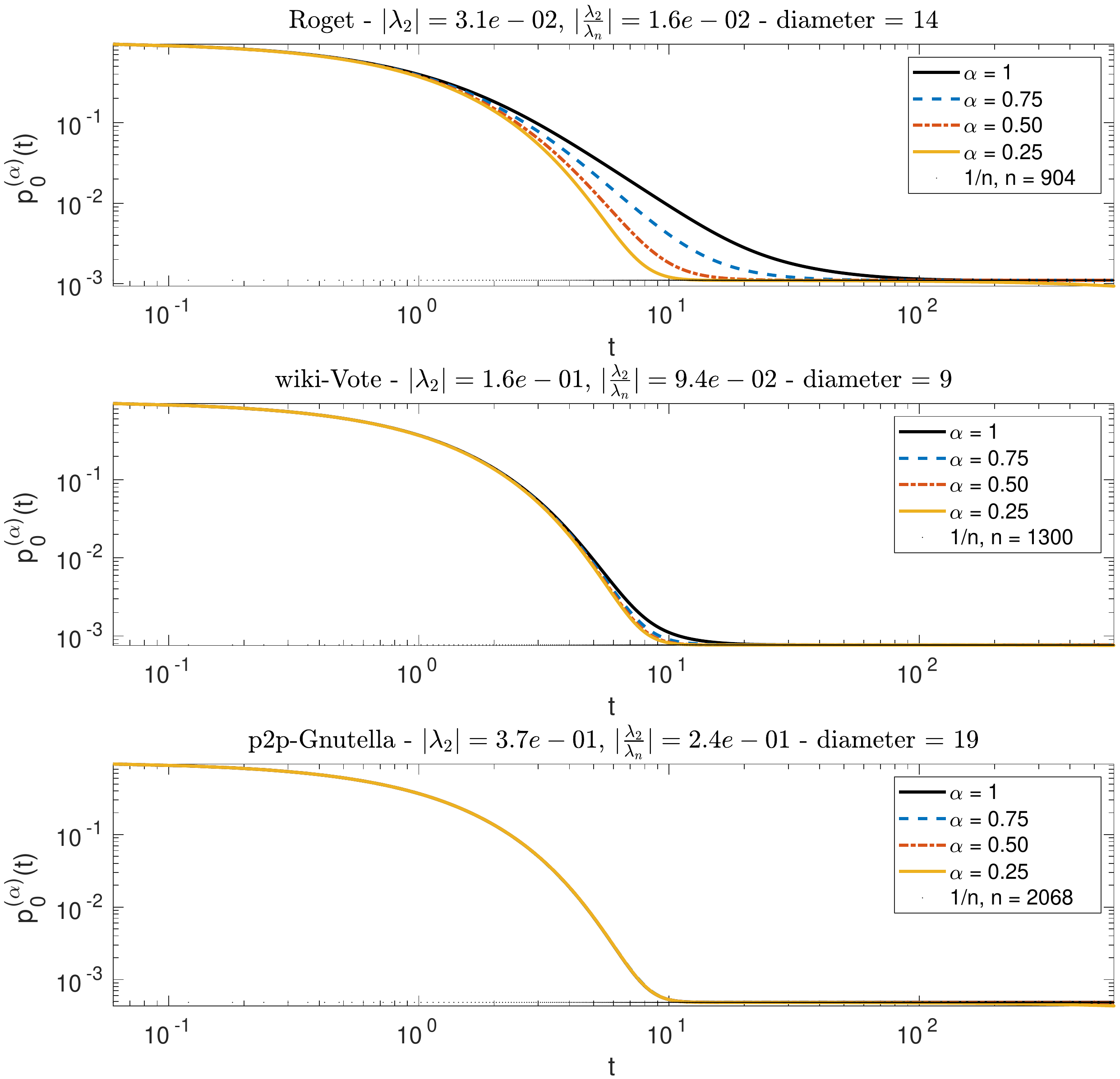}
        \caption{(Color online) Average fractional return probability
            $\mathbf{p}_{i}^{(\alpha)}(t)$ as a function of time for three
            different directed networks \texttt{Roget}, \texttt{wiki-Vote}, and
            \texttt{p2p-Gnutella08}. In each case we restrict the analysis to
            the largest connected component of the
            network. Note the logarithmic scale on the axes.}\label{fig:return_probability}
    \end{figure}

    We also observe that the behavior shown in
    Figure~\ref{fig:return_probability} is similar to that observed for
    the fractional dynamics on undirected networks
    in~\cite{PhysRevE90032809}.

    \subsection{Consensus models for control of vehicle motions}\label{sec:consensus_models}

    Consider an ensemble of $N$ vehicles moving in an $m$th dimensional
    space. We denote the initial positions by $\mathbf{x}_i \in
    \mathbb{R}^m$, $i=1,\ldots,N$, and the initial velocities by
    $\mathbf{v}_i \in \mathbb{R}^m$, $i=1,\ldots,N$. We are interested
    in steering the vehicles from their initial position to a prefixed
    end state, $\{(\mathbf{x}^*(t),\mathbf{v}^*(t)) \in
    \mathbb{R}^{Nm\times Nm} \,:\, \dot{\mathbf{x}}^*(t) =
    \mathbf{v}^*(t),$ for all $t \geq T_{\text{final}}\}$, while
    maintaining fixed the geometric configuration between them. Of the
    many available approaches for this task, we focus on the class of
    \textit{consensus algorithms} for systems modeled by a second-order
    dynamics in which the communication among the various vehicles is
    described in terms of the Laplacian of the graph of their
    connections. Specifically, we consider the following consensus model
    from~\cite{renconsensus}:
    \begin{equation}
    \left\lbrace\begin{array}{rl}
    \dot{\mathbf{x}}_i = & \mathbf{v}_i, \\
    \dot{\mathbf{v}}_i = & \ddot{\mathbf{x}}^*_i - \beta(\mathbf{x}_i - \mathbf{x}_i^*) - \gamma \beta (\mathbf{v}_i - \dot{\mathbf{x}}_i^*) \\
    & - \sum_{j=1}^{N} L_{i,j} [(\mathbf{x}_i - \mathbf{x}_i^*)-(\mathbf{x}_j - \mathbf{x}_j^*)] \\
    & - \gamma\sum_{j=1}^{N} L_{i,j} [(\mathbf{v}_i - \dot{\mathbf{x}}_i^*) - (\mathbf{x}_j - \dot{\mathbf{x}}_j^*)],
    \end{array}\right.
    \label{eq:consensus_system}
    \end{equation}
    which can be expressed in matrix form as
    \begin{equation*}
    \displaystyle \ddt{{\begin{bmatrix}
            \tilde{\mathbf{x}}\\
            \tilde{\mathbf{v}}
            \end{bmatrix}}} = \left(\begin{bmatrix}
    O_{N \times N} & I_N \\
    -(\beta I_N + L) & -\gamma (\beta I_N + L)
    \end{bmatrix} \otimes I_m\right) {\begin{bmatrix}
        \tilde{\mathbf{x}}\\
        \tilde{\mathbf{v}}
        \end{bmatrix}}
    \end{equation*}
    where $\tilde{\mathbf{x}} = \mathbf{x}^* - \mathbf{x}$, and $\tilde{\mathbf{v}} = \mathbf{v}^* - \mathbf{v}$.
    From Theorem 3.3~\cite{renconsensus} we can extract the
    following limit result for~\eqref{eq:consensus_system}.
    \begin{theorem}
        Let $L$ be the Laplacian of the graph of the connections in~\eqref{eq:consensus_system}. Let $\mu_i$ be the $i$-th eigenvalue of
        $-L$. Then, $\mathbf{x}\rightarrow\mathbf{x}^*$,
        $\mathbf{v}\rightarrow\mathbf{v}^*$ if
        \begin{equation}\label{eq:ren_bound}
        \gamma >\max_i\sqrt{2}\left(|\nu_i|\cos(\frac{\pi}{2})-\tan^{-1}\left(-\frac{\operatorname{Re}(\nu_i)}{\operatorname{Im}(\nu_i)}\right)
        \right)^{-\frac{1}{2}},
        \end{equation}
        $\nu_i=-\beta+\mu_i$. \label{teo:Ren}
    \end{theorem}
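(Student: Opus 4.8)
The plan is to recast the tracking claim as a Hurwitz-stability condition on the matrix that drives the error dynamics, to decouple its characteristic polynomial into scalar quadratics indexed by the spectrum of $L$, and then to apply a Routh--Hurwitz test for quadratics with complex coefficients to each factor; only the sufficiency direction of the last step is needed, since the bound~\eqref{eq:ren_bound} is assumed. Writing $z$ for the stacked error vector $[\tilde{\mathbf{x}};\,\tilde{\mathbf{v}}]$, the system~\eqref{eq:consensus_system} in error coordinates is the homogeneous linear problem $\dot z = (\mathcal{M}\otimes I_m)\,z$ with
\[
\mathcal{M} = \begin{bmatrix} O_{N\times N} & I_N \\ -(\beta I_N + L) & -\gamma(\beta I_N + L) \end{bmatrix}.
\]
Since $\sigma(\mathcal{M}\otimes I_m)=\sigma(\mathcal{M})$, with every eigenvalue repeated $m$ times, one has $\mathbf{x}\to\mathbf{x}^*$ and $\mathbf{v}\to\mathbf{v}^*$ for each initial datum exactly when $\mathcal{M}$ is Hurwitz, i.e.\ $\sigma(\mathcal{M})\subset\{\,s\in\C:\operatorname{Re}(s)<0\,\}$; so the statement reduces to a spectral localization, and it suffices to show that~\eqref{eq:ren_bound} forces $\sigma(\mathcal{M})$ into the open left half-plane.

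To localize $\sigma(\mathcal{M})$, I would first decouple its characteristic polynomial. Put $P = \beta I_N + L$; since $P$ commutes with $sI_N+\gamma P$, a standard block-determinant identity gives $\det(sI_{2N}-\mathcal{M}) = \det(s^{2} I_N + \gamma s\,P + P)$. Bringing $L$, hence $P$, to Schur (upper triangular) form, the right-hand side factors as $\prod_{i=1}^{N}\big(s^{2} + \gamma(\beta-\mu_i)\,s + (\beta-\mu_i)\big)$, where $-\mu_1,\dots,-\mu_N$ are the eigenvalues of $L$; recalling $\nu_i=-\beta+\mu_i$, the $i$-th factor is $q_i(s) = s^{2} - \gamma\nu_i\,s - \nu_i$. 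Therefore $\mathcal{M}$ is Hurwitz if and only if each $q_i$ has both roots in $\{\operatorname{Re}(s)<0\}$. The semisimple zero eigenvalue of $L$ is harmless: the shift by $\beta>0$ makes the associated $\nu_i=-\beta\ne0$, so $q_i$ is a genuine quadratic with nonzero constant term.

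It remains to convert ``$q_i$ Hurwitz'' into an inequality for $\gamma$. For a quadratic $s^{2}+a s+b$ with complex coefficients, both roots lie in the open left half-plane precisely when $\operatorname{Re}(a)>0$ and $\operatorname{Re}(a)\big(\operatorname{Re}(a)\operatorname{Re}(b)+\operatorname{Im}(a)\operatorname{Im}(b)\big)>\operatorname{Im}(b)^{2}$ (the quadratic case of the Routh--Hurwitz criterion for polynomials with complex coefficients, verified by an elementary argument on the roots). With $a=-\gamma\nu_i$ and $b=-\nu_i$ the first condition reads $\operatorname{Re}(\nu_i)<0$, which holds automatically whenever $\operatorname{Re}(\mu_i)\le 0$ --- in particular for the graph Laplacians of this paper, since $L_{\text{out}}$ and its fractional powers are singular $M$-matrices (Proposition~\ref{pro:properties_of_outdegree_laplacian}, Theorem~\ref{thm:alphapowersingularmmatrix}) and $\beta>0$. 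The second condition then reduces to
\[
\gamma^{2}\,\big(-\operatorname{Re}(\nu_i)\big)\,\abs{\nu_i}^{2} \;>\; \operatorname{Im}(\nu_i)^{2},
\]
a lower bound on $\gamma$ depending only on the modulus and argument of $\nu_i$. Rewriting it with $\nu_i=\abs{\nu_i}e^{i\theta_i}$ and carrying out the elementary trigonometric rearrangement yields the $i$-th term of~\eqref{eq:ren_bound}, and imposing it for all $i$ produces the $\max_i$. I expect this last bookkeeping --- matching the closed form exactly, picking the correct branch of $\arctan$ (equivalently, tracking the signs of $\operatorname{Re}(\nu_i)$ and $\operatorname{Im}(\nu_i)$), and noting that the constraint is vacuous when $\operatorname{Im}(\nu_i)=0$, where $q_i$ is Hurwitz for every $\gamma>0$ --- to be the only genuine difficulty; it is precisely the computation performed in~\cite[Theorem~3.3]{renconsensus}, to which we refer for the remaining details.
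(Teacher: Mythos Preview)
Your proof sketch is sound, but note that the paper does not actually supply a proof of this theorem: it is quoted verbatim from \cite[Theorem~3.3]{renconsensus} and stated without argument, serving only as the convergence criterion that is later reused with $L$ replaced by $L^\alpha$. Your treatment is therefore strictly more detailed than the paper's, and since you ultimately defer the final trigonometric bookkeeping to the same reference, the two are entirely consistent.
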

    The model (\ref{eq:consensus_system}) is extended here by
    considering a fractional power of the graph Laplacian, i.e.,
    $L^{\alpha}$, $\alpha\in (0,1)$ instead of $L$. The convergence
    analysis can be performed with the same tools used
    in~\cite{renconsensus} and gives a result analogous to Theorem
    \ref{teo:Ren} but with $\mu_i$ the eigenvalues of~$-L^{\alpha}$. The
    notable differences are that now the dynamics is faster as $\alpha$
    approaches $0$, together with the fact that increasing the amount of
    communication helps the vehicles in maintaining their formation; see
    the numerical experiment in Figure \ref{fig:fractional-consensus} in
    which it can be observed that the position at each time step of the
    vehicles resembles the initial one faster as $\alpha$ is smaller.
    \begin{figure}[htbp]
        \centering
        \subfloat[Trace of the position of the vehicles]{
            \includegraphics[width=0.7\columnwidth]{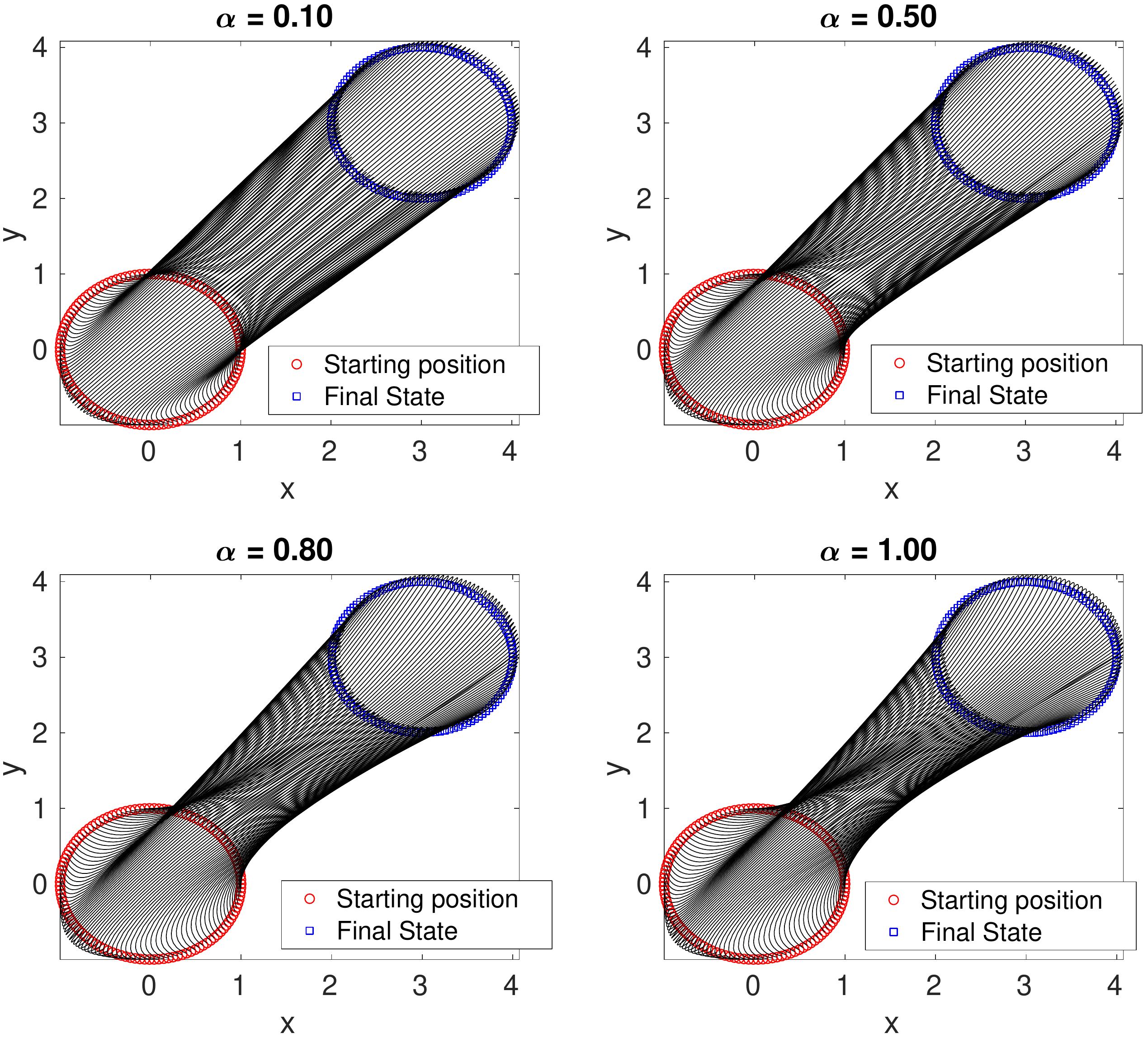}}

        \subfloat[Absolute error on the final position with respect to time]{
            \definecolor{mycolor1}{rgb}{0.00000,0.44700,0.74100}%
            \definecolor{mycolor2}{rgb}{0.85000,0.32500,0.09800}%
            \definecolor{mycolor3}{rgb}{0.92900,0.69400,0.12500}%
            \definecolor{mycolor4}{rgb}{0.49400,0.18400,0.55600}%
            \begin{tikzpicture}[scale=\columnwidth/38cm,samples=200]

            \begin{axis}[%
            width=8.218in,
            height=4.389in,
            at={(1.379in,0.592in)},
            scale only axis,
            xmin=0,
            xmax=5,
            xlabel style={font=\color{white!15!black}},
            xlabel={\Large $T (s)$},
            ymode=log,
            ymin=0.573030645285253,
            ymax=100,
            yminorticks=true,
            ylabel style={font=\color{white!15!black}},
            ylabel={\Large $\sqrt{ \|\mathbf{x}(t) - \mathbf{x}^*(t)\|_2^2 + \|\mathbf{y}(t) - \mathbf{y}^*(t)\|_2^2}$},
            axis background/.style={fill=white},
            axis x line*=bottom,
            axis y line*=left,
            legend style={legend cell align=left, align=left, draw=white!15!black}
            ]
            \addplot [color=mycolor1, line width=2.0pt]
            table[row sep=crcr]{%
                0   46.475800154489\\
                0.000654613870448322    46.4757500773167\\
                0.00130922774089664 46.4756914240857\\
                0.00196384161134497 46.4756239913782\\
                0.00638163148940162 46.4749512116563\\
                0.0107994213674583  46.473883627361\\
                0.0152172112455149  46.4724169023075\\
                0.0196350011235716  46.4705498073872\\
                0.0379173531092477  46.4585942824306\\
                0.0561997050949238  46.4398268834651\\
                0.0744820570806 46.4142897156923\\
                0.0927644090662761  46.3820400216082\\
                0.145813859956127   46.2510595987111\\
                0.198863310845978   46.0660370679422\\
                0.251912761735829   45.8292754146361\\
                0.30496221262568    45.5432937780285\\
                0.358011663515531   45.2107372590872\\
                0.43866171595006    44.6220554206336\\
                0.51931176838459    43.9416074880593\\
                0.59996182081912    43.1788312211283\\
                0.68061187325365    42.342857110758\\
                0.76126192568818    41.4424083645298\\
                0.84191197812271    40.4857444458072\\
                1.01771247170016    38.2425135852658\\
                1.19351296527761    35.8409844195533\\
                1.36931345885505    33.3431256609768\\
                1.5451139524325 30.8013680162861\\
                1.72091444600995    28.2595730619498\\
                1.8967149395874 25.7539096769399\\
                2.15956900512804    22.1373590514443\\
                2.42242307066868    18.7395888843954\\
                2.68527713620932    15.6091832769833\\
                2.94813120174996    12.7749522802642\\
                3.2109852672906 10.2499363566272\\
                3.47383933283124    8.03475917358868\\
                3.88880206619969    5.14629120175103\\
                4.30376479956813    2.94167507654172\\
                4.64663983312214    1.57571783375613\\
                5   0.573030645285253\\
            };
            \addlegendentry{\Large $\text{Error on the position }\alpha\text{ = 0.1}$}

            \addplot [color=mycolor2, line width=2.0pt]
            table[row sep=crcr]{%
                0   46.475800154489\\
                0.000627756506127168    46.475752440117\\
                0.00125551301225434 46.4756968414439\\
                0.0018832695183815  46.4756331735178\\
                0.00623892544637147 46.4749805420467\\
                0.0105945813743614  46.473944962797\\
                0.0149502373023514  46.4725230009891\\
                0.0193058932303414  46.4707141380682\\
                0.0403911525539567  46.4565490346738\\
                0.061476411877572   46.4334606893598\\
                0.0825616712011873  46.4015642284886\\
                0.103646930524803   46.3609888135621\\
                0.166889704739794   46.1885207820568\\
                0.230132478954786   45.9429434816168\\
                0.293375253169777   45.6282959137973\\
                0.356618027384768   45.248714658653\\
                0.41986080159976    44.8083755132761\\
                0.537206606118763   43.8443723027022\\
                0.654552410637766   42.7118222889199\\
                0.771898215156769   41.4362018650838\\
                0.889244019675771   40.0415275038227\\
                1.00658982419477    38.5501995286418\\
                1.22486071618948    35.5867142564407\\
                1.44313160818419    32.4763923169986\\
                1.66140250017889    29.3176338907817\\
                1.8796733921736 26.1888670563198\\
                2.0979442841683 23.1512680616304\\
                2.36346612248154    19.6436973345802\\
                2.62898796079478    16.3983094521859\\
                2.89450979910801    13.4517070396218\\
                3.16003163742125    10.8214202978633\\
                3.42555347573449    8.51070385657057\\
                3.7391672946488 6.18209201067874\\
                4.05278111356311    4.26468066981834\\
                4.36639493247742    2.72595975429489\\
                4.68000875139172    1.53382078058465\\
                5   0.67809979564361\\
            };
            \addlegendentry{\Large $\text{Error on the position }\alpha\text{ = 0.5}$}

            \addplot [color=mycolor3, line width=2.0pt]
            table[row sep=crcr]{%
                0   46.475800154489\\
                0.000714649093774652    46.4757447031449\\
                0.0014292981875493  46.4756790364116\\
                0.00214394728132395 46.4756029163432\\
                0.00721373716462588 46.4747780228043\\
                0.0122835270479278  46.4734350807512\\
                0.0173533169312297  46.4715701846416\\
                0.0224231068145316  46.4691833157998\\
                0.056572167432358   46.4397853218161\\
                0.0907212280501844  46.387434172073\\
                0.124870288668011   46.3127217471535\\
                0.159019349285837   46.2162918055291\\
                0.243666568958654   45.8876170489755\\
                0.328313788631471   45.4393952738432\\
                0.412961008304288   44.8817743876398\\
                0.497608227977105   44.2246769807777\\
                0.582255447649921   43.4777299382888\\
                0.771360293033488   41.5301036211417\\
                0.960465138417054   39.2769177578866\\
                1.14956998380062    36.8062564684096\\
                1.33867482918419    34.1948201977542\\
                1.52777967456775    31.5084832530877\\
                1.80890076669919    27.4903892359712\\
                2.09002185883063    23.5746484254117\\
                2.37114295096207    19.8693499809731\\
                2.6522640430935 16.4465995704223\\
                2.93338513522494    13.3501666805136\\
                3.23823735267326    10.3853335403021\\
                3.54308957012158    7.8352035151016\\
                3.8479417875699 5.68804897235597\\
                4.15279400501822    3.92034324924219\\
                4.45764622246654    2.50330110322359\\
                4.78473114346607    1.34686820380582\\
                5   0.803253023073446\\
            };
            \addlegendentry{\Large $\text{Error on the position }\alpha\text{ = 0.8}$}

            \addplot [color=mycolor4, line width=2.0pt]
            table[row sep=crcr]{%
                0   46.475800154489\\
                0.00076937829164787 46.4757396887256\\
                0.00153875658329574 46.4756673846207\\
                0.00230813487494361 46.4755829673977\\
                0.00778067239445547 46.4746508175328\\
                0.0132532099139673  46.4731155575564\\
                0.0187257474334792  46.4709730068776\\
                0.0241982849529911  46.4682235147394\\
                0.0538386965696841  46.4429832574697\\
                0.0834791081863772  46.4004565374942\\
                0.11311951980307    46.3410533545461\\
                0.142759931419763   46.2652124255882\\
                0.203180765565973   46.0614397930451\\
                0.263601599712183   45.7948634165463\\
                0.324022433858394   45.469269266026\\
                0.384443268004604   45.0883655396841\\
                0.444864102150814   44.6557728037324\\
                0.583337809486558   43.4879544980976\\
                0.721811516822302   42.1079227260637\\
                0.860285224158046   40.5541863721637\\
                0.99875893149379    38.8620742924824\\
                1.13723263882953    37.0637602592551\\
                1.38830862538639    33.6215626881678\\
                1.63938461194325    30.0726921847438\\
                1.89046059850011    26.5378285985614\\
                2.14153658505697    23.1088611459173\\
                2.39261257161382    19.8535857227633\\
                2.70380040925523    16.1278606959173\\
                3.01498824689665    12.7983349319585\\
                3.32617608453806    9.89108700520321\\
                3.63736392217947    7.40749251736524\\
                3.94855175982088    5.33241481333908\\
                4.27062646960939    3.58857088889684\\
                4.52641860850393    2.47760613099706\\
                4.78221074739846    1.59938033473864\\
                5   1.04688260563525\\
            };
            \addlegendentry{\Large $\text{Error on the position }\alpha\text{ = 1.0}$}

            \end{axis}

            \begin{axis}[%
            width=10.604in,
            height=5.385in,
            at={(0in,0in)},
            scale only axis,
            xmin=0,
            xmax=1,
            ymin=0,
            ymax=1,
            axis line style={draw=none},
            ticks=none,
            axis x line*=bottom,
            axis y line*=left,
            legend style={legend cell align=left, align=left, draw=white!15!black}
            ]
            \end{axis}
            \end{tikzpicture}%
        }
        \caption{(Color online) We consider here the test cases in which
            $n=120$ vehicles are uniformly distributed on the unit circle
            $\mathbf{x} = (\cos(t_i),\sin(t_i))_i$, $\{t_i = {2\pi
                i}{n}\}_{i=1}^{n}$, with starting velocity given $\mathbf{v} =
            \dot{\mathbf{x}}$, i.e., they are following a uniform circular
            motion. The desired ending state is represented again by a circle of
            unit radius and uniformly distributed vehicles but with center in
            $(3,3)$ and zero terminal velocity. The parameter $\beta$ is $0.5$
            while $\gamma$ is computed as the lower bound
            in~\eqref{eq:ren_bound} plus one. The communication graph between
            the vehicles is the directed cycle.}\label{fig:fractional-consensus}
    \end{figure}

    \section{Conclusions}

    In this paper we have investigated nonlocal diffusion dynamics (both discrete and
    continuous in time) on undirected as well as on directed networks using fractional powers of a suitable %
    version of the graph Laplacian and its normalized counterpart. In order to treat the directed case, we have discussed the definition of the $\alpha$th power
    of a nonsymmetric graph Laplacian. We proved also that the proposed dynamic exhibits a \emph{superdiffusive} behavior
    for both the undirected and directed path graph thus strengthening the analogy with the continuous fractional Laplacian.
    We have obtained analytical solutions for two simple
    directed graphs (a periodic one and an absorbing one) and highlighted some differences
    and similarities with fractional diffusion on related undirected graphs.    Experiments on a
    few real world examples indicate that, similar to the undirected case, nonlocal (fractional) diffusion
    and related random walks on directed graphs result in more efficient navigation of complex
    directed networks than using the standard (local) counterparts.  Finally, we have extended
    an existing consensus models for vehicle motions on directed networks to one driven by a
    fractional nonsymmetric Laplacian and observed that the system displays faster convergence
    to consensus than the standard (nonfractional) model.

    In conclusion, the dynamics of nonlocal fractional diffusion appears to be a useful tool in the study
    of several problems involving directed as well as undirected graph models.

    \section*{Funding}
    This work was supported in part by the Tor Vergata University \lq\lq
    Beyond Borders\rq\rq\ program through the project ASTRID, CUP
    E84I19002250005; by the INdAM--GNCS projects \lq\lq Tecniche
    innovative e parallele per sistemi lineari e non lineari di grandi
    dimensioni, funzioni ed equazioni matriciali ed
    applicazioni\rq\rq, \lq\lq Nonlocal models for the analysis of
    complex networks\rq\rq and \lq\lq Metodi low-rank per problemi di algebra lineare con
struttura data-sparse\rq\rq.

    \section*{Acknowledgment}
  {We would like to thank two anonymous referees for their helpful comments on an earlier draft
  of the paper.}

\bibliographystyle{comnet}
\bibliography{nonsymfractlap.bib}

\begin{thebibliography}{00}

\bibitem{Abadias}
Abadias, L., De~Le\'{o}n-Contreras, M. {\&} Torrea, J.~L. (2017)  {Non-local
  fractional derivatives. {D}iscrete and continuous}. {\em J. Math. Anal.
  Appl.}, \textbf{449}(1), 734--755.

\bibitem{MR2915277}
Bauer, F. (2012)  Normalized graph {L}aplacians for directed graphs. {\em
  Linear Algebra Appl.}, \textbf{436}(11), 4193--4222.

\bibitem{BB14}
Benzi, M. {\&} Boito, P. (2014)  Decay properties for functions of matrices
  over {$C^*$}-algebras. {\em Linear Algebra Appl.}, \textbf{456}, 174--198.

\bibitem{MR2455657}
Benzi, M. {\&} Razouk, N. (2007/08)  Decay bounds and {$O(n)$} algorithms for
  approximating functions of sparse matrices. {\em Electron. Trans. Numer.
  Anal.}, \textbf{28}, 16--39.

\bibitem{MR3391978}
Benzi, M. {\&} Simoncini, V. (2015)  Decay bounds for functions of {H}ermitian
  matrices with banded or {K}ronecker structure. {\em SIAM J. Matrix Anal.
  Appl.}, \textbf{36}(3), 1263--1282.

\bibitem{MR1298430}
Berman, A. {\&} Plemmons, R.~J. (1994) {\em Nonnegative Matrices in the
  Mathematical Sciences}, volume~9 of {\em Classics in Applied Mathematics}.
Society for Industrial and Applied Mathematics (SIAM), Philadelphia, PA.
Revised reprint of the 1979 original.

\bibitem{Chapman}
Chapman, S. (1911)  {On {N}on-{I}ntegral {O}rders of {S}ummability of {S}eries
  and {I}ntegrals}. {\em Proc. London Math. Soc. (2)}, \textbf{9}, 369--409.

\bibitem{MR2135772}
Chung, F. (2005)  Laplacians and the {C}heeger inequality for directed graphs.
  {\em Ann. Comb.}, \textbf{9}(1), 1--19.

\bibitem{Crouzeix04}
Crouzeix, M. (2004)  Bounds for analytical functions of matrices. {\em Integral
  Equations Operator Theory}, \textbf{48}, 461--477.

\bibitem{Crouzeix07}
Crouzeix, M. (2007)  Numerical range and functional calculus in {H}ilbert
  space. {\em J. Funct. Anal.}, \textbf{244}(2), 668--690.

\bibitem{estradamultihopper}
Estrada, E., Delvenne, J.-C., Hatano, N., Mateos, J.~L., Metzler, R., Riascos,
  A.~P. {\&} Schaub, M.~T. (2018a)  Random multi-hopper model: super-fast
  random walks on graphs. {\em J. Complex Netw.}, \textbf{6}(3), 382--403.

\bibitem{Estrada2017307}
Estrada, E., Hameed, E., Hatano, N. {\&} Langer, M. (2017)  Path Laplacian
  operators and superdiffusive processes on graphs. I. One-dimensional case.
  {\em Linear Algebra and Its Applications}, \textbf{523}, 307--334.

\bibitem{Estrada2018373}
Estrada, E., Hameed, E., Langer, M. {\&} Puchalska, A. (2018b)  Path Laplacian
  operators and superdiffusive processes on graphs. II. Two-dimensional
  lattice. {\em Linear Algebra and Its Applications}, \textbf{555}, 373--397.

\bibitem{MR700883}
Fiedler, M. {\&} Schneider, H. (1983)  Analytic functions of {$M$}-matrices and
  generalizations. {\em Linear and Multilinear Algebra}, \textbf{13}(3),
  185--201.

\bibitem{MR2599831}
Guo, C.-H. (2010)  On {N}ewton's method and {H}alley's method for the principal
  {$p$}th root of a matrix. {\em Linear Algebra Appl.}, \textbf{432}(8),
  1905--1922.

\bibitem{MR2396439}
Higham, N.~J. (2008) {\em Functions of Matrices. Theory and Computation}.
Society for Industrial and Applied Mathematics (SIAM), Philadelphia, PA.

\bibitem{MR2978290}
Horn, R.~A. {\&} Johnson, C.~R. (2013) {\em Matrix Analysis}.
Cambridge University Press, Cambridge, second edition.

\bibitem{ilic2005numerical}
Ilic, M., Liu, F., Turner, I. {\&} Anh, V. (2005)  Numerical approximation of a
  fractional-in-space diffusion equation. {I}. {\em Fract. Calc. Appl. Anal.},
  \textbf{8}(3), 323--341.

\bibitem{ilic2006numerical}
Ilic, M., Liu, F., Turner, I. {\&} Anh, V. (2006)  Numerical approximation of a
  fractional-in-space diffusion equation. {II}. {W}ith nonhomogeneous boundary
  conditions. {\em Fract. Calc. Appl. Anal.}, \textbf{9}(4), 333--349.

\bibitem{MR1812534}
Iserles, A. (2000)  How large is the exponential of a banded matrix?. {\em New
  Zealand J. Math.}, \textbf{29}(2), 177--192.
Dedicated to John Butcher.

\bibitem{MR0115196}
Kemeny, J.~G. {\&} Snell, J.~L. (1960) {\em Finite {M}arkov Chains}.
The University Series in Undergraduate Mathematics. D. Van Nostrand Co., Inc.,
  Princeton, N.J.-Toronto-London-New York.

\bibitem{Kuttner}
Kuttner, B. (1957)  {On differences of fractional order}. {\em Proc. London
  Math. Soc. (3)}, \textbf{7}, 453--466.

\bibitem{LizamaRoncal}
Lizama, C. {\&} Roncal, L. (2018)  H\"{o}lder-{L}ebesgue regularity and almost
  periodicity for semidiscrete equations with a fractional {L}aplacian. {\em
  Discrete Contin. Dyn. Syst.}, \textbf{38}(3), 1365--1403.

\bibitem{approx}
Meinardus, G. (1967) {\em Approximation of Functions: Theory and Numerical
  Methods}.
Springer, Berlin.

\bibitem{randomwalkfractional}
Metzler, R. {\&} Klafter, J. (2000)  The random walk's guide to anomalous
  diffusion: a fractional dynamics approach. {\em Phys. Rep.}, \textbf{339}(1),
  77.

\bibitem{page1999pagerank}
Page, L., Brin, S., Motwani, R. {\&} Winograd, T. (1999)  The PageRank citation
  ranking: Bringing order to the web.. Technical report, Stanford InfoLab.

\bibitem{renconsensus}
Ren, W. (2007)  Consensus strategies for cooperative control of vehicle
  formations. {\em IET Control Theory \& Applications}, \textbf{1},
  505--512(7).

\bibitem{PhysRevE90032809}
Riascos, A.~P. {\&} Mateos, J.~L. (2014)  Fractional dynamics on networks:
  {E}mergence of anomalous diffusion and {L}\'evy flights. {\em Phys. Rev. E},
  \textbf{90}, 032809.

\end{thebibliography}

\end{document}